\definecolor {processblue}{cmyk}{0.96,0,0,0}
\definecolor {red}{cmyk}{0, 1, 0.838, 0.373}
\newcommand{\full}[1]{#1}
\newcommand{\wine}[1]{}
\newenvironment{proofsk}{\noindent{\bf Proof sketch:~~}}{\(\qed\)}
\newtheorem{theorem}{Theorem}[section]
\newtheorem{corollary}[theorem]{Corollary}
\newtheorem{lemma}[theorem]{Lemma}
\newtheorem{claim}[theorem]{Claim}
\newtheorem{definition}[theorem]{Definition}
\newtheorem{proposition}[theorem]{Proposition}
\newtheorem{observation}[theorem]{Observation}
\def\squarebox#1{\hbox to #1{\hfill\vbox to #1{\vfill}}}
\newcommand{\xhdr}[1]{\paragraph{{\textbf{#1}}.}}
\newcommand{\eps}{\varepsilon} 
\begin{document}
	\title {Mechanisms for Trading Durable Goods\footnote{Work partially supported by ISF grant 2167/19.}} 
	\author{Sigal Oren\thanks{Ben-Gurion University of the Negev. Email: sigal3@gmail.com.} 
		\and Oren Roth\thanks{Email: oren.roth@gmail.com. Most of the work was done while O. Roth was a student at Ben-Gurion University.} }
	
	
	\maketitle
	\begin{abstract}
		We consider trading indivisible and easily transferable \emph{durable goods}, which are goods that an agent can receive, use, and trade again for a different good. This is often the case with books that can be read and later exchanged for unread ones. Other examples of such easily transferable durable goods include puzzles, video games and baby clothes. 

We introduce a model for the exchange of easily transferable durable goods. In our model, each agent owns a set of items and demands a different set of items. An agent is interested in receiving as many items as possible from his demand set. We consider mechanisms that exchange items in cycles in which each participating agent receives an item that he demands and gives an item that he owns.  We aim to develop mechanisms that have the following properties: they are \emph{efficient}, in the sense that they maximize the total number of items that agents receive from their demand set, they are \emph{strategyproof} (i.e., it is in the agents' best interest to report their preferences truthfully) and they run in \emph{polynomial time}. 

One challenge in developing mechanisms for our setting is that the supply and demand sets of the agents are updated after a trade cycle is executed. This makes constructing strategyproof mechanisms in our model significantly different from previous works, both technically and conceptually and requires developing new tools and techniques. We prove that simultaneously satisfying all desired properties is impossible and thus focus on studying the tradeoffs between these properties. To this end, we provide both approximation algorithms and impossibility results.
	\end{abstract}

	\section{Introduction}
	The sharing economy \cite{botsman2010s} puts on steroids the ancient idea of sharing physical assets. Instead of only sharing among friends, technology advancements facilitate the sharing of physical goods among strangers \cite{frenken2017putting}: rather than booking a hotel room, we often use Airbnb to live in someone's apartment, and instead of throwing away items we no longer use, we can exchange them for others that we do need. At the heart of the sharing economy is a desire to increase social efficiency by using underutilized resources. 

A perfect demonstration of the credo of the sharing economy is the efficient reallocation of \emph{durable} goods, like books, toys and sports gear. For example, suppose you have just finished reading the first Harry Potter book ``Harry Potter and the Philosopher's Stone''. You will probably be happy to exchange it in return for the second book in the series ``Harry Potter and the Chamber of Secrets''. In the offline world, for a swap to occur, we must have two individuals, each of whom is interested in the other's item. However, it is implausible that a person that read the second book in the Harry Potter series will be interested in the first book. Thus, such an exchange is more likely to occur as part of a larger cycle of exchanges which might be difficult to coordinate without a central mechanism. 

Online platforms (such as Swappy Books, Rehash Clothes, TradeMade and others) provide a central mechanism where each user can report the items he can give and the items he would like to receive. The platforms essentially provide infrastructure that reduces the search friction and allows to orchestrate complex exchanges. In this paper we focus on designing mechanisms for allocating easily transferable durable goods with the objective of maximizing the number of exchanges. A main challenge in designing such mechanisms is that the same item can be traded several times. This crucial difference from classic works on barter and trading (e.g., \cite{shapley1974cores,abraham2007clearing}) requires developing new algorithmic tools to ensure that these platforms will live up to their potential.  

\xhdr{A Model of Durable Goods} 
We consider a stylized model for exchanging easily transferable durable goods. Our model is based on the classic work of Shapley and Scarf \cite{shapley1974cores} for the house allocation problem. We consider a set $N$ ($|N|=n$) of agents and a set $M$ ($|M|=m$) of items. Each agent has a subset $D_i\subseteq M$ of items that he demands and a subset $S_i\subseteq M$ of items he owns. We make the simplifying assumption that each agent is willing to give any item from $S_i$ in return for any item from $D_i$. This means that the agent is indifferent between all the items that he demands and also between all the items that he owns. This is a reasonable assumption for books, for example, where books that the agent already read serve as a commodity that can be exchanged to get new desired books. Moreover, we assume that the agent is unwilling to receive an item that is not in his current demand set.\footnote{Formally, this can be modeled as part of the agent's utility as a large penalty that the agent exhibits for receiving an item not in the demand set or setting the utility to $0$ if the agent receives such an item.} This can be, for example, due to the physical or emotional burden of handling an unwanted item.  We model the demands and endowments of the agents as a directed bipartite graph $G=(N,M,E)$ in which there is a directed edge $(i,j)$ if agent $i$ demands item $j$ and a directed edge $(j,i)$ if agent $i$ owns item $j$. We refer to this graph as the \emph{trading graph}. 

We focus on mechanisms that execute exchanges according to cycles in the trading graph. In each exchange, every participating agent $i$ receives an item in $D_i$ and gives an item in $S_i$. The novelty of our model is that after agent $i$ received item $j\in D_i$ and used it, he can trade it later for another item from $D_i$. Our model is dynamic in the sense that after each step, the sets $D_i$ and $S_i$ are updated for each agent $i$. We refer to a sequence of cycles as an execution.

It is useful to go over an example to better understand the model. Consider the instance illustrated in Figure \ref{fig:model}. The set of agents is $N=\{a,b,c\}$ and the set of items is $M= \{x,y,z\}$. The set of items that agent $a$ demands is $D_a = \{x\}$ and the set of items that agent $a$ owns is $S_a=\{z\}$. For agent $b$: $D_b=\{y\}$ and $S_b=\{x\}$ and for agent $c$: $D_c=\{x,z\}$ and $S_c=\{y\}$. The trading graph of this instance is illustrated in Figure \ref{fig-ex-instance}. In the optimal execution we first execute the cycle $C_1=(b,y,c,x,b)$. That is, agent $b$ receives item $y$ from agent $c$ and agent $c$ receives item $x$ from agent $b$. Then, after agent $c$ receives item $x$ the edge $(c,x)$ is flipped. The graph after the execution of $C_1$ is illustrated in Figure \ref{fig-ex-instance-c1}. Now, we can execute the cycle $C_2=(a,x,c,z,a)$. As we see in Figure \ref{fig-ex-instance-c2}, after executing cycle $C_2$ there are no more cycles that we can execute. The social welfare of this execution is $4$ since overall it performs $4$ exchanges, the utilities of agents $a$ and $b$ is one since they received one item and the utility of agent $c$ is $2$ since he received two items from $D_c$. This is the optimal execution (i.e., the execution that performs the maximal possible number of exchanges) for this instance. This is in contrast to the best execution that can allocate each item at most once (i.e., a ``static'' execution) which consists of only $3$ exchanges (the cycle $(a,x,b,y,c,z,a)$ ).

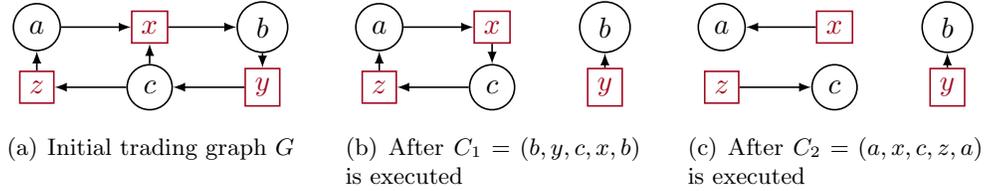
\begin{figure}[t]
	\begin{center}
		\subfigure[Initial trading graph $G$]{ \label{fig-ex-instance}
			\begin {tikzpicture}[-latex ,auto ,node distance =.8 cm and 1.5cm ,on grid ,
			semithick ,
			state/.style ={ circle ,
				draw , minimum width =.1 cm},
			item/.style = {red}]
			\node[state] (u){$a$};
			\node[draw,item] (v)[right=of u]{$x$};
			\node[state] (w) [right=of v] {$b$};
			\node[draw,item] (x) [below=of w] {$y$};
			\node[state] (y) [left =of x] {$c$};
			\node[draw,item] (z) [left =of y] {$z$};
			\path (u) edge [] node[below =0.15 cm] {} (v);
			\path (v) edge [] node[below =0.15 cm] {} (w);
			\path (w) edge [] node[below =0.15 cm] {} (x);
			\path (x) edge [] node[below =0.15 cm] {} (y);
			\path (y) edge [] node[below =0.15 cm] {} (v);
			\path (y) edge [] node[below =0.15 cm] {} (z);
			\path (z) edge [] node[below =0.15 cm] {} (u);

		\end{tikzpicture}
	}\hspace{5mm}
	\subfigure[After $C_1=(b,y,c,x,b)$ is executed]{ \label{fig-ex-instance-c1}
		\begin {tikzpicture}[-latex ,auto ,node distance =.8 cm and 1.5cm ,on grid ,
		semithick ,
		state/.style ={ circle ,
			draw , minimum width =.1 cm},
		item/.style = {red}]
		\node[state] (u){$a$};
		\node[draw,item] (v)[right=of u]{$x$};
		\node[state] (w) [right=of v] {$b$};
		\node[draw,item] (x) [below=of w] {$y$};
		\node[state] (y) [left =of x] {$c$};
		\node[draw,item] (z) [left =of y] {$z$};
		\path (u) edge [] node[below =0.15 cm] {} (v);
		\path (x) edge [] node[below =0.15 cm] {} (w);
		\path (v) edge [] node[below =0.15 cm] {} (y);
		\path (y) edge [] node[below =0.15 cm] {} (z);
		\path (z) edge [] node[below =0.15 cm] {} (u);
	\end{tikzpicture}
}\hspace{5mm}
\subfigure[After $C_2=(a,x,c,z,a)$ is executed]{  \label{fig-ex-instance-c2}
	\begin {tikzpicture}[-latex ,auto ,node distance = .8cm and 1.5cm ,on grid ,
	semithick ,
	state/.style ={ circle ,
		draw , minimum width =.1 cm},
	item/.style = {red}]
	\node[state] (u){$a$};
	\node[draw,item] (v)[right=of u]{$x$};
	\node[state] (w) [right=of v] {$b$};
	\node[draw,item] (x) [below=of w] {$y$};
	\node[state] (y) [left =of x] {$c$};
	\node[draw,item] (z) [left =of y] {$z$};
	\path (v) edge [] node[below =0.15 cm] {} (u);
	\path (x) edge [] node[below =0.15 cm] {} (w);
	\path (z) edge [] node[below =0.15 cm] {} (y);
\end{tikzpicture}

}
\caption{
	An example of an execution on a trading graph. Squares denote items and circles denote agents.
	\label{fig:model}
}
\end{center}
\vspace*{-0.2in}
\end{figure} 

\xhdr{Our Results} 
In this paper, we initiate the study of mechanisms that efficiently reallocate easily transferable durable goods. Due to the dynamic nature of this setting, designing such mechanisms is very challenging and requires developing new tools and techniques. Our goal is to develop algorithms that maximize the \emph{social welfare} which is defined as the total number of items that all agents receive from their demand sets.

 Furthermore, we take a mechanism design approach and assume that for each agent $i$, the demand set $D_i$ and the supply set $S_i$ are his private information. Thus, we would like our algorithms to be \emph{strategyproof} in the sense that each agent maximizes his utility (i.e., the number of items that he receives from his demand sets) by truthfully reporting his private information. Lastly, we would like our algorithms to run in \emph{polynomial time}. Ideally, we would like to develop algorithms that have all these properties. Unfortunately, we will prove that simultaneously obtaining all three properties is impossible. Thus we focus on studying the tradeoffs between them.

We begin by considering simple executions that reallocate each item at most once. We refer to such executions as \emph{static executions}. We show that an optimal static execution does not only provide a reasonable approximation to the social welfare of an optimal (dynamic) execution but can also be computed in polynomial time using a strategyproof algorithm. Formally, we show:

\begin{theorem}
	Let $l=\max_{i\in N } |D_i|$ be the maximum number of items that an agent demands. There exists a polynomial-time and strategyproof algorithm that computes an optimal static execution and provides an $l$-approximation to the social welfare of an optimal (possibly dynamic) execution. 
\end{theorem}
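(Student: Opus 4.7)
The plan is to reduce the problem of finding an optimal static execution to maximum bipartite matching. Construct a bipartite graph $B$ whose left and right vertex sets are both copies of $M$, and place an edge $(j, j')$ whenever $j' \in D_{\mathrm{owner}(j)}$, where $\mathrm{owner}(j)$ is the unique agent initially holding $j$. Any matching $\mathcal{M}$ in $B$ encodes a static execution: an edge $(j, j') \in \mathcal{M}$ means $\mathrm{owner}(j)$ gives $j$ and receives $j'$. Distinctness of left endpoints guarantees each item is given at most once, distinctness of right endpoints guarantees each item is received at most once, and for each agent $i$ the number of matching edges with left endpoint in $S_i$ equals the number of items $i$ receives via those edges, so flow is conserved at $i$ and the chosen edges decompose into item-disjoint trade cycles. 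Conversely, any static execution yields a matching of equal size, so $\mathrm{SOPT}$ equals the maximum matching size in $B$, computable in polynomial time (e.g.\ via Hopcroft--Karp).

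\textbf{Approximation ratio.} Let $r_i$ denote the number of items agent $i$ receives in some fixed optimal dynamic execution, and let $n^{\ast} = |\{i : r_i \geq 1\}|$. Since each item in $D_i$ can enter $i$'s current holdings at most once (after which it leaves the demand set), $r_i \leq |D_i| \leq l$, and hence $\mathrm{OPT} = \sum_i r_i \leq l \cdot n^{\ast}$. It therefore suffices to show $\mathrm{SOPT} \geq n^{\ast}$. For each active agent $i$, pick $s_i \in S_i$ from the original supply and some $d_i \in D_i$ that $i$ actually receives in the dynamic optimum; the edge $(s_i, d_i)$ lies in $B$, and the $s_i$ are automatically distinct because different agents' original supply sets are disjoint. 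The $d_i$ can be made distinct by a Hall-type argument that exploits the dynamics: if two active agents' candidate received items collide, the trade that later moves the shared item between them forces one of them to also receive a different item, which can serve as an alternative $d_i$. Iterating produces a matching of size $n^{\ast}$ in $B$, so $\mathrm{SOPT} \geq n^{\ast} \geq \mathrm{OPT}/l$.

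\textbf{Strategyproofness (main obstacle).} Maximum matchings in $B$ need not be unique, and different choices assign different agents different counts, so tie-breaking is essential. I would fix a priority order over agents and select, among all maximum matchings, the one that lexicographically maximizes $(u_1, \dots, u_n)$, where $u_i$ counts the matching edges whose left endpoint lies in $S_i$ (equivalently, a sequential augmentation that processes agents in priority order). Then truthful reporting should be a dominant strategy: omitting a true demand item or hiding a supply item weakly shrinks $i$'s ``private'' edge set $E_i = \{(s,d) : s \in S_i, d \in D_i\}$, and a standard exchange lemma for bipartite matchings forces $u_i$ to weakly decrease, while inserting a fictitious demand item $j' \notin D_i$ can only consume a supply slot on an edge whose right endpoint is worthless to $i$'s true utility. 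The hardest step is that $i$'s report reshapes the graph globally, so the argument must leverage the matroid-intersection structure of maximum bipartite matching to show that any rearrangement benefiting $i$ under misreport can be undone by an augmenting swap in the truthful instance, certifying that $i$'s utility is weakly higher under truth.
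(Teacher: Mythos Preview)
Your reduction in the first paragraph is incorrect. A maximum-cardinality matching in $B$ does \emph{not} correspond to a static execution, because it enforces balance only at agents, not at items. Concretely: take three items $a,b,c$ with $D_{\mathrm{owner}(a)}=\{b\}$, $D_{\mathrm{owner}(b)}=\{c\}$, $D_{\mathrm{owner}(c)}=\emptyset$. Then $B$ has edges $(a,b)$ and $(b,c)$, so the maximum matching has size $2$. But the trading graph is a path with no cycle at all, so $\mathrm{SOPT}=0$. Your claim that ``the chosen edges decompose into item-disjoint trade cycles'' fails: for $\mathrm{owner}(b)$ to receive $c$, $\mathrm{owner}(c)$ must give $c$, yet $\mathrm{owner}(c)$ receives nothing, so no trade is possible. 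The balance you check (each edge contributes one give and one receive to the same agent) is a tautology; what is missing is that an item can appear as a right endpoint only if it also appears as a left endpoint. The fix is standard cycle-cover machinery: add a weight-$0$ self-loop $(j,j)$ for every $j$, give the real edges weight $1$, and find a maximum-\emph{weight perfect} matching; now the matching is a permutation of $M$ whose non-trivial cycles are exactly the trade cycles. The paper does essentially this, but on the line graph of the trading graph (vertices are edges of $G$, with weight-$0$ edges to their own copy and weight-$1$ edges to consecutive edges), which is what allows the perturbation-based tie-breaking used for strategyproofness.

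Your approximation argument is also shaky even once the algorithm is repaired. You need $\mathrm{SOPT}\ge n^\ast$, but matching distinct $(s_i,d_i)$ pairs in $B$ is not enough---again, those pairs need not close into cycles. The paper's argument is cleaner: take the multigraph of all edges used in the optimal dynamic execution, delete every supply edge that was created by flipping a demand edge together with the demand edge that produced it; the resulting subgraph of the \emph{original} trading graph still has equal in- and out-degree at every node and still touches every active agent (the first supply edge each active agent uses is original). By Euler, it decomposes into cycles, giving a static execution with $N(r_o)\subseteq N(r_s)$ and hence $\mathrm{SOPT}\ge n^\ast$.

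Finally, your strategyproofness sketch is in the right spirit but underestimates the difficulty. Lexicographic tie-breaking over agent utilities does not obviously interact well with misreports, since removing one demand edge of agent $i$ changes the entire family of maximum matchings, not just $i$'s ``private'' edges. The paper instead perturbs edge weights so that the set of items each agent receives is uniquely determined, proves a monotonicity lemma (hiding a single demanded item cannot help) via an alternating-path exchange between the two maximum-weight matchings, and then handles supply-hiding by a symmetry argument on the reversed instance. Your exchange-lemma intuition is close, but the actual argument requires constructing two new matchings on the union of the truthful and deviating optima and showing their weights cannot both be maximal.
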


Our algorithm for computing the optimal static execution computes a maximum cycle cover of a graph by finding a maximum-weight perfect-matching in a corresponding bipartite graph. This general approach is similar to Abraham et al.  \cite{abraham2007clearing}; however, defining the appropriate bipartite graph for our model is more intricate as each agent may own and demand multiple items. Notably, in contrast to \cite{abraham2007clearing}, the algorithm that we devise is also strategyproof. This requires careful selection of the optimal execution that the algorithm outputs (there might be several allocations that provide the maximal welfare). Loosely speaking, we divide the proof that our algorithm is strategyproof into two parts. First, we show that an agent cannot increase his utility by not reporting some of the items that he demands. For this part of the proof, we take the edges of two maximum-weight perfect matchings for two instances: the instance in which agent $i$ reports his true demand set and the instance in which agent $i$ reports a subset of his demand. Using these edges, we construct two different matchings for the two instances. Since both pairs of matchings use the same set of edges, their sum of weights is identical.  Hence it is impossible that both original matchings are optimal for their corresponding instances. Then, we show that an agent cannot benefit from not reporting some items that he owns. We prove this by reversing the directions of all the edges in the graph and applying our previous result showing that an agent cannot benefit from not reporting some of the items in his demand set.

We then go on to study the limits of dynamic executions. First, we show that the approximation ratio achieved by our algorithm is close to the optimal approximation ratio achievable by any strategyproof algorithm. In particular, we show that any strategyproof algorithm cannot attain an approximation ratio better than $\approx \frac{l}{2}$ (where $l=\max_{i\in N } |D_i|$). Interestingly, we show that if we consider algorithms that are both strategyproof and always return a Pareto efficient allocation, the best achievable approximation ratio is $\Theta(n)$. Finally, we drop the requirement of strategyproofness and consider the computational question of computing an optimal execution. By constructing a careful reduction from the 3D-matching problem, we show that not only computing the optimal execution is NP-hard but it also cannot be approximated within some small constant unless P=NP. We note that the fact that the trading graph changes with the execution makes the reduction quite challenging. 
We leave open the question of closing the gap between the $l$-approximation ratio of our algorithm and the impossibility result showing that the problem cannot be approximated within a small constant.\footnote{In \full{Appendix \ref{sec-greedy}}\wine{the full version} we show that a greedy algorithm that sequentially finds an optimal static allocation cannot get an approximation ratio better than $l$.}

This work focuses on executions that are sequences of cycles in which each participating agent receives an item in his demand set. This restriction is a result of two main assumptions. First, we assume that an agent is unwilling to give an item without immediately receiving an item in return. The reasoning behind this is that the agent views the items he has as commodities used to get items that he is interested in. Thus, he does not want to lose such a commodity without immediately getting something in return. This can lead to a problem known as the ``double coincidence of wants'': Bob might demand some item that Alice has while Alice may not currently demand any item of Bob. The simple solution to this classic problem is to introduce some form of money. Thus, our results can be interpreted as demonstrating the necessity of money (not necessarily fiat money\footnote{\cite{kocherlakota1998money,akbarpour2020unpaired} suggest that ``memory'' could be used instead of money.}) when considering dynamic barter markets.  The second assumption we make is that an agent is not willing to give an item and receive in return an item that is not in his demand set. The rationale is that we assume that exchanging and perhaps storing an item that the agent is not interested in may have a nonnegligible physical or emotional cost associated with it.

	\xhdr{Related Literature}
Abbassi et al. \cite{abbassi2015exchange} consider a similar setting of barter networks with the main exception that each item may be allocated only once. This is comparable to our static executions but is very different from the more general dynamic executions. Abbassi et al. mainly focus on a different setting than ours in which the length of each trading cycle is bounded. For this setting, they present algorithms and hardness of approximation results for strategyproof mechanisms. For the setting in which the length of the trading cycles is unconstrained, they present a polynomial-time algorithm that computes an optimal static execution. It is important to note that, unlike our algorithm, this algorithm is not strategyproof.
	
The problem of computing an optimal static execution is very much related to the literature on matching. Our setting, in this respect, has two notable properties: 1) the agents may own and demand multiple items and 2) the agents have a high level of indifference in the sense that an agent only cares about the number of items from $D_i$ that he receives. In contrast, till recently, previous work did not consider situations that exhibit both of these properties. In \cite{sonmez1999strategy}, for example,  S\"onmez considered a setting in which each agent may own and demand multiple items. He showed that when each agent has a strict preference order over subsets of items, there is no individually rational, strategyproof, and Pareto-efficient mechanism.  Konishi et al. \cite{konishi} demonstrated that this impossibility result holds even in cases where there are only two different types of items (e.g., houses and cars) and the agents have strict preference order over these items. The high degree of indifference in our model allows us to escape these impossibility results. Till recently, most works in the matching literature featuring indifference considered only settings in which each agent owns and can receive exactly a single item (\cite{alcalde2011exchange,jaramillo2012difference,aziz2012housing,saban2013house}).

In a recent working paper, Manjunath and Westkamp \cite{manjunath2018strategy} considered a \emph{static} setting they refer to as \emph{Trichotomous Preferences}. Similar to our setting, each agent labels the item that he does not own as either desirable or undesirable. However, in contrast to our setting, the items that an agent \emph{owns} are also labeled as desirable or undesirable. The only undesirable items that an agent is willing to accept as part of a bundle are those in his initial endowment. The agents rank all the acceptable bundles according to the number of desirable items in them. The authors show that there is a computationally efficient mechanism that is individually rational, Pareto efficient and strategyproof in this setting. Similarly to \cite{abbassi2015exchange} the mechanism that Manjunath and Westkamp present is also based on a fixed ordering of the agents. In \cite{manjunath2018strategy} this often leads to executions that are Pareto efficient but may be far from optimal. A slightly different setting was considered in a working paper by Andersson et al. \cite{andersson2018organizing} in which each agent is endowed with multiple copies of an item that is unique only to him. Agents label other agents' items as desirable or undesirable and for desirable items, they have a cap on the number of units they are willing to receive. Andersson et al. present a mechanism that is individually rational, strategyproof and optimal. Notice that their result can not be applied to our static model since their model is too restrictive.

\xhdr{Paper Outline}
In Section \ref{sec:model} we formally define our model. In Section \ref{sec-opt-static} we present a strategyproof and computationally efficient algorithm that provides an $l$-approximation to the optimal execution by computing an optimal static execution. In Section \ref{sec:limitiation-dynamic} we discuss the limits of dynamic executions. 
	
	\section{Model}  \label{sec:model}
	We consider a set of agents $N$ ($|N|=n$) and a set of items $M$ ($|M|=m$). For each agent $i$ we 
denote the subset of the items that he owns by $S_i$ and the subset of items that he
demands by $D_i$. We assume that $(S_1,\ldots,S_n)$ is a partition of $M$ and that no player demands an item that he also owns (i.e., $S_i \cap D_i = \emptyset$).
 We denote an instance by $\mathcal G=(N,M,S_1,D_1,\ldots,S_n,D_n)$.
We denote by $l=\max_{i\in N} |D_i|$ the maximal number of items
demanded by a single agent. We model the agents' preferences using a directed bipartite
\emph{trading graph} $G=(N,M,E)$. For agent $i$ and item $j\in
D_i$ we have a  \emph{demand edge} $(i,j)\in E$. Similarly, for each item $j\in S_i$ we
	have a  \emph{supply edge} $(j,i)\in E$.

We only allow agents to exchange items within cycles such that each
participant $i$ gives an item that is currently in $S_i$ and receives an item that is currently in $D_i$.\footnote{
	As discussed in the introduction, we assume that the agents do
not want to hold on to items that are not in their demand set nor to give items without getting anything in return.} 
After a cycle is executed we update the demand and supply sets of all the agents accordingly and
update the graph by reversing all the demand edges of the cycle and removing all the supply
edges of the cycle. Formally:

\begin{definition} [Cycle Step]
	Let $G=(N,M,E)$ denote the current trading graph and let $C=(i_1,j_1,i_2,j_2,...,i_{k},j_{k},i_1)$ denote a cycle
	in $G$.
	After executing the cycle $C$, the edge set
	of the graph is updated to $E'=(E\setminus \{e\in C\})\cup \{(j_t,i_t) | (i_t,j_t) \in
	C\}$. The number of exchanges in a cycle step is $|C|= |\{(i,j) | i\in N,~ j\in M,~ (i,j)\in C \}|$.
\end{definition}
In this paper we study executions, these are sequences of cycles that obey the conditions we previously defined. Formally:
\begin{definition} Consider an instance $\mathcal G=(N,M,S_1,D_1,\ldots,S_n,D_n)$.
An execution $r=C_1,\ldots,C_k~$ for $\mathcal G$ is a sequence of cycles such that for each $1 \leq i \leq k$, $C_i$ is a cycle in $G_{i-1}$, where $G_{i-1}$ is the trading graph that results from executing cycles $C_1,\ldots,C_{i-1}$ sequentially on the original trading graph $G=(N,M,E)$ and $G_0=G$. 
\end{definition}

We denote the set of all executions by $R$ and the number of exchanges of an execution $r=C_1,\ldots,C_k$ by $|r| = \sum_{i=1}^k |C_i|$.
We denote the set of agents that participate in an execution $r$ by $N(r)$ and the set of demand edges that are used in an execution $r$ by $E(r)$.

We refer to a demand edge that was used in an execution and then was flipped and used as a supply
edge as a \emph{dynamic edge}. For example, the edge $(c,x)$ in the execution described in Figure \ref{fig:model} is a dynamic edge. The utility of an agent $i$ in
an execution $r$ is defined as the number of items from his demand set that $i$ received throughout the execution.
Furthermore, we assume that at each step agents are only willing to accept items that are currently in their demand set. Accepting an item currently not in their demand set results in setting their utility to $-n\cdot m$. Similarly, the utility of an agent that is asked to give an item that is not currently in his demand set is also $-n\cdot m$.   Formally,

\begin{definition} [Agent's utility]
Consider the instance $\mathcal G=(N,M,S_1,D_1,\ldots,S_n,D_n)$ and an execution $r$ for $\mathcal G$.
Let $A_i(r)$ denote the set of items that agent $i$ received in $r$. The utility of agent $i$ is: $u_i(r) = |A_i(r)|$, if for each trading cycle that $i$ participated in, the item that he received was in his \emph{current} demand set and the item that he gave was in his current supply set. Else, $u_i(r)=-n\cdot m$.
\end{definition}

We consider the standard objective function of maximizing the social welfare:
\begin{definition} [Social Welfare]
Consider the instance $\mathcal G=(N,M,S_1,D_1,\ldots,S_n,D_n)$ and an execution $r$ for $\mathcal G$. The social welfare of $r$ is $U(r)=\sum_{i\in N} u_i(r)$.
\end{definition}
In other words, as we only consider executions in which agents receive items in their demand set the social welfare equals to the total number of items that the agents received (i.e., $\sum_{i=1}^n |A_i(r)| $). We denote the execution maximizing the social welfare (i.e., the optimal execution) by $r_o$.

	\section{A Strategyproof $l$-approximation Algorithm} \label{sec-opt-static}
Most of the literature on exchange economics usually focuses on models in which each item can
be reallocated at most once. In our model we allow items to be reallocated several times. In our analysis, we will refer to executions that, as in the
traditional literature, allocate each item at most once as \emph{static executions} and executions that
can allocate each item more than once as \emph{dynamic executions}.

In this section we present a polynomial time algorithm for computing the optimal static execution. Recall that $l$ is the maximum number of items that a single agent demands. We show that our algorithm provides an $l$-approximation to the social welfare of an optimal \emph{dynamic} execution. Furthermore, we show that our algorithm is strategyproof -- each agent maximizes his utility by truthfully reporting the items that he demands and the items that he owns. 

\begin{proposition}\label{prop:static-l-approx}
In any instance $\mathcal G$, the social welfare of an optimal static execution is at least $\frac{1}{l}$ of the social welfare of
	an optimal (possibly dynamic) execution.
\end{proposition}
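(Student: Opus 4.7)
The plan is to construct a specific static execution induced by $r_o$---call it the ``net-transfer'' execution---and show its welfare is at least $U(r_o)/l$. The two ingredients will be a counting upper bound on $U(r_o)$ in terms of the number of participating agents, and an explicit static construction derived from the final state produced by $r_o$.

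First I will observe that each original demand edge $(i,j)$ is used at most once in any execution, since once used it is flipped into a supply edge and can never serve as a demand edge again. Consequently the number of items an agent $i$ receives in $r_o$, call it $k_i$, satisfies $k_i \le |D_i| \le l$. Writing $N(r_o)$ for the set of participating agents, this yields
\[
  U(r_o) \;=\; \sum_{i \in N(r_o)} k_i \;\le\; l \cdot |N(r_o)|.
\]

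Next I will build the candidate static execution. For each item $j$ that is transferred in $r_o$, let $o(j)$ be its original owner and $f(j)$ its final owner; note $f(j) \neq o(j)$ because $j \notin D_{o(j)}$, so once $j$ leaves $o(j)$ it cannot return. The net-transfer execution performs, for each such $j$, a single direct transfer of $j$ from $o(j)$ to $f(j)$, using the original supply edge $(j,o(j))$ and the demand edge $(f(j),j)$---the latter is original because $f(j)$ must have received $j$ in $r_o$ via precisely this edge, and demand edges are never created (only removed or flipped). Let $g_i$ denote the number of items from the original $S_i$ that $i$ gives away during $r_o$. Since each cycle step preserves every participant's inventory size, each agent's final inventory has size $|S_i|$, and a short bookkeeping check shows that under the net-transfer construction each agent $i$ both receives and gives exactly $g_i$ items. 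The chosen edge set is therefore balanced at every agent vertex and has in-degree and out-degree one at every item vertex, so it decomposes into item-disjoint cycles in the original trading graph---i.e., a valid static execution with welfare $\sum_i g_i$.

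Finally, I will lower-bound $\sum_i g_i$ by a first-participation argument: in the first cycle of $r_o$ containing any agent $i \in N(r_o)$, agent $i$ has received nothing yet, so the item he gives in that cycle must come from the original $S_i$. Hence $g_i \ge 1$ for every $i \in N(r_o)$, giving
\[
  \sum_{i \in N(r_o)} g_i \;\ge\; |N(r_o)| \;\ge\; U(r_o)/l,
\]
which is at most the welfare of an optimal static execution. The main conceptual obstacle is identifying the correct static execution to compare against: a subset of the cycles of $r_o$ itself generally relies on flipped (``dynamic'') supply edges and collapses to far less than $U(r_o)/l$ when restricted to original edges. The net-transfer sidesteps the flipped edges entirely, and both its membership in the original graph and its balance at every vertex are automatic from the conservation laws of any dynamic execution.
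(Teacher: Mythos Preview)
Your proof is correct and essentially coincides with the paper's: what you call the ``net-transfer'' execution is exactly the graph $G'_{r_o}$ the paper obtains by starting from the multigraph of all edges used in $r_o$ and deleting, for each dynamic supply edge $(j,i)\in S(r_o)\setminus E$, both it and the demand edge $(i,j)$ that was flipped to create it---which for each transferred item leaves precisely $(f(j),j)$ and $(j,o(j))$. The paper then invokes Euler's theorem on the resulting balanced subgraph and the same first-participation observation to conclude $N(r_o)\subseteq N(r_s)$, so the two arguments differ only in packaging.
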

\begin{proof} 
Consider an instance $\mathcal G$ and let $r_o$ be an optimal execution. We will show that there exists a 
static execution $r_s$ such that each agent that received at least one item in $r_o$ will receive 
one item in $r_s$ (formally, $N(r_o) \subseteq N(r_s)$). As the maximal number of items that an 
agent may receive is $l$, this implies that the execution $r_s$ 
provides an $l$-approximation to the optimal (dynamic) execution. Thus, the optimal static execution also provides 
an $l$-approximation.

We now construct a static execution $r_s$ such that  $N(r_o) 
\subseteq N(r_s)$. Denote by $S(r_o)$ the set of supply edges that were used in 
	$r_o$. Let $G_{r_o}=(N,M, E(r_o)\cup S(r_o))$ denote the trading graph that includes all the edges 
	that were used in $r_o$. Since an execution is a sequence of cycles, the in-degree of each 
	node in $G_{r_o}$ is the same as its out-degree. Recall that a static execution can only allocate each item at most once, thus we essentially remove from $G_{r_o}$ the edges associated with items that were allocated more than once.  Formally, for each supply edge $(j,i)$ such that $i\in N$ and $j\in M$ that was not included in the initial trading graph $G$ (i.e., $(j,i) \in S(r_o)\setminus E $) we remove both the supply edge $(j,i)$ and the dynamic edge $(i,j)\in E$ that was flipped to create $(j,i)$. Denote the new graph by $G'_{r_o}$. 
Observe that the edges of $G'_{r_o}$ are a subset of $E$ and it is still the case that each node in $G'_{r_o}$ has the same out-degree and in-degree. Furthermore, since for each agent that received some items in $r_o$ the first supply 
	edge that was used is $(j,i) \in S(r_o) \cap E$ we have that if a node was not isolated in 
	$G_{r_o}$ it is still not isolated in $G'_{r_o}$. Thus, by Euler's theorem on the connected 
	components of $G'_{r_o}$ we have that there exists a cycle cover of all the edges in  
	$G'_{r_o}$. This cycle cover includes all the agents that received at least one item in $r_o$. 
	Therefore, we have that there exists a static execution $r_s$ for which $N(r_o) 
	\subseteq N(r_s)$ as required.
\end{proof}

Recall that for any instance $\mathcal G$ we denote by $r_s$ an optimal static execution for $\mathcal G$ and by $r_o$ an optimal execution for $\mathcal G$. We now show that this bound is tight:
\begin{claim}
For any $l\geq 1$ and $n> l$, there exists an instance $\mathcal G$ with $n$ agents such that  for each agent $i$ $|D_i|\leq l$ and $|r_s| = \frac{1}{l} |r_o|$. 
\end{claim}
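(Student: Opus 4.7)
The plan is to exhibit a symmetric ``ring'' instance that saturates Proposition~\ref{prop:static-l-approx}. For any $l\geq 1$ and $n>l$, I would take agents $N=\{a_0,\ldots,a_{n-1}\}$ and items $M=\{x_0,\ldots,x_{n-1}\}$, with $S_i=\{x_i\}$ and $D_i=\{x_{i+1},x_{i+2},\ldots,x_{i+l}\}$, all indices taken mod $n$. Then $|D_i|=l$ for every agent, so the constraint $|D_i|\leq l$ is met. The goal is to show $|r_o|=l\cdot n$ and $|r_s|=n$; combined with Proposition~\ref{prop:static-l-approx} this forces the tight ratio.

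For the static upper bound, each agent owns exactly one item, so in any static execution every supply edge is used at most once, capping the total number of exchanges at $n$. Equality is achieved by the Hamiltonian cycle $(a_0,x_1,a_1,x_2,\ldots,a_{n-1},x_0,a_0)$, whose demand and supply edges all exist in the initial trading graph since $x_{i+1}\in D_i$ for every $i$. Hence $|r_s|=n$.

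For the dynamic lower bound I would perform $l$ consecutive ``rotation'' rounds. In round $k\in\{1,\ldots,l\}$, every agent $a_i$ gives $x_{i+k-1}$ and receives $x_{i+k}$ along the cycle $(a_0, x_k, a_1, x_{k+1}, a_2, \ldots, a_{n-1}, x_{k-1}, a_0)$, so each round contributes $n$ exchanges for a total of $l\cdot n$. The one nontrivial step, and the main obstacle, is verifying that this cycle really lives in the current trading graph. I would prove this by induction on $k$, checking at the start of round $k$ that: (i) $a_i$ currently owns $x_{i+k-1}$, so the supply edge $x_{i+k-1}\to a_i$ is present; (ii) $a_i$ still demands $x_{i+k}$, so the demand edge $a_i\to x_{i+k}$ is present; and (iii) the item $x_{i+k-1}$ that $a_i$ hands off is still demanded by the receiver $a_{i-1}$. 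Point (iii) is the delicate one: the cyclic offset $(i+k-1)-(i-1)=k$ lies in $\{1,\ldots,l\}$, which places $x_{i+k-1}$ in $D_{i-1}$, and in earlier rounds $k'<k$ agent $a_{i-1}$ has received only $x_{i+k'-1}$, none of which equals $x_{i+k-1}$.

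Combining $|r_o|\geq l\cdot n$ with $|r_s|=n$ and the inequality $|r_o|\leq l\cdot |r_s|$ from Proposition~\ref{prop:static-l-approx} forces both bounds to be tight, yielding $|r_s|=\tfrac{1}{l}|r_o|$, as required.
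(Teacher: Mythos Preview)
Your proof is correct. The construction differs from the paper's: the paper uses only $l+1$ ``active'' agents, each owning one item and demanding all $l$ others, with the remaining $n-(l+1)$ agents sitting idle; the optimal dynamic execution then uses all $(l+1)l$ demand edges while the optimal static execution gives each active agent one item. Your ring construction instead involves all $n$ agents symmetrically, with each agent demanding the next $l$ items around the cycle. When $n=l+1$ the two constructions coincide, so yours is a genuine generalization. The paper's version is a bit quicker to verify (the complete structure among the $l+1$ active agents makes the dynamic execution immediate, with no inductive check needed), whereas your version avoids the inelegance of dummy agents and yields an instance in which every agent participates. One small remark: your point~(iii) is not really a separate condition---it is exactly point~(ii) applied to agent $a_{i-1}$---so the ``delicate'' step is already handled once you have established~(ii) for all $i$.
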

\begin{proof}
	Consider an instance $\mathcal G$ with $l+1$ items ($M=\{1,\ldots,l+1\}$). Let $N_{l+1}=\{1,\ldots, l+1\}$ denote the set of the first $l+1$ agents. These are the only agents that own and demand items in $\mathcal G$.
	For each agent 
	$i \in N_{l+1}$ 
	we have that $S_i=\{i\}$ and $D_i = M \setminus \{i\}$. 
	The optimal execution executes $l$ cycle steps where in each step all agents in $N_{l+1}$
	give the item they own and receive a new item which they will swap in the next step. In this
	execution all the $(l+1)l$ demand edges are used. Note that only $l+1$ of them are not 
	dynamic edges. On 
	the
	other
	hand, in the optimal static execution each of the agents in $N_{l+1}$ will receive a single item. This is optimal as in any static execution the maximal number of items that an agent that owns a single item may receive is one. Hence, there is a gap of $l$ between the social welfare of the optimal execution and 
	the social welfare of the optimal static execution.
\end{proof}

\subsection{Computing an Optimal Static Execution.} 
\label{sec:static-poly}
We present an algorithm for computing an optimal static execution. Our algorithm computes a maximal cycle cover by finding a maximum weight perfect matching. This is similar to the algorithm of Abraham et al. \cite{abraham2007clearing} for the kidney exchange setting. However,  since in our setting the same agent may participate in more than one cycle, if he owns several items and demands several items, we consider edge-disjoint cycles whereas \cite{abraham2007clearing} considers node-disjoint cycles. To handle this difference we construct a new bipartite graph where we have two copies of each edge play as the vertices of the graph. Roughly speaking, each edge is 
connected to its copy with weight zero and to all the edges that are adjacent to it with weight 
$1$. 
Now, a perfect matching can define an execution in the following way: any edge that is matched to its copy does not take part in the execution and any edge that is matched to a different edge is included together with the edge it was matched to in some cycle in the execution. Notice that the maximum weight matching will maximize the number of edges that are not matched to their copies(and hence participate in the execution) as they are the only edges that have positive weight. We now formalize this construction:
\begin{theorem}\label{theorem:optimal-static-poly}
An optimal static execution $r$ can be computed in polynomial time ($O(|E|^3)$). 
\end{theorem}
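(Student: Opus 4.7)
The plan is to reduce computing an optimal static execution to maximum-weight perfect matching in an auxiliary bipartite graph $H$ built from the edges of $G$, and then invoke the Hungarian algorithm. Take two copies $E_L = \{e_L : e \in E\}$ and $E_R = \{e_R : e \in E\}$ of the edge set and let $H$ have vertex parts $E_L, E_R$. For each $e \in E$ include a ``copy'' edge $(e_L, e_R)$ of weight $0$, and for each ordered pair of distinct edges $e, f \in E$ such that the head of $e$ equals the tail of $f$ (so that $f$ can immediately follow $e$ in a directed walk in $G$) include a ``succession'' edge $(e_L, f_R)$ of weight $1$. Since the copy edges alone already form a perfect matching, $H$ admits one, and a maximum-weight perfect matching $M^{\ast}$ can be computed in $O(|E|^3)$ time by the Hungarian algorithm on a bipartite graph with $|E|$ vertices per side.

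The next step is to read a static execution off of $M^{\ast}$. Interpret $M^{\ast}$ as a permutation $\sigma : E \to E$ by setting $\sigma(e) = f$ whenever $(e_L, f_R) \in M^{\ast}$. The fixed points of $\sigma$ are exactly the edges matched to their own copy and contribute weight $0$. Every non-trivial orbit $(e_1, \ldots, e_k)$ of $\sigma$ satisfies that $e_{t+1}$ starts where $e_t$ ends (indices taken mod $k$), and since $G$ is bipartite between $N$ and $M$, the corresponding closed walk alternates between demand and supply edges and therefore is a valid trading cycle. Distinct orbits are edge-disjoint in $G$ (each edge appears once as $e_L$ and once as $e_R$), and each item $j$ appears in at most one orbit: the unique supply edge incident to $j$ can lie in at most one orbit, so at most one demand edge into $j$ can be mapped to it by $\sigma$. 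Executing these cycles in any order yields a valid static execution $r$, and since a cycle with $k$ demand edges contributes $2k$ succession edges of weight $1$ while generating $k$ exchanges, $w(M^{\ast}) = 2\,U(r)$.

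To conclude optimality I would establish the converse: every static execution $r$ encodes a perfect matching of $H$ of weight $2\,U(r)$. Given the edge-disjoint cycles of $r$, define $\sigma$ to send each used edge to its successor in its cycle and each unused edge to itself; edge-disjointness makes $\sigma$ a bijection, the walk-adjacency condition ensures each succession corresponds to a legal edge of $H$, and unused edges are absorbed by weight-$0$ copy edges. This induced matching has weight $2\,U(r)$. Combining both directions gives $w(M^{\ast}) = 2 \max_r U(r)$, so decoding the orbits of $\sigma$ returns an optimal static execution, which can be extracted in $O(|E|)$ additional time.

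The step I expect to be the main obstacle is verifying that the orbits of $\sigma$ obtained from an arbitrary max-weight matching really form a valid static execution, i.e.\ that no item is reused and that each closed walk can be executed as a trading cycle. This hinges on the fact that $(S_1, \ldots, S_n)$ is a partition of $M$, which forces every item to have a single supply edge and tightly constrains the predecessors and successors of that supply edge under $\sigma$; once this structural observation is in place, the remaining weight bookkeeping and the appeal to a standard polynomial-time bipartite matching subroutine are routine.
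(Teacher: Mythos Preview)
Your proposal is correct and essentially identical to the paper's proof: the paper builds the same bipartite graph $H(G)$ on two copies of $E$ with weight-$0$ copy edges and weight-$1$ succession edges, computes a maximum-weight perfect matching via the Hungarian method, and establishes the same two-direction correspondence between matchings of weight $x$ and static executions with $x/2$ exchanges. The only detail you gloss over that the paper makes explicit is that an orbit of $\sigma$ may revisit an agent node (so it is a closed Eulerian walk rather than a simple cycle), and one then decomposes it into simple cycles before declaring it a static execution; this is routine and does not affect the argument.
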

\full{\begin{proof}}
\wine{\noindent \emph{Proof.}}
Given an instance $\mathcal G$ and its corresponding trading graph $G=(N,M,E)$ we construct a new undirected weighted bipartite graph, $H(G)=(E\times 
\{0\},E\times 
\{1\},E_H)$ such that:
$$
E_H = \underbrace{\{((e,0),(e,1))|  e\in E \}}_{E_1}\cup \underbrace{\{((e,0),(e',1))|e=(u,v), 
e'=(v,z) \in E \}}_{E_2}.
$$
We assign each edge in $E_1$ a weight of $0$ and each edge in 
$E_2$ a weight of $1$. We illustrate this construction in Figure \ref{fig:bipartite}.
\begin{figure}[t]
	\begin{center}
		\subfigure[A trading graph $G$]{
			\includegraphics[height=2cm]{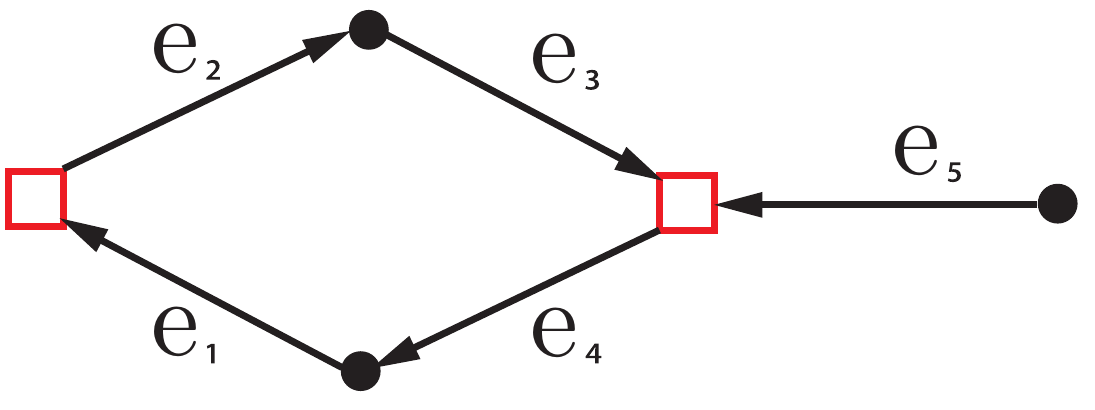}
			\label{fig:original-G}
		}\hspace{10mm}
		\subfigure[A bipartite edge graph $H(G)$]{
			\includegraphics[height=2cm]{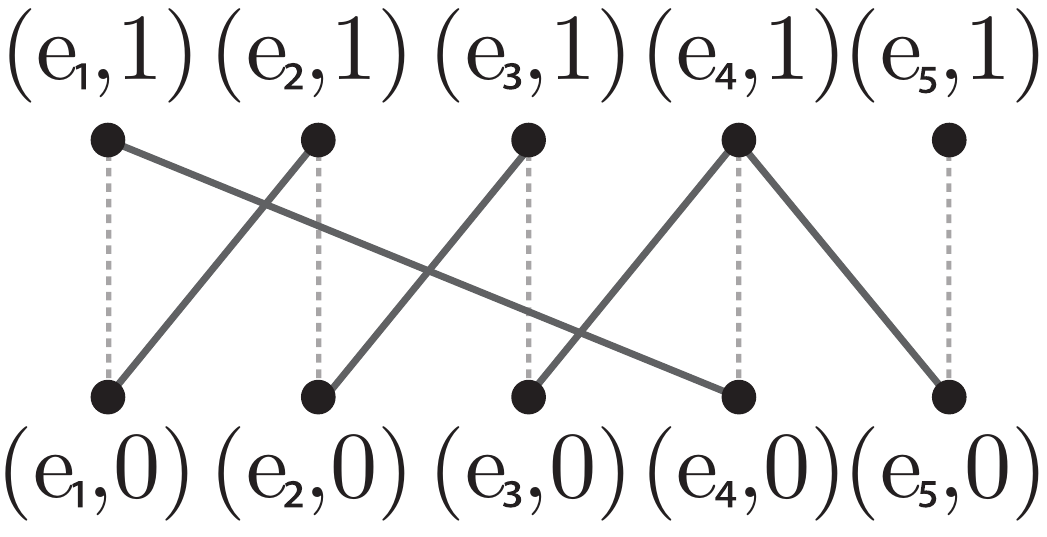}
			\label{fig:G'-insatnce}
		}
	\end{center}
	\vspace*{-0.2in}
	\caption{A trading graph $G$ and the corresponding bipartite edge graph $H(G)$. In the illustration of $H(G)$ dotted 
		edges have weight $0$ and solid edges weight $1$.}
	\label{fig:bipartite}
\end{figure} 
Note that we can construct $H(G)$ in polynomial time. After constructing the bipartite graph $H(G)$ 
we compute a maximum-weight perfect matching ${\mathcal M}$ of 
$H(G)$. This problem is known as the ``Assignment problem'' and can be solved in polynomial
time, for example using the Hungarian method \cite{assignment}. Note that a perfect 
matching of the graph $H(G)$ is guaranteed to exist since the edges of $E_1$ by themselves form a 
perfect matching. We establish the correctness of our algorithm by the following two lemmas\wine{ (The proofs of both lemmas can be found in the full version.)}. First\full{, in Lemma \ref{lemma:perfectmatching-static-1},} we construct 
from ${\mathcal M}$ a static execution $r$ such that $|r|$ equals half the weight of ${\mathcal M}$:
\begin{lemma}\label{lemma:perfectmatching-static-1}
	Consider a trading graph $G$. Given a perfect matching ${\mathcal M}$ for $H(G)$ of weight $x$ we can compute in polynomial time a corresponding
	static execution $r=C_1,\ldots,C_k$ for $G$ such that $|r|=x/2$.
\end{lemma}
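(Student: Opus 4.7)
The plan is to read off from $\mathcal M$ a subset $F\subseteq E$ of the trading graph's edges, show that $F$ decomposes into edge-disjoint directed cycles by an Eulerian argument, and take those cycles (in any order) to be the static execution $r$.

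Concretely, I would define $F:=\{e\in E : ((e,0),(e,1))\notin\mathcal M\}$, i.e., the edges of $G$ whose ``self-loop'' $E_1$-edge is not used by $\mathcal M$. For $e=(u,v)\in F$ the vertex $(e,0)$ must be matched via some $E_2$-edge to $(\sigma(e),1)$, and by the definition of $E_2$ we have $\sigma(e)=(v,z)$ for some $z$, i.e.\ $\sigma(e)$ starts at the endpoint of $e$. The key observation is that $\sigma$ restricts to a bijection $F\to F$: we get $\sigma(e)\in F$ since otherwise $((\sigma(e),0),(\sigma(e),1))\in\mathcal M$ would collide with $((e,0),(\sigma(e),1))\in\mathcal M$ at the vertex $(\sigma(e),1)$; and the inverse is obtained symmetrically by examining the matching partner of each $(e',1)$ with $e'\in F$. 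Counting weights, the number of $E_2$-edges in $\mathcal M$ equals both $|F|$ and $x$, so $|F|=x$.

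Next I would consider the subgraph $G_F:=(V(G),F)$ of the trading graph. Because $\sigma$ bijects the set of $F$-edges ending at any vertex $v$ with the set of $F$-edges starting at $v$, every vertex of $G_F$ has equal in- and out-degree. Applying Euler's theorem for directed graphs componentwise (with a linear-time Hierholzer-style algorithm) decomposes $F$ into edge-disjoint directed cycles $C_1,\ldots,C_k$ of $G$. Since $G$ is bipartite with demand edges going $N\to M$ and supply edges going $M\to N$, each $C_j$ must alternate between the two types, so it has the shape of a legal trading cycle $(i_1,j_1,\ldots,i_{t_j},j_{t_j},i_1)$ with $|C_j|=t_j$ exchanges.

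Finally I would verify that executing $C_1,\ldots,C_k$ in any order yields a valid static execution: edge-disjointness in $E$ guarantees that each supply edge of $G$ is used at most once (so the execution is static) and that no cycle uses an edge that a previous cycle reversed or removed, so each $C_j$ remains a cycle in the current graph at the moment it is executed, and each participating agent gives an item in her current supply set and receives one in her current demand set. Summing over the cycles gives $|r|=\sum_j|C_j|=|F|/2=x/2$, as required, and every step is polynomial time. The step requiring the most care is establishing that $\sigma$ sends $F$ into $F$; without this identity the in-/out-degree balance on $G_F$ would break and the Eulerian decomposition would not be available.
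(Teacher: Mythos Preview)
Your proposal is correct and follows essentially the same approach as the paper: the set $F$ coincides with the paper's $E_c$, your bijection $\sigma$ formalizes the paper's observation that an $E_2$-match of $(e,0)$ forces an $E_2$-match of $(\sigma(e),0)$, and both arguments then invoke Euler's theorem on the resulting balanced subgraph to extract the cycle decomposition. If anything, you are slightly more explicit than the paper in verifying that the resulting sequence of cycles is a valid static execution.
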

\full{
\begin{proof}
	Let ${\mathcal M}$ be a maximum perfect matching in $H(G)$. Let $E_c$ be the edges of $G$ that 
	were 
	not matched to their copy in $H(G)$. Formally:
	\begin{align*}
		E_c =\{e\in E| \exists (e',1) \text{~such 
			that~} ((e,0),(e',1)) \in E_2 \cap {\mathcal M}  \}
	\end{align*}
	Consider the graph $\tilde G=(N,M,E_c)$. Since ${\mathcal M}$ is a perfect matching, we have that if 
	$(((a,x),0),((x,b),1))\in {\mathcal M} \cap E_2$ then $(((x,b),0),((b,y),1))\in {\mathcal M} \cap E_2$. This is because if $((x,b),0)$ 
	is not matched to $((x,b),1)$ it has to be matched to an edge that is adjacent to $(x,b)$ in $G$. From this we conclude 
	that 
	the in-degree and out-degree of each node in $\tilde G$ is the same. The reason for this is that for each incoming 
	edge there is a corresponding outgoing edge.
	
	Since the in-degree and out-degree of each node in $\tilde G$ is the same, we have that 
	each 
	connected component of $\tilde G$ has an Euler 
	cycle. By standard arguments we can partition each complex cycle to simple cycles. Thus, 
	we constructed a sequence of simple cycles covering all the edges of $E_c$. This is the static execution $r$.  Note that the weight of ${\mathcal M}$ is $x=|E_2 \cap {\mathcal M}|$ and the number of exchanges in the execution we 
	constructed is $|E_c|/2$. The reason for this is that $r$
	covers all 
	the edges in $E_c$ and we define the size of an execution as the number of 
	demand 
	edges which are half of the edges. Finally observe that by construction $|E_c|= |E_2 \cap {\mathcal M}|$ and thus $|r|=x/2$ as required.
\end{proof}
}

 To show that $r$ is an optimal static execution, in Lemma \ref{lemma:perfectmatching-static-2} 
 we prove that given an execution $r$ that makes $x$ exchanges we can construct a perfect matching of the graph $H(G)$ of weight $2x$.
\begin{lemma}\label{lemma:perfectmatching-static-2}
	If there exists a static execution $r=C_1,\ldots,C_k$ for the trading graph $G$ such that $|r|=x$, then a 
	perfect matching ${\mathcal M}$ for $H(G)$ of weight $2x$ exists.
\end{lemma}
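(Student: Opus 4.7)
The plan is to reverse the construction from Lemma \ref{lemma:perfectmatching-static-1}. Given a static execution $r = C_1, \ldots, C_k$, observe that because $r$ is static (each item reallocated at most once) the cycles $C_1, \ldots, C_k$ are pairwise edge-disjoint in $G$, and within each cycle every edge $e = (u,v)$ has a unique successor $e' = (v,w)$. I will use these successor relationships to select the weight-$1$ edges of $E_2$ that go into the matching, and handle the remaining edges trivially with $E_1$.

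Concretely, I would define ${\mathcal M}$ as follows. For every edge $e \in E$ that does \emph{not} appear in any cycle of $r$, include $((e,0),(e,1)) \in E_1$ in ${\mathcal M}$ (weight $0$). For every edge $e = (u,v)$ that appears in some cycle $C_t$, let $e' = (v,w)$ be the edge immediately following $e$ in $C_t$, and include $((e,0),(e',1)) \in E_2$ in ${\mathcal M}$ (weight $1$). The membership $((e,0),(e',1)) \in E_2$ is immediate from the definition of $E_2$ and the fact that $e,e'$ share the endpoint $v$ in $G$.

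Next I would check that ${\mathcal M}$ is indeed a perfect matching. Each vertex $(e,0)$ is touched exactly once by construction (either by the copy edge, or by the successor edge in its cycle). For each vertex $(e',1)$ with $e'$ unused in $r$, only the copy edge $((e',0),(e',1))$ touches it. For each $(e',1)$ with $e'$ used in $r$, edge-disjointness of the cycles guarantees that $e'$ has a unique predecessor $e$ in its cycle, so $(e',1)$ is covered exactly once by $((e,0),(e',1))$. Finally, the weight count is the routine step: the number of $E_2$-edges selected equals the total number of edges of $G$ used by $r$, which is $2x$ (each cycle contributes equal numbers of demand and supply edges, and the execution size $|r|$ counts only the demand edges), giving weight exactly $2x$.

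The only mildly subtle point, and the one I would be most careful about, is verifying that no conflict arises at vertices $(e',1)$: this is precisely where the hypothesis that $r$ is static (edge-disjoint cycles, hence each edge has at most one predecessor in the execution) is used. Everything else is bookkeeping on the construction.
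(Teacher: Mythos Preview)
Your proposal is correct and matches the paper's proof essentially line for line: the paper also matches $(e,0)$ to $(e',1)$ whenever $e,e'$ are consecutive edges in some cycle $C_t$, matches $(e,0)$ to $(e,1)$ for unused edges, and invokes edge-disjointness of the cycles to certify that this is a valid perfect matching of weight $2x$. The only cosmetic difference is phrasing; even your emphasis on the ``unique predecessor'' point is exactly the subtlety the paper singles out.
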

\full{
\begin{proof}
	Given an execution $r=C_1,\ldots,C_k$, we construct a perfect matching as 
	follows: (1) For any consecutive edges  $e=(a,x)$ and $e'=(x,b)$ that are included in some cycle $C_i$ for $1\leq i \leq k$ we match 
	node $(e,0)$ to node $(e',1)$. (2) For any edge $e$ that does not appear in any cycle $C_i$ for $1\leq i \leq k$, we 
	match $(e,0)$ to $(e,1)$. The weight of this matching is exactly $2x$ as for 
	each exchange in $r$ we add one demand edge and one supply edge to the matching with 
	total weight of $2$. We now show this is a feasible and perfect matching. First, observe that 
	there are no unmatched nodes since each edge $e=(a,x)\in E$ either appears in some cycle, in this 
	case $((a,x),0)$ will be matched to a node $((x,b),1)$ and $((a,x),1)$ will be matched to a node $((y,a),0)$, or does not appear in any 
	cycle and in this case $(e,0)$ is matched to $(e,1)$. Furthermore, no node is matched more 
	than once since the cycles are edge disjoint and hence each edge either belongs to one cycle 
	or does not belong to any cycle.
\end{proof} \end{proof}
}

\subsection{A Strategyproof Algorithm for Computing an Optimal Static Execution}
\label{subsec:strategyproofness}

So far, we assumed that the agents truthfully report the items that they demand and own to the algorithm. In this section, we consider a mechanism design type of
question and ask if
indeed it is in the agents' best interest to report their true preferences. We denote the vector of the agents' reports by $\vec{x}'=(x'_1,x'_2,\ldots,x'_n)$, where
for each agent $i$: $x'_i=(D'_i,S'_i)$. We denote the trading graph
constructed by the reports as $G(\vec{x}')$. The utility of each agent depends on the
execution chosen by the mechanism. Recall that, roughly speaking, the utility of agent $i$ 
is the number of items that he received from his demand set $D_i$ in the execution. Formally, we denote by $A(\vec{x}')$
the execution computed by algorithm $A$ when it gets as input the reports vector $\vec{x}'$. With this notation, an algorithm $A$ is strategyproof (i.e., truthful) if and only if
$
\forall i, (D'_i,S'_i), \vec{x}'_{-i}$
$
~u_i(A((D_i,S_i),\vec{x}'_{-i})) \ge u_i(A((D'_i,S'_i),\vec{x}'_{-i}))
$
where $D_i$ and $S_i$ are the agent's private information.

In this section, we modify the $l$-approximation algorithm we presented in Section
\ref{sec:static-poly} for computing an optimal static execution to make it strategyproof. 
We note that the approximation ratio achieved by our algorithm is close to the optimal approximation ratio achievable by any strategyproof algorithm, as in Section
\ref{sec:limitiation-dynamic} we show that no strategyproof algorithm can guarantee an
approximation ratio better than $\frac{l+1}{2}$. This implies that the approximation ratio achieved by our algorithm is close to the optimal approximation ratio achievable by any strategyproof algorithm. We prove the following theorem:
\begin{theorem}\label{theorem:static-DSIC}
	There exists a strategyproof algorithm that computes an optimal static execution in poly-time.
\end{theorem}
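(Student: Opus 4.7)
The plan is to start with the algorithm from Theorem \ref{theorem:optimal-static-poly} and augment it with a carefully chosen deterministic tie-breaking rule over maximum-weight perfect matchings of $H(G)$. Optimality and the $O(|E|^3)$ running time are inherited from Theorem \ref{theorem:optimal-static-poly}, so the only work is to pick a tie-breaker that makes the mechanism strategyproof. The natural choice is a lexicographic rule over $E_H$, realizable for instance as an infinitesimal additive perturbation on the edge weights that depends only on the node and edge labels, not on any agent's report.

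Before attacking the exchange argument, I would dispatch the case of \emph{over-}reporting. If agent $i$'s report $(D'_i,S'_i)$ causes the chosen matching to induce an execution using some edge $(i,j)$ with $j\notin D_i$ or some edge $(j,i)$ with $j\notin S_i$, then by definition $u_i(A(\vec x')) = -n\cdot m$, while the truthful outcome yields utility at least $0$ (under a truthful report, every edge used is valid, so $u_i\geq 0$ trivially) and at most $l$. Hence over-reporting is strictly dominated and it suffices to prove
\[
  u_i(A((D_i,S_i),\vec{x}'_{-i})) \;\geq\; u_i(A((D'_i,S'_i),\vec{x}'_{-i}))
\]
whenever $D'_i\subseteq D_i$ and $S'_i\subseteq S_i$.

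The core step, as sketched in the introduction, is an exchange argument on two maximum-weight matchings. I would first handle demand-underreporting: fix $S'_i=S_i$ and $D'_i\subsetneq D_i$, and let $\mathcal M$ be the matching the algorithm returns on the truthful graph $H(G(\vec x))$ and $\mathcal M'$ the one on the manipulated graph $H(G(\vec x'))$. From $\mathcal M'$ one obtains a perfect matching of $H(G(\vec x))$ by matching every ``missing'' pair $((i,j),0),((i,j),1)$ with $j\in D_i\setminus D'_i$ through its weight-$0$ identity edge in $E_1$; call the resulting matching $\tilde{\mathcal M}$ and observe $w(\tilde{\mathcal M})=w(\mathcal M')$. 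Conversely, from $\mathcal M$ one obtains a perfect matching $\tilde{\mathcal M}'$ of $H(G(\vec x'))$ by deleting the missing node pairs and rerouting any alternating cycles of $\mathcal M$ that passed through them via identity edges; the key bookkeeping step is to argue that the weight drop $w(\mathcal M)-w(\tilde{\mathcal M}')$ is exactly captured by the cycles of $\mathcal M$ that used a demand edge $(i,j)$ with $j\in D_i\setminus D'_i$, so that any strict gain in $i$'s utility from lying would force $w(\mathcal M')>w(\mathcal M)$ after the swap. The inequalities $w(\mathcal M)\geq w(\tilde{\mathcal M})=w(\mathcal M')$ and $w(\mathcal M')\geq w(\tilde{\mathcal M}')$ then collapse to equalities, whereupon the lexicographic tie-breaker selects consistent allocations and rules out a strict gain for $i$.

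Finally, for supply-underreporting I would invoke the symmetry of $H(G)$ in $G$'s orientation: reversing every edge of $G$ swaps the roles of demand and supply edges while preserving the structure of $H(G)$ (demand and supply are treated symmetrically in the definitions of $E_1,E_2$), so the supply argument reduces verbatim to the demand argument already established. The step I expect to be the main obstacle is the second exchange construction, namely verifying that rerouting cycles of $\mathcal M$ through identity edges produces a valid perfect matching of $H(G(\vec x'))$ whose weight loss accounts precisely for $i$'s dropped utility; the lexicographic tie-breaker is essential there to prevent two distinct optimal matchings from leaving a gap that a strategic $i$ could exploit in the equality case.
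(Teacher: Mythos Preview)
Your overall skeleton (perturb weights for a deterministic tie-break, dispose of over-reporting, handle demand-hiding by an exchange argument, then reduce supply-hiding to demand-hiding via edge reversal) matches the paper. The symmetry reduction for supply is exactly right. But the heart of the proof, the exchange argument for demand-hiding, does not work as you have sketched it, and this is where the paper invests most of its effort.

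The problem is your construction of $\tilde{\mathcal M}'$ from $\mathcal M$ alone. When you delete the node pair $((i,j),0),((i,j),1)$ and ``reroute via identity edges,'' you are forced to cascade: if $((i,j),0)$ was matched to $((j,k),1)$, then $((j,k),1)$ is now free, and matching it to $((j,k),0)$ frees whatever $((j,k),0)$ was matched to, and so on around the entire execution cycle containing $(i,j)$. The weight you lose is the weight of that whole cycle, not just agent $i$'s one exchange. So the inequality $w(\mathcal M')\ge w(\tilde{\mathcal M}')$ tells you only that $w(\mathcal M')\ge w(\mathcal M)-(\text{weight of the deleted cycles})$, which together with $w(\mathcal M)\ge w(\mathcal M')$ gives no control over $i$'s utility. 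In particular, the sentence ``any strict gain in $i$'s utility from lying would force $w(\mathcal M')>w(\mathcal M)$'' has no justification: total matching weight counts all agents' exchanges, and there is no reason a strict gain for $i$ cannot be offset by losses to others while keeping $w(\mathcal M')\le w(\mathcal M)$.

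What the paper does instead is build both new matchings \emph{simultaneously} from the union $\mathcal M_X\cup\mathcal M_{X-\{j\}}$ via the alternating path $P$ in that multigraph starting at $((i,j),0)$: set $\mathcal M'_X=(P\cap\mathcal M_X)\cup((\mathcal E\setminus P)\cap\mathcal M_{X-\{j\}})$ and symmetrically for $\mathcal M'_{X-\{j\}}$. This guarantees $w'(\mathcal M'_X)+w'(\mathcal M'_{X-\{j\}})=w'(\mathcal M_X)+w'(\mathcal M_{X-\{j\}})$ exactly. The perturbation is then not merely a generic lexicographic tie-break but is specifically designed so that the perturbed weight of a matching encodes \emph{which} demand edges of $i$ it uses (each demand edge $(i,j)$ contributes a unique dyadic term $2^{-\pi((i,j))}$). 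The paper then shows (and this is the delicate part, requiring a further modification of $\mathcal M_{X-\{j\}}$) that if $i$ strictly gains by dropping $j$, the path $P$ can be arranged to meet at most one of the ``extra'' items $i$ receives under the lie; since there are at least two such items, one of them lies off $P$, which forces $w'(\mathcal M'_X)\ne w'(\mathcal M_X)$ and yields the contradiction. Your sketch has neither the alternating-path construction nor the perturbation-encodes-allocation property, and both are needed.

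A smaller gap: your dismissal of over-reporting only covers the case where the fake edge is actually used in the output. You also need the irrelevance statement that adding an unused edge to the report does not change the selected matching; the paper states and proves this separately.
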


In many cases there is no unique optimal static execution. In particular, often
there exists some agent $i$ that in one optimal execution receives more items than in another optimal execution. In a
strategyproof mechanism we need to make sure that such an agent cannot misreport the items that he demands or owns in order to get the mechanism to output an execution that is better for him. To handle this issue we apply a consistent tie-breaking role to select an optimal static execution. 

In particular, recall that in the algorithm described in Theorem
\ref{theorem:optimal-static-poly} we computed an optimal static execution by constructing a bipartite graph $H(G)$ in which the nodes of the graph are the edges of the trading graph $G$. Recall that in $H(G)$ the weight of each edge in $E_1$ is $0$ and
the weight of each edge in $E_2$ is $1$. We now slightly perturb the weights of the edges
in $H(G)$ to make sure that the algorithm breaks ties consistently between optimal static executions that
give different utilities to the same agent. To this end, we first define a complete order $\pi$ over all edges $(i,j)$ such that $i\in N$, $j\in M$. The order assigns each possible edge a distinct natural number between $1$ and $|N|\cdot |M|$. Next, we define a graph $H'(G)$ which is identical to $H(G)$ except that the weight of an edge $(((i,j),0),((j,k),1))\in E_2$ such that $i,k\in N$ and $j\in M$ is perturbed as follows:
$$w'(((i,j),0),((j,k),1))=1 + \underbrace{2^{-\pi((i,j))}}_{\eps_{(((i,j),0),((j,k),1))}}.$$
Similarly, the weight of an edge $(((j,i),0),((i,k),1))\in E_2$ such that $i\in N$ and $j,k\in M$ is:
\full{$}$w'(((j,i),0),((i,k),1))=1 + 2^{-\pi((i,j))} = w'(((i,j),0),((j,k),1)).$\full{$}

Observe that for any matching $\mathcal M$ we have that $w'(\mathcal M) = |E_2\cap \mathcal M| + \sum_{e\in E_2\cap \mathcal M} \eps_e$. It is not hard to see that for any matching $\mathcal M$ the sum of perturbations is less than $1$ (i.e., $\sum_{e\in \mathcal M } \eps_e < 1$). Thus, we have that
$\lfloor w'(\mathcal M) \rfloor = |E_2\cap \mathcal M|$. Together with the fact that $w(\mathcal M) = |E_2\cap \mathcal M|$ this implies the following claim:
\begin{claim} \label{clm:weighted-to-unweighted}
If  ${\mathcal {M}}'$ is a maximum weight perfect matching of the graph $H'(G)$, then ${\mathcal {M}}'$ is also a
maximum weight perfect matching of the graph $H(G)$.
\end{claim}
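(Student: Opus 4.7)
The plan is to prove the claim by a short contradiction argument that exploits the integrality of $w$ and the fact that the perturbation is strictly less than $1$ on any matching.

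First I would recall the two facts already established in the text just before the claim: for any perfect matching $\mathcal{M}$ of $H(G)$ (equivalently, of $H'(G)$, since the two graphs have the same vertex/edge sets), the $w$-weight is the integer $w(\mathcal{M})=|E_2\cap\mathcal{M}|$, while the $w'$-weight satisfies
\[
w'(\mathcal{M}) \;=\; w(\mathcal{M}) + \sum_{e\in E_2\cap\mathcal{M}} \eps_e,
\]
where $0\le \sum_{e\in E_2\cap\mathcal{M}} \eps_e < 1$ because $\eps_e = 2^{-\pi(\cdot)}$ for distinct natural numbers $\pi(\cdot)\ge 1$ and therefore the total is bounded by $\sum_{k\ge 1} 2^{-k} = 1$ strictly from below (at most $|E|$ distinct terms, all of the form $2^{-k}$ with $k\ge 1$).

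Next I would argue by contradiction. Let $\mathcal{M}'$ be a maximum $w'$-weight perfect matching of $H'(G)$, and suppose it is not a maximum $w$-weight perfect matching of $H(G)$. Then there exists a perfect matching $\mathcal{M}^\star$ with $w(\mathcal{M}^\star) > w(\mathcal{M}')$, and since both values are integers, $w(\mathcal{M}^\star) \ge w(\mathcal{M}') + 1$. Chaining the two facts above yields
\[
w'(\mathcal{M}^\star) \;\ge\; w(\mathcal{M}^\star) \;\ge\; w(\mathcal{M}') + 1 \;>\; w(\mathcal{M}') + \sum_{e\in E_2\cap\mathcal{M}'} \eps_e \;=\; w'(\mathcal{M}'),
\]
which contradicts the maximality of $\mathcal{M}'$ in $H'(G)$. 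Hence $\mathcal{M}'$ is also a maximum $w$-weight perfect matching of $H(G)$, establishing the claim.

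I do not foresee a real obstacle here; the only thing to be careful about is to verify that the bound $\sum_{e\in\mathcal{M}} \eps_e < 1$ is strict, which follows because the exponents $\pi(\cdot)$ are distinct positive integers drawn from $\{1,\dots,|N|\cdot|M|\}$, so the sum is a finite subsum of $\sum_{k\ge 1} 2^{-k}=1$ and is therefore strictly less than $1$. This strictness is what lets the integer gap of $1$ in $w$ dominate the fractional perturbation, making the contradiction go through.
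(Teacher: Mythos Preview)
Your proposal is correct and is essentially the paper's argument: the paper observes that $\sum_{e\in\mathcal M}\eps_e<1$ and hence $\lfloor w'(\mathcal M)\rfloor = |E_2\cap\mathcal M| = w(\mathcal M)$, which immediately gives the claim; your contradiction via the integer gap is just an explicit unpacking of that floor observation. One small wording caveat: the $\eps_e$ values are not distinct across all edges of $E_2$ (several edges share the same exponent $2^{-\pi((i,j))}$), but in any \emph{matching} at most one such edge can appear since they all share the node $((i,j),0)$ (or $((j,i),0)$), so the terms in the sum are indeed distinct powers of $2$ and the strict bound $<1$ follows as you say.
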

Next, we show that any execution that corresponds to some maximum weight perfect matching in $H'(G)$ gives each agent the same utility:

\begin{proposition}\label{prop:unique}
For any two maximum weight perfect matchings of the graph $H'(G)$: ${\mathcal M},{\mathcal M}'$ and any two executions  $r,r'$ that correspond to
${\mathcal M}$ and ${\mathcal M}'$ respectively, for any agent $i$, $A_i(r) = A_i(r')$. In other words, the utilities of all agents are identical in $r$ and $r'$.
\end{proposition}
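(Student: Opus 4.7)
The plan is to show that any max-weight perfect matching of $H'(G)$ uniquely determines the set of (agent, item) adjacencies traversed by its associated execution, and that this set then determines each agent's received items. For an execution $r$, let $A(r) \subseteq N \times M$ denote the set of all $(i, j)$ such that agent $i$ and item $j$ are consecutive in some cycle of $r$ — that is, $(i,j)$ appears either as a demand edge used by the cycle or as the reversal of a supply edge used by the cycle.

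First I would rewrite $w'(\mathcal M)$ in a way that exposes a lex structure in $(|r|, A(r))$. Every edge of $E_2 \cap \mathcal M$ has the form $((e,0),(e',1))$, where $e$ and $e'$ share a single endpoint in $G$; the (agent, item) pair $v \in N \times M$ incident at that shared endpoint is uniquely determined, and a direct check against the two cases in the definition of $w'$ shows that this matching edge contributes a perturbation of exactly $2^{-\pi(v)}$. Summing over the matching,
\[
w'(\mathcal M) \;=\; 2|r| \;+\; \sum_{v \in A(r)} 2^{-\pi(v)},
\]
since in a cycle of length $2k$ there are $2k$ matching edges in $E_2\cap\mathcal M$ and $2k$ distinct adjacencies in $A(r)$, in bijection with one another.

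Next I would invoke the injectivity of $\pi$. Because $\pi$ takes distinct values in $\{1,\dots,|N|\cdot|M|\}$, the perturbation term is bounded by $\sum_{n=1}^{|N|\cdot|M|} 2^{-n} < 1$, so $\lfloor w'(\mathcal M)\rfloor = 2|r|$. Maximizing $w'$ therefore lex-maximizes the pair $(|r|,\sum_{v\in A(r)} 2^{-\pi(v)})$: it first picks out matchings with $|r| = |r_s|$ (consistent with Claim~\ref{clm:weighted-to-unweighted}), and then, among those, selects the adjacency set achieving the largest perturbation sum. But the map $A\mapsto\sum_{v\in A} 2^{-\pi(v)}$ is just a binary expansion indexed by $\pi$, hence is injective on subsets of $N\times M$. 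So for any two max-weight matchings $\mathcal M, \mathcal M'$ of $H'(G)$ we must have $A(r)=A(r')$.

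Finally, the items received by agent $i$ in an execution $r$ are precisely $A_i(r)=\{j\in D_i : (i,j)\in A(r)\}$, which is determined by $A(r)$ together with the input instance $\mathcal G$. Hence $A_i(r)=A_i(r')$ for every $i$, as required. The main bookkeeping step — verifying that each $E_2$ matching edge contributes a single clean $2^{-\pi(v)}$ perturbation at its shared (agent, item) vertex — is the one point to be careful about; it boils down to a two-case split depending on whether the vertex of $G$ shared by $e$ and $e'$ is an agent or an item, and in either case the (agent, item) pair incident to both edges is unique.
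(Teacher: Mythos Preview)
Your proof is correct and follows essentially the same idea as the paper's: both arguments hinge on the fact that the perturbations are distinct negative powers of two indexed by (agent, item) pairs, together with the observation that $S_i\cap D_i=\emptyset$ prevents a demand-side term $2^{-\pi((i,j))}$ from ever being reproduced on the supply side. The paper argues directly by contradiction on a single differing demand edge, whereas you package the same observation as the closed-form identity $w'(\mathcal M)=2|r|+\sum_{v\in A(r)}2^{-\pi(v)}$ and then invoke injectivity of binary expansions; your version is slightly more explicit and in fact yields the marginally stronger conclusion $A(r)=A(r')$, but the core mechanism is identical.
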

\begin{proof} 
Assume towards contradiction that there exist two perfect-matchings of the graph $H'(G)$: ${\mathcal M},{\mathcal M}'$ such that $w'({\mathcal M}) = w'({\mathcal M}')$ but for executions $r$ and $r'$ that correspond to ${\mathcal M}$ and ${\mathcal M}'$ there exists an agent $i$ and an item $j$ such that $ j \in A_i(r)$ but $j \notin A_i(r')$. This implies that there exists an edge $(((i,j),0),((j,k),1))\in E_2\cap {\mathcal M} \setminus {\mathcal M}'$ where $k\in N$. In this case, $w'({\mathcal M})$ has a  $2^{-\pi((i,j))}$ term that will be missing from $w'({\mathcal M}')$. We note that this term has a unique exponent, as the only other option to achieve this term is by having an edge $(((j,i),0),((i,a),1))\in E_2\cap {\mathcal M}'$ for $a\in M$. However this is impossible since an agent cannot both own and demand the same item. Furthermore, since the exponent is unique this term cannot be derived by adding different perturbations. Thus, we conclude that  $w'({\mathcal M}) \neq w'({\mathcal M}')$ in contradiction to our assumption.
\end{proof}
\xhdr{Strategyproofness} 
Denote by $A_s$ the algorithm that computes an optimal execution by choosing a maximum weight perfect matching of the graph $H'(G)$ defined above. We now prove Theorem \ref{theorem:static-DSIC} and show that $A_s$ is strategyproof. To this end, we show that $\forall i,\vec{x}'_{-i},x'_i,$
$u_i(A_s(x_i,\vec{x}'_{-i})) \ge u_i(A_s(x_i',\vec{x}'_{-i}))$, where $x_i=(D_i,S_i)$ is a truthful report. We first consider reporting items that are not in $D_i$ or $S_i$ respectively. Observe that if as a result of this misreport agent $i$ receives an item which is not in $D_i$  or need to give an item which is not in $S_i$, then by definition his utility will be $0$. To show that the agent cannot benefit from such misreport when this is not the case, we prove a type of irrelevancy property. We show that an agent cannot improve his utility by not reporting items that the he did not receive or did not give\full{:} \wine{. In the full version we observe:}
\begin{observation}\label{observe:only-used-edges}
	Fix some agent $i$, (not necessarily truthful) report $x'_i=(D'_i,S'_i)$ and reports vector $\vec{x}'_{-i}$ for the rest of the agents. Let $r=A_s(x'_i,\vec{x}'_{-i})$. Denote by $B_i(r)$ the set of items that agent $i$ gave in $r$ and recall that $A_i(r)$ is the set of items that agent $i$ received in $r$. For any $\tilde D_i \subseteq D_i' - A_i(r)$ and $\tilde S_i \subseteq S'_i - B_i(r)$ we have that $u_i(A_s((D'_i - \tilde D_i, S'_i - \tilde S_i),\vec{x}'_{-i})) = u_i(A_s(x'_i,\vec{x}'_{-i}))$.
\end{observation}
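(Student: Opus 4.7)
My plan is to leverage the fact that an edge of $G$ that is not used in the execution $r$ is necessarily matched to its own copy by a weight-$0$ edge of $E_1$ in the corresponding perfect matching, so removing such an edge from the trading graph costs nothing in matching weight. Write $G_1 := G(x'_i, \vec{x}'_{-i})$ and $G_2 := G((D'_i - \tilde D_i, S'_i - \tilde S_i), \vec{x}'_{-i})$, so that $G_2$ is obtained from $G_1$ by deleting the demand edges $(i,j)$ for $j \in \tilde D_i$ and the supply edges $(j,i)$ for $j \in \tilde S_i$. Let $\mathcal M_1$ be the maximum weight perfect matching of $H'(G_1)$ chosen by $A_s$, whose induced execution is $r$.

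By the correspondence used in Lemma \ref{lemma:perfectmatching-static-1}, an edge $e$ is used in $r$ if and only if $((e,0),(e,1)) \notin \mathcal M_1$. Every edge deleted in passing from $G_1$ to $G_2$ corresponds either to a demanded item that $i$ did not receive or to an owned item that $i$ did not give, so such an edge is unused in $r$, and hence its identity pair lies in $\mathcal M_1$. Let $\mathcal M_1'$ be the matching obtained from $\mathcal M_1$ by deleting all such identity pairs. Then $\mathcal M_1'$ is a perfect matching of $H'(G_2)$ with $w'(\mathcal M_1') = w'(\mathcal M_1)$, since we only removed weight-$0$ edges. Moreover, $\mathcal M_1'$ is optimal in $H'(G_2)$: any perfect matching $\mathcal M$ of $H'(G_2)$ extends to a perfect matching of $H'(G_1)$ of equal weight by re-adding the identity pairs for the deleted edges, so $w'(\mathcal M) \leq w'(\mathcal M_1) = w'(\mathcal M_1')$ by maximality of $\mathcal M_1$.

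To finish, let $r' = A_s((D'_i - \tilde D_i, S'_i - \tilde S_i), \vec{x}'_{-i})$ be the execution produced by some max weight matching $\mathcal M_2$ of $H'(G_2)$. Both $\mathcal M_2$ and $\mathcal M_1'$ are max weight matchings of $H'(G_2)$, so Proposition \ref{prop:unique} applied inside $H'(G_2)$ yields that their induced executions assign the same utility to agent $i$. On the other hand, $\mathcal M_1'$ and $\mathcal M_1$ agree on every non-identity matched pair, so the execution induced by $\mathcal M_1'$ consists of exactly the same cycles as $r$ and yields utility $u_i(r)$ for agent $i$. Chaining the two equalities gives $u_i(r') = u_i(r)$, as desired.

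The main subtlety, already neutralized by Proposition \ref{prop:unique}, is that $A_s$ may break ties differently on $G_1$ and on $G_2$, so the matching it outputs on $G_2$ need not literally be $\mathcal M_1'$; the perturbation-based tie breaking is precisely what guarantees that all max weight matchings of the same perturbed graph induce identical per-agent utilities, letting us pass between $\mathcal M_1'$ and $\mathcal M_2$ without loss.
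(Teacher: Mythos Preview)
Your proof is correct and follows essentially the same approach as the paper's: both argue that the deleted edges, being unused in $r$, are matched to their own copies by weight-$0$ edges, so that stripping those identity pairs yields a maximum weight perfect matching of the smaller graph with the same $E_2$-part, and then invoke Proposition~\ref{prop:unique} to transfer the utility equality to whatever matching $A_s$ actually selects on $G_2$. Your write-up is a touch more explicit in spelling out the ``extend by re-adding identity pairs'' direction of the optimality argument, but the two proofs are otherwise the same.
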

\full{
\begin{proof}
	Let ${\mathcal M}',\tilde {\mathcal M}$ be maximum weight perfect matchings when agent $i$ reports $x'_i$ and $\tilde x_i=(D'_i - \tilde D_i, S'_i - \tilde S_i)$ respectively. Let $\tilde E_i = \{(i,j): j\in \tilde D_i \} \cup \{(j,i): j\in \tilde S_i \}$. Note that
	$w'({\mathcal M}') \geq w'(\tilde {\mathcal M})$ as adding more edges to the graph $H'(G)$ can only increase the weight of the maximum weight perfect matching. Furthermore, for each $e\in  \tilde E_i$, we have that $((e,0)(e,1)) \in {\mathcal M}'$. This implies that $\tilde {\mathcal M}' = {\mathcal M} - \tilde E_i$ is a feasible matching when agent $i$ reports $\tilde x_i$. Furthermore, since the weight of all the edges in $\tilde E_i$ is $0$ we have that $w'({\mathcal M}') =w'(\tilde {\mathcal M}') \leq w'(\tilde {\mathcal M})$. Thus, we conclude that $w'({\mathcal M}') = w'(\tilde {\mathcal M})$.
	
	Recall that when we compute a corresponding execution in Lemma \ref{lemma:perfectmatching-static-1} we only use edges in $E_2$. Since, ${\mathcal M'}\cap E_2 =\tilde {\mathcal M}' \cap E_2$, we conclude that agent $i$ will have the same utility in any execution that corresponds to ${\mathcal M'}$ and in any execution that corresponds to $\tilde {\mathcal M}'$.  Finally, since both $\tilde {\mathcal M}'$ and $\tilde {\mathcal M}$ are maximum weight perfect matchings when agent $i$ reports $\tilde x_i$, by Proposition~\ref{prop:unique} we have that the utility of agent $i$ in any execution that corresponds to $\tilde {\mathcal M}$ and any execution that corresponds to $\tilde {\mathcal M}'$ is the same. Thus, we conclude that $u_i(A_s((D'_i - \tilde D_i, S'_i - \tilde S_i),\vec{x'}_{-i})) = u_i(A_s(x_i,\vec{x'}_{-i}))$.
	\end{proof}
}

We conclude that:
\begin{corollary} \label{cor:adding-reports}
	Any agent $i$ cannot increase his utility by reporting that he demands an item $j\notin D_i$ or that he owns an item $j\notin S_i$.
\end{corollary}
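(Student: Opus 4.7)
The plan is a case analysis on whether the algorithm, when given the misreport, actually puts agent $i$ on one of the falsely reported edges. Fix agent $i$ with true type $x_i=(D_i,S_i)$ and a misreport $x'_i=(D'_i,S'_i)$ obtained by adding items, so $D_i\subseteq D'_i$ and $S_i\subseteq S'_i$ (with at least one inclusion strict). Let $r = A_s(x'_i,\vec{x}'_{-i})$, write $A_i(r)$ for the items $i$ receives in $r$ and $B_i(r)$ for the items $i$ gives in $r$.

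First I would dispatch the easy case where $r$ actually exploits a false edge of agent $i$: either $A_i(r)\not\subseteq D_i$ or $B_i(r)\not\subseteq S_i$. By the definition of $u_i$, agent $i$ then receives an item outside his current true demand set, or gives an item outside his current true supply set, which forces $u_i(r)=-n\cdot m$. On the other hand, $A_s$ only outputs cycles of the reported trading graph, so on truthful input every item $i$ receives lies in $D_i$ and every item he gives lies in $S_i$; hence $u_i(A_s(x_i,\vec{x}'_{-i}))\geq 0 > -n\cdot m$, and the misreport is strictly worse than truthfulness.

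The second case is where $A_i(r)\subseteq D_i$ and $B_i(r)\subseteq S_i$, i.e., none of the extra reported items are actually used. I would set $\tilde D_i=D'_i\setminus D_i$ and $\tilde S_i=S'_i\setminus S_i$; then by the case assumption $\tilde D_i\cap A_i(r)=\emptyset$ and $\tilde S_i\cap B_i(r)=\emptyset$, so $\tilde D_i\subseteq D'_i - A_i(r)$ and $\tilde S_i\subseteq S'_i - B_i(r)$. Observation \ref{observe:only-used-edges} then yields $u_i(A_s((D'_i-\tilde D_i,\,S'_i-\tilde S_i),\vec{x}'_{-i}))=u_i(A_s(x'_i,\vec{x}'_{-i}))$, and since $D'_i-\tilde D_i=D_i$ and $S'_i-\tilde S_i=S_i$ this collapses to $u_i(A_s(x_i,\vec{x}'_{-i}))=u_i(r)$, so the misreport yields exactly the truthful utility.

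The main obstacle is bookkeeping rather than technique: in Case 2 one must verify cleanly that the hypotheses of Observation \ref{observe:only-used-edges} are met by the very definition of the case, and in Case 1 one must remember that the algorithm's output on any (in particular, truthful) input is automatically feasible with respect to the reported graph, so the truthful utility is non-negative. Together these two observations, combined with the $-n\cdot m$ penalty in the definition of $u_i$, finish the corollary.
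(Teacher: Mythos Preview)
Your proposal is correct and follows essentially the same two-case argument the paper sketches: the paragraph preceding Observation~\ref{observe:only-used-edges} handles the case where a false edge is actually used (invoking the penalty in the definition of $u_i$), and the corollary is then stated as an immediate consequence of Observation~\ref{observe:only-used-edges} for the case where the added items go unused. Your write-up simply spells out both cases explicitly, including the verification that $\tilde D_i\subseteq D'_i-A_i(r)$ and $\tilde S_i\subseteq S'_i-B_i(r)$ so that the hypotheses of the observation apply.
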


The main part of the proof is showing that an agent cannot benefit from not reporting some of the items in his demand set.
In Proposition \ref{proposition:DSICRemove} we show that for any report of the items that the agent owns he cannot benefit from hiding items in his demand set. Then in Proposition \ref{proposition:DSICRemoveSupply} we apply Proposition \ref{proposition:DSICRemove} on an instance in which each agent switches between the items he demands and the items that he owns. We show that for any demand report the agent cannot benefit from hiding some of the items that he owns. The two propositions together with Corollary \ref{cor:adding-reports} complete the proof of Theorem \ref{theorem:static-DSIC} showing that for any agent $i$, demand and supply reports $D'_i$ and $S'_i$ and reports of the other agents $\vec{x}'_{-i}$ it hold that $u_i(A_s((D'_i , S'_i),\vec{x}'_{-i})) \le u_i(A_s((D_i , S_i),\vec{x}'_{-i}))$. First by Corollary \ref{cor:adding-reports} we have that an agent can never benefit from including in his demand report items that are not in his true demand set and including in his supply report items that are not in his true supply. Then, for $D'_i\subseteq D_i,S'_i\subseteq S_i$ Proposition \ref{proposition:DSICRemove} guarantees us that $u_i(A_s((D'_i , S'_i),\vec{x}'_{-i})) \le u_i(A_s((D_i , S'_i),\vec{x}'_{-i}))$.
Finally we use Proposition \ref{proposition:DSICRemoveSupply} to get that for any $S'_i \subseteq S_i$: $ u_i(A_s((D_i , S'_i),\vec{x}'_{-i})) \leq   u_i(A_s((D_i , S_i),\vec{x}'_{-i}))$ as required. We now state and discuss Proposition \ref{proposition:DSICRemove} and Proposition \ref{proposition:DSICRemoveSupply}.

\begin{proposition}\label{proposition:DSICRemove}
	For every agent $i$, supply report $S'_i \subseteq S_i$, reports of the other agents $\vec{x}'_{-i}$ and $X \subseteq D_i$ we have that $u_i(A_s((X, S'_i),\vec{x}'_{-i})) \leq u_i(A_s((D_i, S'_i),\vec{x}'_{-i}))$. 
\end{proposition}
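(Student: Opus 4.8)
The plan is to fix the supply report $S'_i$ and the other agents' reports $\vec{x}'_{-i}$ and compare the two instances obtained when agent $i$ reports the full demand $D_i$ versus a subset $X\subseteq D_i$. Write $G_D$ and $G_X$ for the two trading graphs; since only $i$'s demand edges change and $X\subseteq D_i$, we have $G_X\subseteq G_D$, the difference being exactly the demand edges $(i,j)$ with $j\in D_i\setminus X$. Let $r_D,r_X$ be the executions output by $A_s$ on the two instances, and view each as a $0/1$ circulation (an Eulerian subgraph, as in Lemma \ref{lemma:perfectmatching-static-1}) $f_D,f_X$ on the edges of $G_D$; here the number of demand edges leaving node $i$ equals $u_i$. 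Suppose for contradiction that $u_i(r_X)>u_i(r_D)$. I would consider the signed circulation $g=f_X-f_D$, which is balanced at every node and takes values in $\{-1,0,1\}$; orienting each value-$1$ edge forward and each value-$(-1)$ edge backward yields an Eulerian residual digraph, which I decompose into simple directed cycles. Since the net change in the out-degree of $i$ summed over all cycles equals $u_i(r_X)-u_i(r_D)>0$ and each simple cycle changes $i$'s out-degree by at most one in absolute value, some cycle $\gamma$ increases $i$'s out-degree by exactly one.

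Next I would pin down the local structure of $\gamma$ at node $i$. A simple cycle meets $i$ once, using one incoming and one outgoing residual edge; a short case check shows that the out-degree of $i$ increases by exactly one only when the outgoing edge is a forward demand edge $(i,j^*)$ and the incoming edge is a forward supply edge, both belonging to $f_X$. In particular $j^*\in X$ (as $f_X$ uses only $X$-demand edges at $i$), and $\gamma$ contains no reversed demand edge incident to $i$. This structural fact is the crux: it guarantees that pushing $\gamma$ in either direction keeps us inside the correct instance. Concretely, $f_D\triangle\gamma$ (add the forward, delete the backward edges of $\gamma$) is a feasible circulation for $G_D$, since all its forward edges come from $f_X\subseteq G_X\subseteq G_D$; and $f_X\triangle\gamma$ is a feasible circulation for $G_X$, since the only edges it would add are backward edges of $\gamma$, none of which is a demand edge of $i$, so none lies in $G_D\setminus G_X$.

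Finally I would run the exchange argument on the perturbed weights $w'$. Set $\Delta=w'(f_D\triangle\gamma)-w'(f_D)$; because applying $\gamma$ toggles the same set of demand edges in opposite directions on the two circulations, $w'(f_X\triangle\gamma)-w'(f_X)=-\Delta$. Optimality of the matching chosen by $A_s$ on $G_D$ gives $\Delta\le 0$, while optimality on $G_X$ gives $-\Delta\le 0$; hence $\Delta=0$. Thus $f_D\triangle\gamma$ is itself a maximum-weight perfect matching for $G_D$, yet agent $i$ receives item $j^*$ in it (through the newly added edge $(i,j^*)$) whereas he does not in $f_D$ — contradicting Proposition \ref{prop:unique}, which asserts that all maximum-weight perfect matchings deliver each agent the same set of items. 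This contradiction establishes $u_i(A_s((X,S'_i),\vec{x}'_{-i}))\le u_i(A_s((D_i,S'_i),\vec{x}'_{-i}))$. The delicate points to get right are the single-cycle structural claim at $i$ (which secures feasibility of $f_X\triangle\gamma$ on the restricted instance) and the simultaneous use of optimality on both instances to force $\Delta=0$, so that the set-level uniqueness of Proposition \ref{prop:unique} can be invoked rather than a mere cardinality count.
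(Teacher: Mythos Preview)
Your argument is correct and takes a genuinely different route from the paper's. The paper reduces to removing a single item $j$ from the report, works in the auxiliary bipartite graph $H'(G)$, and analyses the alternating path between the two maximum matchings starting at the node $((i,j),0)$; the heart of that proof is the technical Claim~\ref{clm:px-one-Y}, which massages one of the matchings so that the alternating path meets at most one node of the form $((i,y),0)$ with $y\in Y$, after which the contradiction with Proposition~\ref{prop:unique} follows. You instead stay in the trading graph, compare the full reports $D_i$ and $X$ directly, decompose the difference circulation $f_X-f_D$ into simple cycles, and single out one cycle $\gamma$ whose local structure at $i$ (forward demand out, forward supply in) simultaneously guarantees feasibility of both swapped circulations. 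This is cleaner: it avoids the one-item reduction, the alternating-path machinery in $H'(G)$, and the auxiliary Claim~\ref{clm:px-one-Y}. What the paper's approach buys is that it never leaves the matching formulation, so optimality and weights are literally those of the algorithm; in your write-up you should make explicit the small translation step that $w'$ is well-defined on circulations --- namely that for any cycle decomposition of a circulation $f$ the induced matching in $H'(G)$ has weight $|E_c|+2\sum_{(i,j)\text{ demand edge in }f}2^{-\pi((i,j))}$, which depends only on the demand-edge set of $f$ --- so that ``$f_D$ maximises $w'$ over circulations in $G_D$'' is indeed equivalent to the algorithm's matching optimality.
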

\begin{proof}
	We define the function $f_i(X) = u_i(A_s((X,S'_i),\vec{x}'_{-i}))$ for every $X \subseteq D_i$. We claim that $f_i$ is a monotone set function for subsets of $D_i$. That is, $\forall X\subseteq D_i$ and $\forall Y\subseteq  X$ we have that  $f_i(X)\ge f_i(Y)$. Note that this concludes the proof of the proposition. Also note that in order to prove that $f_i$ is monotone it is sufficient to show that for any agent $i$, $X \subseteq D_i$ and $j \in X$, $f_i(X)\ge f_i(X - \{j\})$.
	
	Let $r_X=A_s((X,S'_i),\vec{x}'_{-i}))$ be the execution that the algorithm $A_s$ outputs when agent $i$ reports demand $X$. By Observation \ref{observe:only-used-edges}, for any $j\notin A_i(r_X)$ we have that $f_i(X)= f_i(X - \{j\})$ as required. Thus, for the rest of the proof we consider the case that $j\in A_i(r_X)$. 
	
	Let $r_{X-\{j\}}=A_s((X-\{j\},S'_i),\vec{x}'_{-i}))$. Denote by $Y=A_i(r_{X-\{j\}}) - A_i(r_X)$ the set of items that agent $i$ received when reporting $X-\{j\}$ but did not receive when reporting $X$. Assume towards contradiction that  $ f_i(X - \{j\}) > f_i(X)$. As $ f_i(X - \{j\}) \leq  f_i(X) - 1 + |Y|$, this implies that $|Y|\geq 2$. Let
	${\mathcal M}_X$  and ${\mathcal M}_{X - \{j\}}$ be the maximum weight perfect matchings that were computed as part of $A_s$ for demand reports $X$ and $X - \{j\}$ respectively (these are the matchings that are used to derive the executions $r_X$ and $r_{X - \{j\}}$). Roughly speaking, we will construct from the union of their edges two different matchings: ${\mathcal M}'_X$  and ${\mathcal M}'_{X - \{j\}}$ such that one is a valid perfect matching when agent $i$ reports $X$ and the other is a valid perfect matching when agent $i$ reports $X-\{j\}$. Since those matchings cover the same edges as ${\mathcal M}_X$  and ${\mathcal M}_{X - \{j\}}$ the sum of their weights is the same. This implies that one of the matchings ${\mathcal M}_X$ or ${\mathcal M}_{X - \{j\}}$ does not have the maximum weight. The crux of the proof is the following proposition, which we prove in \full{Appendix \ref{app-prop312}}\wine{the full version}:
	\begin{proposition} \label{prop:matching-properties}
		If $ f_i(X - \{j\}) > f_i(X)$, then, for the maximum weight perfect matchings ${\mathcal M}_X$ and ${\mathcal M}_{X-\{j\}}$ there exist matchings ${\mathcal M}'_X$ and ${\mathcal M}'_{X - \{j\}}$ such that:
		\begin{enumerate}
			\item ${\mathcal M}'_X$ is a valid perfect matching of the graph $H'(G((X,S'_i),\vec{x}'_{-i}))$
			(i.e., when agent $i$ reports $X$) and ${\mathcal M}'_{X - \{j\}}$ is a valid perfect matching of the graph $H'(G((X-\{j\},S'_i),\vec{x}'_{-i}))$ (i.e., when agent $i$ reports $X-\{j\}$).
			\item $w'({\mathcal M}_X) + w'({\mathcal M}_{X-\{j\}}) =w'({\mathcal M}'_X) +  w'({\mathcal M}'_{X - \{j\}})$
			\item $w'({\mathcal M}'_X)\ne w'({\mathcal M}_X)$.
		\end{enumerate}
	\end{proposition}
	Observe that the three statements of the proposition imply that either $w'({\mathcal M}'_X)> w'({\mathcal M}_X)$ or $w'({\mathcal M}'_{X - \{j\}}) > w'({\mathcal M}_{X-\{j\}})$. Since both ${\mathcal M}'_X$ and ${\mathcal M}_X$ are valid matchings of $H'(G((X,S'_i),\vec{x}'_{-i}))$ and both ${\mathcal M}'_{X - \{j\}}$ and ${\mathcal M}_{X-\{j\}}$  are valid matchings of $H'(G((X-\{j\},S'_i),\vec{x}'_{-i}))$ this in contradiction to the assumption that ${\mathcal M}_X$ and ${\mathcal M}_{X-\{j\}}$ are maximum weight perfect matchings. Thus, we conclude that $f_i(X - \{j\}) \leq f_i(X)$.
\end{proof}

Next, we show that an agent maximizes his utility by truthfully reporting all the items that he owns. In the following proofs we will compare the utility of agents in two different instances. For this purpose we use the notation
$u_i^{\mathcal G}(r)$ for the utility of agent $i$ in execution $r$ of instance $\mathcal G$.

\begin{proposition}\label{proposition:DSICRemoveSupply}
	For any agent $i$, demand report $D'_i \subseteq D_i$, reports of the other agents $\vec{x}'_{-i}$ and $S'_i \subset S_i$ we have that $u_i(A_s((D'_i, S'_i),\vec{x}'_{-i})) \leq u_i(A_s((D'_i, S_i),\vec{x}'_{-i}))$.
\end{proposition}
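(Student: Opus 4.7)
The plan is to reduce Proposition~\ref{proposition:DSICRemoveSupply} to Proposition~\ref{proposition:DSICRemove} via a trading-graph reversal. Given any instance $\mathcal{I}$, let $\mathcal{I}^R$ denote the reversed instance obtained by swapping each agent's demand and supply sets; its trading graph $G^R$ is $G$ with every edge reversed. A cycle $(i_1,j_1,\ldots,i_k,j_k,i_1)$ in $G$ corresponds to the cycle $(i_1,j_k,i_k,\ldots,i_2,j_1,i_1)$ in $G^R$, and this extends to a bijection $r\leftrightarrow r^R$ between executions of $\mathcal{I}$ and of $\mathcal{I}^R$. Because each participating agent gives exactly one item and receives exactly one item per cycle, the items agent $i$ receives in $r^R$ are precisely those he gives in $r$, and hence $u_i^{\mathcal{I}^R}(r^R) = u_i^{\mathcal{I}}(r)$ for every $i$.

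The key step is to show $u_i(A_s(\mathcal{I})) = u_i(A_s(\mathcal{I}^R))$ for every $i$. I would use the map $(e,b)\mapsto(e^R,1-b)$ between the vertex sets of $H(G)$ and $H(G^R)$, where $e^R$ is the reversal of $e$. This map sends $E_1$-edges to $E_1$-edges and $E_2$-edges to $E_2$-edges, so it bijects perfect matchings of $H(G)$ with those of $H(G^R)$ and preserves $|E_2\cap\mathcal{M}|$. Using the same order $\pi$ for both instances, a case analysis on the two forms of $E_2$-edges shows that the perturbation sum is also preserved: its contributions are indexed by the set of pairs $(i,j)$ such that agent $i$ either receives or gives item $j$ in the associated execution, and reversing the execution swaps the receive/give roles but leaves the combined set invariant. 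Consequently, maximum-weight perfect matchings of $H'(G)$ correspond under the bijection to maximum-weight perfect matchings of $H'(G^R)$, and together with Proposition~\ref{prop:unique} and the identity $u_i^{\mathcal{I}^R}(r^R)=u_i^{\mathcal{I}}(r)$ this delivers the desired equality of utility vectors.

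Given this reversal equivalence, the proposition is immediate. Set $\mathcal{G}_1=((D'_i,S'_i),\vec{x}'_{-i})$ and $\mathcal{G}_2=((D'_i,S_i),\vec{x}'_{-i})$. In the reversed instances $\mathcal{G}_1^R,\mathcal{G}_2^R$, agent $i$'s true demand is $S_i$ and true supply is $D_i$; he reports supply $D'_i\subseteq D_i$ in both and demand $S'_i\subseteq S_i$ and $S_i$ respectively, while the other agents report $\vec{x}'^{\,R}_{-i}$. Proposition~\ref{proposition:DSICRemove} applied in the reversed setting (with true demand $S_i$ playing the role of $D_i$, supply report $D'_i$ playing the role of $S'_i$, and $X=S'_i\subseteq S_i$) yields $u_i(A_s(\mathcal{G}_1^R))\le u_i(A_s(\mathcal{G}_2^R))$, and transporting this back via the reversal equivalence produces the desired $u_i(A_s(\mathcal{G}_1))\le u_i(A_s(\mathcal{G}_2))$.

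The main obstacle is the invariance of the perturbation sum under reversal: one must carefully verify that in the demand-then-supply case $(((i,j),0),((j,k),1))$ the term $2^{-\pi((i,j))}$ is indexed by the pair ``agent $i$ receives item $j$,'' whereas in the supply-then-demand case $(((j,i),0),((i,k),1))$ the same term $2^{-\pi((i,j))}$ is indexed by the pair ``agent $i$ gives item $j$.'' Because reversing an execution exactly interchanges the receive-pairs and give-pairs but preserves their union, the total perturbation coincides for $\mathcal{M}$ and $\mathcal{M}^R$, which is precisely what is needed to transfer optimality (and hence, by Proposition~\ref{prop:unique}, agent utilities) between $\mathcal{I}$ and $\mathcal{I}^R$.
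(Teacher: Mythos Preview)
Your proposal is correct and follows essentially the same route as the paper. The paper defines the reversed instance $\bar{\mathcal G}$, proves a Claim (numbered \ref{clm:reverse}) that $u_i^{\mathcal G}(A_s(\vec x'))=u_i^{\bar{\mathcal G}}(A_s(\vec{\bar x}'))$ via the matching bijection $\bar{\mathcal M}=\{(((k,j),0),((j,i),1)):(((i,j),0),((j,k),1))\in\mathcal M\}$---which is exactly the edge map induced by your vertex map $(e,b)\mapsto(e^R,1-b)$---shows this bijection preserves $w'$ (their pairing argument is equivalent to your observation that the perturbation multiset is indexed by the agent--item pairs participating in the execution, which is invariant under reversal), and then applies Proposition~\ref{proposition:DSICRemove} to $\bar{\mathcal G}$ just as you do.
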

\begin{proof}
	Denote the original instance of the problem by $\mathcal G$. We define the reversed instance $\bar {\mathcal G}$, in this instance the demand of each agent $a\in N$ is $\bar D_a=S_a$ and his supply is $\bar S_a = D_a$. For a vector of reports $\vec{x}'$ such that ${x}'_a=(D'_a,S'_a)$ for every agent $a\in N$ we define the reversed reports vector $\vec{\bar x}'$ such that for each agent $a$, $\bar x'_a=(S'_a,D'_a)$. 
	
	By Proposition \ref{proposition:DSICRemove} we have that for the instance $\bar {\mathcal G}$ and any reports vector of the other agents $\vec{\bar x}'_{-i}$, for any $S'_i \subset \bar D_i$ and any $D'_i \subseteq \bar S_i$:  $u^{\bar {\mathcal G}}_i(A_s((S'_i, D'_i),\vec{\bar x}'_{-i})) \leq u^{\bar {\mathcal G}}_i(A_s(( S_i, D'_i),\vec{\bar x}'_{-i}))$. To prove the proposition we will show in Claim \ref{clm:reverse} below that $u^{\bar {\mathcal G}}_i(A_s((S'_i, D'_i),\vec{\bar x}'_{-i})) = u^{ {\mathcal G}}_i(A_s((D'_i,S'_i), \vec{x}'_{-i}))$ and that $u^{\bar {\mathcal G}}_i(A_s(( S_i, D'_i),\vec{\bar x}'_{-i})) = u^{{\mathcal G}}_i(A_s(( D'_i, {S_i}), \vec{x}'_{-i}))$.
\end{proof}

We now observe the strong symmetry between an instance of our game and the reversed instance in which each agent swaps between the items he receives and the items he demands\full{:}\wine{. In the full version we prove:}
\begin{claim} \label{clm:reverse}
	For every agent $i$, reports vector $\vec{x}'$ for $\mathcal G$ and a reversed report vector $\vec{\bar x}'$ for $\bar{\mathcal G}$, we have that  $u^{{\mathcal G}}_i(A_s( \vec x '))=u^{\bar {\mathcal G}}_i(A_s(\vec{\bar x}'))$. 
\end{claim}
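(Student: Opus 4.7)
The plan is to exploit the natural reversal symmetry between $\mathcal G$ with reports $\vec{x}'$ and $\bar{\mathcal G}$ with reports $\vec{\bar x}'$. I would first observe that the trading graph $\bar G := G(\vec{\bar x}')$ is precisely the edge-reversal of $G := G(\vec{x}')$: since the reported demand of agent $a$ in $\bar{\mathcal G}$ equals his reported supply $S'_a$ in $\mathcal G$ (and vice versa), each demand edge $(i,j)$ of $G$ corresponds to a supply edge $(j,i)$ of $\bar G$, and each supply edge of $G$ to a demand edge of $\bar G$; consequently, cycles in $\bar G$ are reversals of cycles in $G$. I would then lift this to the bipartite graphs via a bijection $\phi$ between the vertex sets of $H'(G)$ and $H'(\bar G)$ defined by $\phi(e,0) = (\bar e,1)$ and $\phi(e,1) = (\bar e,0)$, where $\bar e$ denotes the reversal of $e$. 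A direct check shows $\phi$ sends $E_1$ edges to $E_1$ edges, and each $E_2$ edge $((e,0),(e',1))$ (with $e,e'$ consecutive through some vertex $v$) to $((\bar e',0),(\bar e,1))$, which is an $E_2$ edge of $H'(\bar G)$ since $\bar e'$ ends at $v$ while $\bar e$ begins at $v$. Hence $\phi$ induces a bijection between perfect matchings of the two graphs.

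The crux of the argument is to establish that $\phi$ preserves the total weight of any matching. Each $E_2$ bond carries weight $1+\epsilon_e$ where $\epsilon_e = 2^{-\pi((i,j))}$ depends only on the (agent, item) pair of the ``first'' edge $e$; since this pair is invariant under reversal, $\epsilon_{\bar e} = \epsilon_e$. Individual weights need not match: the bond $((e,0),(e',1))$ contributes $\epsilon_e$ while its image $((\bar e',0),(\bar e,1))$ contributes $\epsilon_{e'}$. However, summing along any cycle $e_1 \to e_2 \to \cdots \to e_{2k} \to e_1$ of the execution gives total $\epsilon$-contribution $\sum_{i=1}^{2k}\epsilon_{e_i}$, and the reversed cycle in $\bar G$ contributes exactly $\sum_{i=1}^{2k}\epsilon_{\bar e_i} = \sum_{i=1}^{2k}\epsilon_{e_i}$. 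Summing over all cycles of the execution yields $w'(\mathcal M) = w'(\phi(\mathcal M))$, so the maximum weight in $H'(G)$ equals the maximum weight in $H'(\bar G)$, and max weight perfect matchings correspond under $\phi$.

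To finish, let $r = A_s(\vec x')$ and $\bar r = A_s(\vec{\bar x}')$ with associated matchings $\mathcal M, \bar{\mathcal M}$. The matching $\phi^{-1}(\bar{\mathcal M})$ is max weight in $H'(G)$, and it corresponds to the execution $r'$ of $\mathcal G$ obtained by reversing each cycle of $\bar r$; Proposition~\ref{prop:unique} therefore gives $u^{\mathcal G}_i(r') = u^{\mathcal G}_i(r)$. Because each participating agent gives and receives exactly one item per cycle step, the items that agent $i$ receives in $\bar r$ (viewed as an execution of $\bar{\mathcal G}$) are exactly the items he gives in the reversed execution $r'$, so $|A_i(\bar r)| = |B_i(r')| = |A_i(r')|$. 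Chaining these equalities yields $u^{\bar{\mathcal G}}_i(A_s(\vec{\bar x}')) = u^{\mathcal G}_i(A_s(\vec x'))$, as required. The main technical obstacle is the cycle-level weight-preservation argument: the per-edge weight formulas look asymmetric between case-1 and case-2 $E_2$ edges, but summing around a cycle the asymmetry cancels, which is what makes the reversal symmetry compatible with the tie-breaking rule.
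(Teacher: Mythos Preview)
Your proposal is correct and follows essentially the same approach as the paper: the paper defines $\bar{\mathcal M}=\{(((k,j),0),((j,i),1)):(((i,j),0),((j,k),1))\in\mathcal M\}$, which is exactly the image of $\mathcal M$ under your vertex bijection $\phi$, and then uses Proposition~\ref{prop:unique} together with the ``items received equals items given'' observation in the reversed execution to conclude. The only minor presentational difference is in the weight-preservation step: rather than summing $\epsilon$'s around each cycle as you do, the paper pairs each $E_2$ bond $(((i,j),0),((j,k),1))\in\mathcal M$ directly with the image of its \emph{predecessor} bond $(((h,i),0),((i,j),1))$, namely $(((j,i),0),((i,h),1))\in\bar{\mathcal M}$, and observes that both carry the same perturbation $2^{-\pi((i,j))}$---this gives an edge-by-edge bijection preserving individual weights and is marginally cleaner than the cycle-sum argument, but the two are equivalent.
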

\full{
\begin{proof}
	Let $\mathcal M$ denote the maximum weight perfect matching in $H'(G(\vec{x}'))$ that was used in the run of the algorithm $A_s$. Let $\bar {\mathcal M} = \{ (((k,j),0),((j,i),1)):  (((i,j),0),((j,k),1))\in {\mathcal M} \}$. It is easy to see that $\bar {\mathcal M}$ is a perfect matching of the graph $H'(G(\vec{\bar x}'))$. Moreover, we claim that $w'(\bar {\mathcal M}) = w'({\mathcal M})$. To see why the two matchings have the same weight note that by construction ${\mathcal M}\cap E_1=\bar {\mathcal M} \cap E_1$ and that for every  edge $(((i,j),0),((j,k),1)) \in {\mathcal M}\cap E_2$  there exists a single edge $( ((h,i),0),((i,j),1)) \in {\mathcal M}\cap E_2$. By construction this implies that $( ((j,i),0),((i,h),1)) \in \bar {\mathcal M}\cap E_2$ and by the definition of the weights we have that $w'((((i,j),0),((j,k),1)))=w'(((j,i),0),((i,h),1))$. Thus we conclude that $w'(\bar {\mathcal M}) = w'({\mathcal M})$. Observe that the same argument shows that for $\bar{\mathcal M}'$ the perfect matching in $H'(G(\vec{\bar x}'))$ that was used in the algorithm $A_s$ and for the perfect matching $\mathcal M'$ for $H'(G(\vec{x}'))$ that is defined as above by reversing that matches of $\bar{\mathcal M}'$ we have that $w'(\bar {\mathcal M}') = w'({\mathcal M}')$. Thus, we conclude that $\bar {\mathcal M}$ is a maximum weight perfect matching of the graph $H'(G(\vec{\bar x}'))$. By Proposition \ref{prop:unique} this implies that the utility of agent $i$ in the execution that $A_s$ returns (which corresponds to $\bar {\mathcal M}'$ ) is the same as his utility from any execution that corresponds to $\bar {\mathcal M}$. Thus, to complete the proof it suffices to show the utility of all agents is the same in an execution $r$ that corresponds to ${\mathcal M}$ and an execution $\bar r$ that corresponds to $\bar {\mathcal M}$. Note that the utility of every agent $i$ in an execution $r$ that corresponds to ${\mathcal M}$ equals the number of items that he received: $|\{ j| \exists (j,k) \in E,~ (((i,j),0),((j,k),1)) \in {\mathcal M}\}|$.  Furthermore, we have that  
	$$\{ j| \exists (j,k) \in E,~ (((i,j),0),((j,k),1)) \in {\mathcal M}\} = \{ j| \exists (k,j)\in E,~ (((k,j),0),((j,i),1)) \in \bar {\mathcal M}\}.$$
	Notice that $|  \{ j| \exists (k,j)\in E,~ (((k,j),0),((j,i),1)) \in \bar {\mathcal M}\}|$
	is the number of items that agent $i$ gives in any execution $\bar r$ that corresponds to $\bar {\mathcal M}$. Since all exchanges are executed in cycles the number of items that a agent gives equals to the number of items that a agent receives and we conclude that the utility of agent $i$ in any execution $\bar r$ is  $| \{ j| \exists (k,j)\in E,~ (((k,j),0),((j,i),1)) \in \bar {\mathcal M}\} |$ as required.	
\end{proof}
}

	\section{Limitations of Dynamic Executions}\label{sec:limitiation-dynamic}
In Section \ref{subsec:strategyproofness} we showed that the efficient algorithm that computes the optimal static execution is both strategyproof and provides an $l$-approximation. In this section we prove that the best approximation ratio achievable by a strategyproof algorithm is $\frac{l+1}{2}$. Then, we consider the problem of finding the optimal execution from a strictly computational perspective and prove that unless P=NP the problem cannot be approximated within some small constant. 

 \begin{theorem}\label{THM:no-approx}
 There is no strategyproof algorithm which gives better approximation than $\frac{l+1}{2}$.
 \end{theorem}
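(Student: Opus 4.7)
The plan is to exhibit a family of hard instances $\mathcal{G}_l$ in which the dynamic optimum scales with $l$ but any strategyproof algorithm is forced, by a chain of manipulation arguments, to produce only a small constant number of exchanges. A natural candidate follows the chain construction used in Proposition~\ref{prop:static-l-approx}: let $c$ own $y$ and demand $\{x_1,\ldots,x_l\}$ so that $|D_c|=l$; let $a_1$ own $x_1$ and demand $y$; and for $i\geq 2$ let $a_i$ own $x_i$ and demand $x_{i-1}$. An optimal dynamic execution runs the cycle $(c,x_1,a_1,y,c)$ followed by $(c,x_i,a_i,x_{i-1},c)$ for $i=2,\ldots,l$, delivering $c$ every demanded item and achieving social welfare $2l$. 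By tuning this instance (e.g.\ duplicating the central agent or calibrating the length of the chain so that the dynamic optimum matches $l+1$ rather than $2l$, while still leaving only a 2-exchange cycle realizable in a single pass), I would obtain the exact ratio $\frac{l+1}{2}$.

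For the impossibility itself, I would suppose towards contradiction that a strategyproof algorithm $A$ achieves approximation strictly better than $\frac{l+1}{2}$ on $\mathcal{G}_l$, and study its output under a family of misreports in which agent $c$ hides one or more items from his demand set. Strategyproofness forces $c$'s truthful utility to weakly dominate his utility under every such misreport. An irrelevancy-style argument analogous to Observation~\ref{observe:only-used-edges} lets me strip $A$'s execution down to the cycles that actually touch the items $c$ receives, so the binding constraints come from misreports that hide exactly the items $c$ \emph{would} have received. I would then combine $A$'s execution on the truthful instance with its execution on the misreported instance via a swap-and-combine argument in the style of Proposition~\ref{proposition:DSICRemove}: pairing up the cycles from the two executions produces either a profitable manipulation for $c$ or for one of the helpers $a_i$, contradicting strategyproofness. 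Summing the resulting per-agent caps yields the bound that $A$'s social welfare on $\mathcal{G}_l$ cannot exceed a small constant, and dividing the dynamic optimum by this cap delivers the claimed ratio.

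The main obstacle will be converting the strategyproofness constraint into the tight quantitative bound $\frac{l+1}{2}$ rather than a weaker constant such as $l/C$ for some large $C$. Because the trading graph evolves after every cycle, hiding a single demanded item propagates nontrivially through the remainder of the execution, and one must show that a single well-chosen misreport — or at most a constant number of them combined via pigeonhole — suffices to eliminate all but a constant number of $A$'s exchanges. The calibration of the helper chain so that exactly one ``useful'' first cycle remains after any misreport is the delicate step: it is what lets strategyproofness act as a per-item constraint on $A$, rather than merely a global one, and so produces the precise factor of $\tfrac{l+1}{2}$ predicted by the theorem.
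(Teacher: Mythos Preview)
Your proposal has a genuine gap: it is missing the key forcing mechanism that makes the paper's argument work, and the techniques you borrow from Section~3 do not apply here.

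The irrelevancy and swap-and-combine arguments (Observation~\ref{observe:only-used-edges}, Proposition~\ref{proposition:DSICRemove}) are facts about the \emph{specific} matching-based algorithm $A_s$; they say nothing about an arbitrary strategyproof algorithm $A$. In your chain instance, nothing prevents $A$ from simply running the full dynamic sequence and giving $c$ all $l$ items. If $c$ hides some demanded item, $c$ can only do worse, so there is no manipulation to exploit. Your plan to ``sum per-agent caps'' to bound $A$'s welfare by a small constant has no leverage, because strategyproofness alone places no upper bound on what a truthful agent receives.

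The paper's proof works by comparing \emph{two different instances}. In the first instance $\mathcal G$, the approximation guarantee forces some agent $i_k$ to receive at least two items (since the optimum is $l^2+l$ but only $2l$ agents can ever trade). Then a second instance $\mathcal G'$ is built by adding \emph{one extra demand edge} for $i_k$ that closes a very long path into a giant cycle of length exceeding $\frac{l+1}{2}\cdot(l^2+l)$. The approximation guarantee now forces $A$ to execute that giant cycle---but doing so structurally limits $i_k$ to a single item. Hence in $\mathcal G'$ agent $i_k$ strictly gains by hiding the extra demand (reverting to his report in $\mathcal G$), contradicting strategyproofness. The long dangling paths, absent from your construction, are exactly what calibrates the bound to $\frac{l+1}{2}$: their length is chosen so that any $\alpha<\frac{l+1}{2}$ approximation \emph{must} use the giant cycle. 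Your proposal never introduces such a competing high-welfare option, so the approximation hypothesis is never brought to bear on the misreported instance.
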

\full{\begin{proof}} \wine{\begin{proofsk}}
Consider the following instance $\mathcal G$: the set of agents contains two subsets of cardinality $l$: $N=\{ i_1,i_2,\ldots, i_l \}$ and $N'=\{ i'_1,i'_2,\ldots, i'_l \}$. The set of items contains two subsets of cardinality $l$: $M = \{ j_1,j_2,\ldots, j_l \}$ and $M' = \{ j'_1,j'_2,\ldots, j'_l \}$. For $1 \leq k \leq l$ the demand of agent $i_k$ is $D_{i_k} = M$ and his supply is $S_{i_k} =\{j'_k\}$. For $1 \leq k \leq l$ the demand of agent $i'_k$ is $D_{i'_k} = \{ j'_k\}$ and his supply is $S_{i'_k} = \{ j_k\}$. The instance $\mathcal G$ also includes $\frac{l+1}{2}(l^2+l) \cdot l$ extra agents that are partitioned into $l$ groups. There are also extra items such that the demand and supply of each group of  $\frac{l+1}{2}(l^2+l)$ agents creates a path that ends in an item in $M'$. \full{Formally, for each item $j'_k \in M'$, we have a set of agents $N_k=\{ i_k^t: 1\le t\le \frac{l+1}{2}(l^2+l) \}$ and a set of items $M_k=\{ j_k^t: 1\le t\le \frac{l+1}{2}(l^2+l) \}$. Agent $i_k^t$ has item $j_k^t$ and demands item $j_k^{t+1}$ for $1\le t\le \frac{l+1}{2}(l^2+l)-1$, and agent $i_k^{\frac{l+1}{2}(l^2+l)}$ has item $j_k^{\frac{l+1}{2}(l^2+l)}$ and demands item $j'_k\in M'$.}  We illustrate the corresponding trading graph in Figure ~\ref{fig:dsic-no-approximation-better-than-l}. 

 \begin{figure}[t]
 	\centering
 	\includegraphics[width=.4\textwidth]{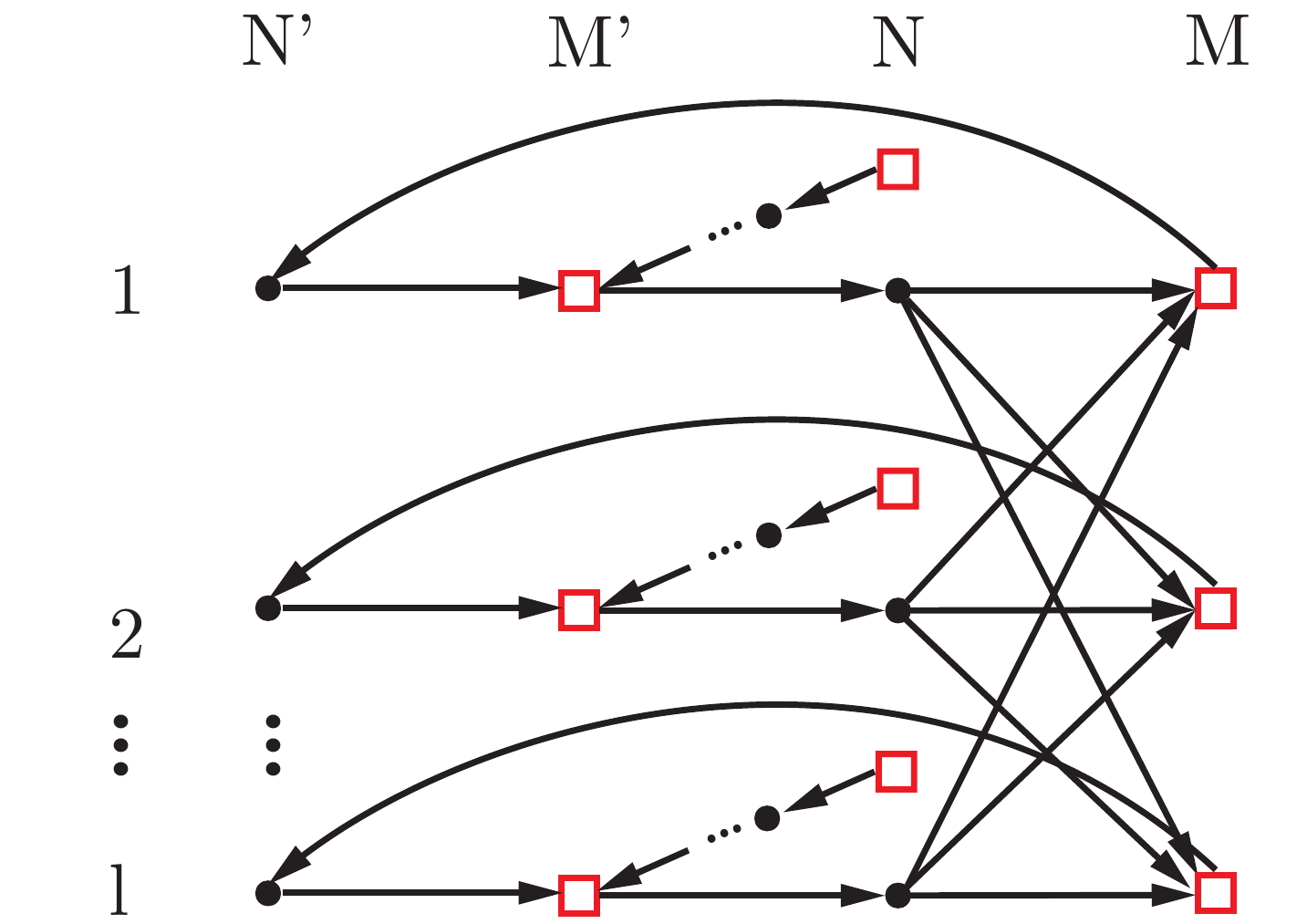}
 	\caption{ The trading graph for an instance showing that no strategyproof algorithm can attain an approximation ratio better than $\frac{l+1}{2}$.}
 	\label{fig:dsic-no-approximation-better-than-l}
 \end{figure}

 The optimal execution for the instance $\mathcal G$ first executes the $l$ cycles in which each pair of agents $i_k$ and $i'_k$ swap items $j'_k$ and $j_{k}$ between them. There are $l$ such cycles and in each cycle there are 2 exchanges. Then, it executes $l-1$ cycles with the items of $M$. The number of exchanges in each cycle is $l$ so the total number of exchanges is $ (l-1)\cdot l + 2l= l^2+l$. \full{This is the optimal execution as all the agents in $N'$ and $N$ receive all the items that they demand while the agents in $N_k$ for $1\le k\le l$ cannot receive any item since they do not take part in any cycle in the graph.
 }Assume towards contradiction that there exists an algorithm that achieves an approximation ratio $\alpha < \frac{l+1}{2}$. 
 Note that in $\mathcal G$ such an algorithm must allocate at least one agent two or more items. \full{The reason is that the number of agents in $\mathcal G$ that can receive an item is $2\cdot l$ and hence the number of exchanges in an execution that only allocates each agent a single item is $2\cdot l$. Thus the approximation ratio achieved by an algorithm that allocates each agent at most single item is at least $ \frac{l^2+l}{2l} =\frac{l+1}{2}$. 
 }

 We conclude that there exists an agent $i_k \in N$ that is allocated by the algorithm in the instance $\mathcal G$ at least two items.  Now consider an instance $\mathcal G'$ which is identical to $\mathcal G$ except that agent $i_k$ also demands item $j_k^{1}$. This means that the trading graph now has a giant cycle of size $\frac{l+1}{2}(l^2+l)$ and since the algorithm guarantees an approximation ratio better than $\frac{l+1}{2}$ it has to execute this cycle. \wine{In this case $i_k$ can only participate in the giant cycle and hence only gets a single item. Thus, agent $i_k$ can increase his utility by not reporting $j_k^1$.}\full{Notice that if the giant cycle was the first cycle executed that includes agent $i_k$, then agent $i_k$ cannot participate in any more exchanges since now the only item he owns is $j_k^{1}$ and no other agent demands this item. If in the first cycle agent $i_k$ gave item $j'_k\in M'$ to agent $i'_k \in N'$, then the giant cycle cannot be executed since now item $j'_k\in M'$ is in the possession of agent $i'_k \in N'$ that does not demand any other item.
  
 Thus,  in any execution that executes the giant cycle the utility of agent $i_k$ is $1$. While the utility of agent $i_k$  in the instance $\mathcal G$ is at least $2$. Hence agent $i_k$ can increase his utility by not reporting item $j_k^1$. Thus, any algorithm that guarantees approximation ratio $\alpha < \frac{l+1}{2}$ is not strategyproof.}
\full{\end{proof}} \wine{\end{proofsk}}
 
\subsection{Pareto Efficiency and Strategyproofness}
An execution $r$ is Pareto efficient if for any other execution $r'$ there exists an agent $i$ such that $u_i(r') < u_i(r)$. We leave open the question of whether there exists a strategyproof algorithm that always returns a Pareto efficient execution. In any case, we show that even if such an algorithm exists, its performance is quite poor:
\begin{proposition} \label{prop-pareto}
Any algorithm that is strategyproof and returns a Pareto efficient execution cannot guarantee an approximation ratio better than $\Theta(n)$. 
\end{proposition}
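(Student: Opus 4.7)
The plan is a two-instance manipulation argument tailored to our dynamic setting. I would construct a family of instances $\{\mathcal{G}_n\}$ on $\Theta(n)$ agents for which the optimal welfare is $\Theta(n)$, yet any strategyproof mechanism that also returns a Pareto efficient execution must achieve welfare only $O(1)$ on at least one member of the family, yielding the claimed $\Theta(n)$-approximation lower bound.

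Concretely, I would build a gadget with a distinguished agent $i^*$ that has two incompatible ``modes'' of trade. The first is a \emph{giant cycle} of length $\Theta(n)$ that passes through $i^*$ and a long chain of auxiliary agents $c_1,\dots,c_k$ (with $k=\Theta(n)$); each $c_t$ owns a single item that is demanded only by $c_{t+1}$, so that the chain agents can receive an item only when the giant cycle is executed in full. The second is a small collection of short cycles local to $i^*$ that, when executed, route $\Theta(n)$ items through $i^*$ and grant him utility $\Theta(n)$, but which share a critical supply edge with the giant cycle so that executing the short cycles destroys the giant cycle. I would then show that Pareto efficiency forces the mechanism on $\mathcal{G}_n$ to execute the giant cycle: every execution that omits it leaves every $c_t$ with utility $0$, and a short case analysis shows that any such execution is Pareto-dominated by one that includes the giant cycle (which weakly improves $i^*$'s partners while strictly improving every $c_t$). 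Hence on $\mathcal{G}_n$ the mechanism gives $i^*$ utility only $O(1)$. I would next consider a misreport instance $\mathcal{G}_n'$ in which $i^*$ drops the unique demand edge of his that closes the giant cycle; in $\mathcal{G}_n'$ the giant cycle no longer exists, so the short cycles are the only Pareto efficient option and $i^*$ obtains utility $\Theta(n)$. Comparing utilities across the two instances contradicts strategyproofness, and since the optimal execution on $\mathcal{G}_n$ has welfare $\Theta(n)$, the approximation ratio of any such mechanism is $\Omega(n)$.

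The main obstacle is engineering the gadget so that Pareto efficiency genuinely forces the giant cycle on $\mathcal{G}_n$ while the misreport in $\mathcal{G}_n'$ cleanly eliminates that cycle without disturbing the short cycles. I would address this by (a) ensuring that each $c_t$ has no trading partner outside the giant cycle, so that every execution omitting the cycle is Pareto-dominated; and (b) choosing the demand edge that $i^*$ drops to be used exclusively by the giant cycle, so that removing it leaves every short cycle intact. Both properties should be verifiable by a direct case analysis of the cycles available in the trading graph of the gadget, analogous to (but more delicate than) the giant-cycle construction used in the proof of Theorem \ref{THM:no-approx}.
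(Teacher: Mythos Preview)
Your plan has the right high-level shape (two instances, a giant cycle, manipulation by dropping a demand edge), but the central step --- that Pareto efficiency forces the giant cycle on $\mathcal G_n$ --- is self-defeating with the gadget you describe. You posit that the short cycles give $i^*$ utility $\Theta(n)$ while the giant cycle gives him only $O(1)$, and that the two are mutually exclusive. But then the short-cycle execution is \emph{not} Pareto-dominated by any execution containing the giant cycle: switching to the giant cycle strictly hurts $i^*$ himself. Hence on $\mathcal G_n$ the short-cycle execution is itself Pareto efficient, and a strategyproof Pareto-efficient mechanism is free to choose it; no violation of strategyproofness follows, and the claim that ``any such execution is Pareto-dominated by one that includes the giant cycle'' cannot hold.

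There is a second, independent problem. Even if you could force the mechanism off the giant cycle on the true instance, the short cycles you describe already carry welfare $\Theta(n)$ (agent $i^*$ alone receives $\Theta(n)$ items), so executing them yields approximation ratio $O(1)$, not $\Omega(n)$. The paper's construction avoids both pitfalls by making the short cycles give the distinguished agent only a \emph{constant} utility (namely $2$, just one more than the giant cycle would give him), and by invoking Pareto efficiency on the \emph{misreport} instance --- where the giant cycle is absent and the two short cycles are the unique Pareto-efficient outcome. Strategyproofness then forces the mechanism to give the agent utility at least $2$ on the true instance, which rules out the giant cycle there; the $\Omega(n)$ gap follows because those short cycles have only constant total welfare while the optimal (giant-cycle) execution has welfare $\Theta(n)$.
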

\begin{proof}
Consider the following $n$-agent instance $\mathcal G$. In this instance we have 3 agents $i_1,i_2,i_3$ such that for agent $i_1$, $D_{i_1} =\{ a,b \}$ and $S_{i_1}=\{c\}$. For agent $i_2$,  $D_{i_2} = \{ b\}$ and $S_{i_2}= \{a\}$. For agent $i_3$, $D_{i_3}= \{c\}$ and $S_{i_3}=\{b\}$. The instance also include a sequence of $n-3$ agents that in the trading graph take part in a long path that starts from item $p'_1$ and end in item $c$: $P=(p'_1,p_1,p'_2,\ldots,p_{n-3},c)$. In figure~\ref{fig:lie-by-hide} we illustrate the trading graph for the instance $\mathcal G$.
\begin{figure}[t]
	\centering
	\includegraphics[width=.6\textwidth]{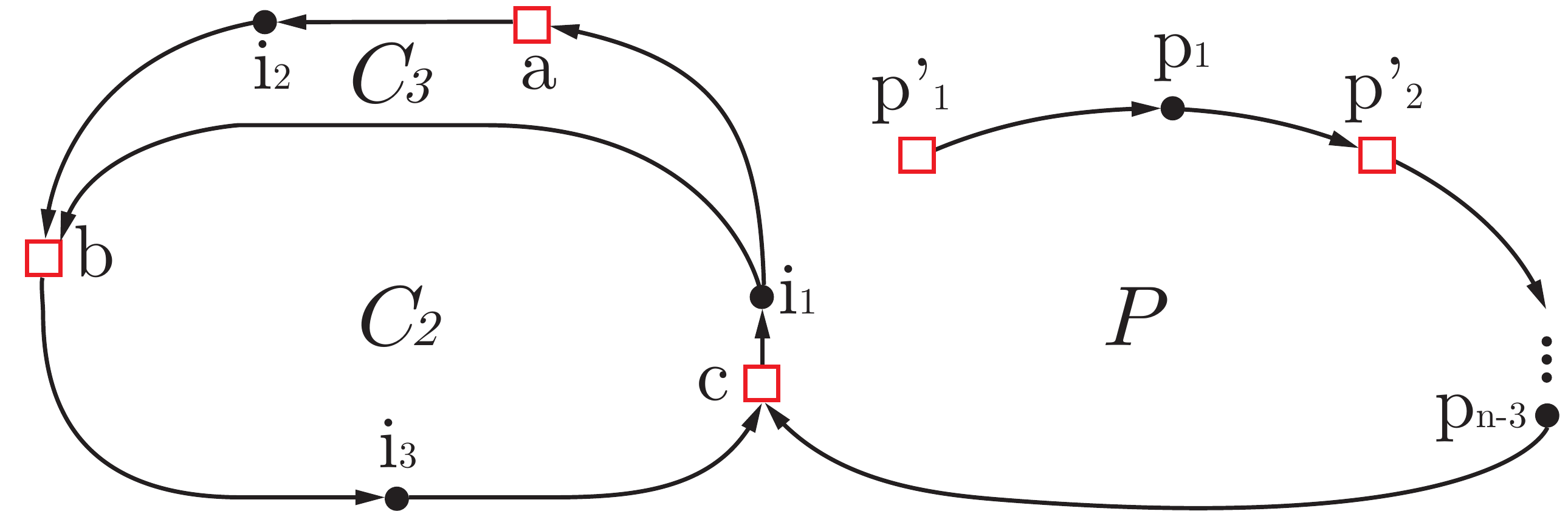}
	\caption{The trading graph for instance $\mathcal G$ in the proof of Proposition \ref{prop-pareto}.}
		\label{fig:lie-by-hide}
\end{figure}

In this instance the only (dynamic) Paerto efficient execution first executes the cycle $C_2=(i_1,b,i_3,c,i_1)$ and then executes the cycle $C_3=(i_1,a,i_2,b,i_1)$. The utility of agent $i_1$ in this execution is $2$. Consider the case that agent $i_1$ also demands item $p'_1$. Now, the graph has a giant cycle $C_1 = (p'_1,p_1,p'_2,\ldots,p_{n-3},c,i_1,p'_1)$ that includes $n-2$ agents and $n-2$ items. Note that it is impossible to execute both cycles $C_1$ and $C_2$ since in $C_2$ agent $p_{n-3}$ receives item $c$ and in $C_2$ agent $i_3$ receives item $c$ and $c$ is the only item that agent $p_{n-3}$ and $i_3$ demand. This implies that the algorithm cannot execute cycle $C_1$ as in this case the utility of agent $i_1$ would be $1$ and he can increase his utility by not reporting that he demands item $p'_1$. Thus, the algorithm has to execute first $C_2$ and then $C_3$ which accumulates to a total of $4$ exchanges where the optimal execution performs $n-2$ exchanges. \footnote{Observe that the problem here is that because of Pareto Efficiency the algorithm has to execute in the first instance both $C_2$ and $C_3$. In comparison,  an optimal static execution will only execute one cycle in this instance and hence agent $i_i$ would not be able to benefit by misreporting in the modified instance.}
	\end{proof}
\full{\subsection{Computational Hardness}
In this section we discuss the problem of computing an optimal execution from a purely computational perspective. 
We show that computing the optimal execution is NP-hard by reducing from the known NP-Complete problem of 3D-matching \footnote{Abraham et al. \cite{abraham2007clearing} also reduce from 3D-matching, however, our reductions are inherently different. Specifically, the hardness in \cite{abraham2007clearing} stems from limiting the size of the cycles whereas we have no such limitation.} :
\begin{definition} [3D-matching]
Let $X, Y,$ and $Z$ be finite, disjoint sets, of size $n$ ($|X|=|Y|=|Z|=n$) and let $T$ be a subset of $X \times Y \times Z$. Does there exist a subset $S \subseteq  T$ of size $n$ such that for any two distinct triplets $(x_1, y_1, z_1), (x_2, y_2, z_2) \in S$, we have $x_1 \ne x_2, y_1 \ne y_2,$ and $z_1 \ne z_2$?
\end{definition}	

It is known that the 3D-matching problem is also hard to approximate within a small constant factor \cite{kann1991maximum,petrank1994hardness}. In \full{Appendix \ref{app:proofs-4}}\wine{the full version} we extend our reduction to show that the problem of computing an optimal execution is also hard to to approximate within a small constant factor (i.e., the problem is APX-hard).

\begin{theorem}\label{thm:nphard}
	The problem of computing an optimal execution is NP-hard.
\end{theorem}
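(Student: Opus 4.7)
The plan is to reduce from 3D-matching. Given an instance $(X,Y,Z,T)$ with $|X|=|Y|=|Z|=n$, I will construct a trading graph $\mathcal{G}$ and a threshold $W$ so that $\mathcal{G}$ admits an execution of welfare at least $W$ if and only if the 3D-matching instance has a perfect matching of size $n$.

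The basic ingredients are as follows. For each element $e\in X\cup Y\cup Z$, I would introduce an \emph{element agent} $a_e$ whose only initial item is a single, unique item $i_e$. For each triplet $t=(x,y,z)\in T$, I would introduce a \emph{triplet gadget}: a small collection of triplet-agents and triplet-specific items arranged so that the only initial cycle touching the gadget passes through $a_x$, $a_y$, and $a_z$ and contributes a fixed number $\Delta$ of exchanges. Because $a_e$ owns exactly one item initially, and the items it receives in a triplet cycle are triplet-specific items that by construction are demanded by no other agent, $a_e$ can give an item at most once and therefore participates in at most one cycle during the entire execution. With $W=n\Delta$, the forward direction is immediate: given a 3D-matching $\mathcal{S}$ of size $n$, the cycles $\{C_t\}_{t\in \mathcal{S}}$ are pairwise edge-disjoint since the element agents are distinct across triplets of $\mathcal{S}$, so they can be executed sequentially to yield welfare exactly $n\Delta$. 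For the reverse direction, one argues that in any execution with welfare at least $n\Delta$, the one-shot property of element agents, together with the fact that every cycle involving any element agent must be a full triplet cycle, forces the executed cycles to correspond to a collection of triplets that saturates every element, i.e., a perfect 3D-matching.

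The main obstacle is handling the dynamic nature of executions. Once a cycle fires, its demand edges flip to supply edges, which could in principle combine with untouched edges to form new, unintended cycles---for instance cycles that splice element agents of one triplet with the gadget items of another, or that extract extra exchanges from flipped edges inside a single gadget. Ruling these out requires a careful gadget design (for example, making every gadget-internal item demanded only by the specific element agent of its own triplet, via triplet-specific item copies) together with an invariant showing that, after executing any subset of triplet cycles in any order, no cycle using both original and flipped edges exists beyond what is already permitted by the one-shot element-agent budget. Establishing this rigidity of the gadget under dynamic rewriting is the key technical step of the reduction; combined with the matching correspondence above it yields the NP-hardness, and with suitable parameter choices the same template can be strengthened to give the APX-hardness mentioned in the paper.
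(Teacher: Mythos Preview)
Your plan has a genuine gap that goes to the heart of why this problem is hard. You design the gadget so that every element agent $a_e$ owns a single item, receives only a triplet-specific item that nobody else demands, and so ``participates in at most one cycle during the entire execution.'' You then treat the dynamic nature of the model as an \emph{obstacle} and aim to rule out any cycle that uses a flipped edge. But if you succeed at this, your construction has the property that every feasible execution has the same welfare as some \emph{static} execution (a set of edge-disjoint cycles in the original trading graph), and hence the optimal execution coincides in welfare with the optimal static execution. The paper proves (Theorem~\ref{theorem:optimal-static-poly}) that the optimal static execution can be computed in polynomial time for \emph{every} instance via a maximum-weight perfect matching in $H(G)$. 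So on your instances the optimal execution is poly-time computable, and your reduction cannot establish NP-hardness. Concretely: if the only cycles in your trading graph are the triplet cycles $C_t$, then the maximum Eulerian subgraph of $G$ (which is what the poly-time static algorithm finds) is exactly a maximum collection of edge-disjoint triplet cycles, which would solve 3D-matching in polynomial time --- a contradiction. Hence either your gadget necessarily admits unintended smaller cycles (so the correspondence with 3D-matching breaks), or no such gadget exists.

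The paper's reduction works precisely because it does the \emph{opposite} of what you propose: it makes the dynamics \emph{essential}. In the paper's construction the optimal execution of size $8n$ is only attainable by first executing ``small $X$-cycles'' that flip an edge $((xyz)_2,(xyz)_7)$, and then using that flipped edge in a subsequent ``$Y$-cycle''; the count $8n$ is reached iff one can choose $n$ such pairs consistently, which encodes a perfect 3D-matching. The static optimum on the same graph is strictly smaller and tells you nothing about the matching. To fix your argument you would need a gadget in which the flipped edges are not a nuisance to be eliminated but the mechanism through which the 3D-matching constraint is enforced.
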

\noindent \emph{Proof.}
Recall that computing the optimal static execution can be done in polynomial time. This means that to show hardness for computing the optimal dynamic execution requires us to devise a very careful reduction in which the optimal execution has to execute specific cycles at the first round to be able to execute other cycles in the next round. Our construction is defined as follows: given an instance $X,Y,Z$ and $T\subseteq X \times Y \times Z$  of the 3D-matching problem, we will construct the following instance for the problem of computing an optimal execution:

\begin{enumerate}
	\item For every $(x,y,z)\in T$, we have agents $(xyz)_l\in N$ for~ $l\in\{2,4,6\}$ and
	items $(xyz)_l\in M$ for~$l\in\{1,3,5,7\}$ such that $S_{(xyz)_2} = \{(xyz)_1\},~ D_{(xyz)_2} = \{(xyz)_3,(xyz)_7 \}$,~
	$S_{(xyz)_4} = \{(xyz)_3\}$,~ $D_{(xyz)_4} = \{y_1\}$,~
	$S_{(xyz)_6} = \{(xyz)_5\}, D_{(xyz)_6} = \{(xyz)_7\}$.
	\item
	For every $y\in Y$, we have an item $y_1\in M$ and an agent $y_2 \in N$, such that  
	$S_{y_2}=\{ y_1\}
	$
	and $D_{y_2}=\{(xyz)_5| (x,y,z)\in T \}$.
	\item
	For every $z\in Z$, we have an agent $z_1\in N$ and an item $z_2\in M$, such that 
	$D_{z_1} = \{z_2\}
	$
	and $S_{z_1}=\{(xyz)_7| (x,y,z)\in T \}$.
	\item
	For every $x\in X$, we have item $x_2\in M$ and agents $x_1,x_3\in N$, such that $D_{x_1}= \{ x_2\}$, $S_{x_1} =\{ z_2 | z \in Z\}$,~$D_{x_3}=\{(xyz)_1| (x,y,z)\in T \}$ and $S_{x_3} = \{ x_2\}$.

\end{enumerate}
We {denote }by $G=(N,M,E)$ the trading graph for instance $\mathcal G$. To reason about the possible executions in the trading graph it will be useful to partition the demand edges in the trading graph to $8$ disjoint subsets:
\begin{equation*}
\begin{split}
&E_0 = \{ (x_1,x_2) | x\in X\}, \\
&E_1 = \{ (x_3,(xyz)_1) | x\in X, (x,y,z)\in T\}, \\
&E_2 = \{ ((xyz)_2,(xyz)_3) | (x,y,z)\in T\}, \\ 
&E_3 = \{ ((xyz)_4,y_1) | (x,y,z)\in T, y \in Y\}
\end{split}
\quad\quad
\begin{split}
&E_4 = \{ (y_2,(xyz)_5) | (x,y,z)\in T, y \in Y\} \\
&E_5 = \{ ((xyz)_6,(xyz)_7) | (x,y,z)\in T\} \\
&E_6 = \{ ((xyz)_2,(xyz)_7) | (x,y,z)\in T\} \\
&E_7 = \{ (z_1,z_2) | z \in Z\} 
\end{split}
\end{equation*}

\begin{figure}[htb!]
	\centering
	\includegraphics[width=.8\textwidth]{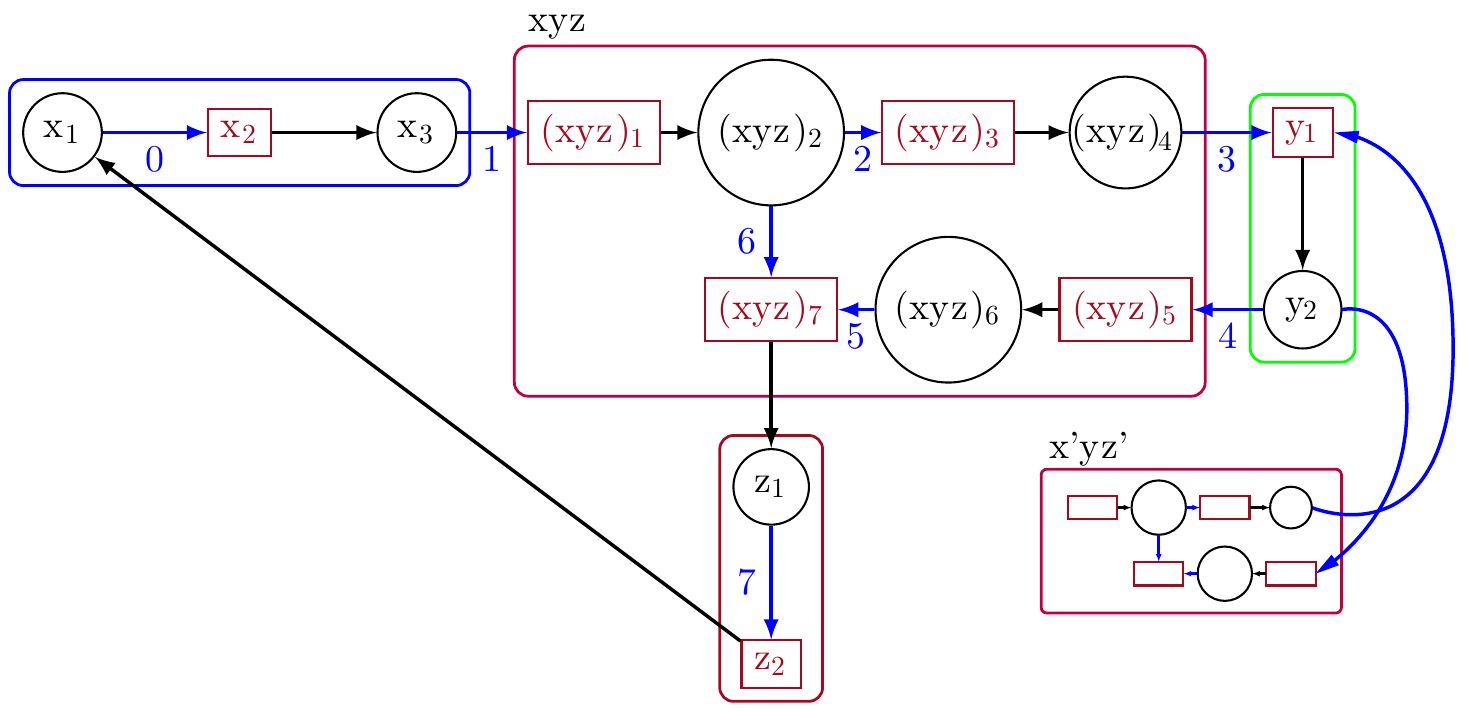}

	\caption{A small part of the trading graph $G$ constructed according to the reduction from 3D-matching. The figure includes the component of $(x,y,z)${$\in T$} and the component of $(x',y,z')\in T$. 
	}
	\label{fig:components-G}
\end{figure}

It is easy to see that the reduction is polynomial. In Proposition \ref{perfect-matching-8n} (below) we prove the correctness of the reduction by showing that an algorithm for computing an optimal execution in the trading graph $G=(N,M,E)$ will return an execution of size $8n$ if and only if there exists a perfect 3D-matching in $(X,Y,Z,T)$.

In order to prove the correctness of the reduction we will need several definitions and auxiliary lemmas discussing the possible types of cycles that may be executed on the trading graph $G$. We first highlight several types of cycles: 

\begin{definition}
	We define the following types of cycles in the trading graph $G$:
	\begin{itemize} 
				\setlength\itemsep{-0.3em}
		\item A \emph {small $X$-cycle} is a cycle that includes an edge from $E_0$ and an edge from $E_6$.
		\item A \emph {large $X$-cycle} is a cycle that includes an edge from $E_0$ and does not include an edge from $E_6$.
		\item An $X$-cycle is \emph{simple} if it includes exactly a single edge from $E_0$.
		\item A \emph{$Y$-cycle} is a cycle that does not include any edge from $E_0$.
	\end{itemize}
\end{definition}

The following useful lemma tells us that we can focus on executions that only includes simple $X$-cycles:
\begin{lemma}\label{lem-simpleCycles}
	Any execution $r$ can be transformed into an execution $r'$ such that $|r'|=|r|$ and in $r'$ all the $X$ cycles are simple (i.e., use a single edge from $E_0$).
\end{lemma}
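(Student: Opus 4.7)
The plan is to apply a local surgery to each non-simple $X$-cycle in $r$, breaking it into several simple $X$-cycles whose combined demand-edge count equals that of the original. Iterating this surgery over all non-simple $X$-cycles produces the desired execution $r'$.

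Fix any non-simple $X$-cycle $C$ occurring in $r$ and enumerate its edges from $E_0$ in the order they appear along $C$ as $e^{(1)},\ldots,e^{(k)}$ with $e^{(a)}=(x_1^{(a)},x_2^{(a)})$ and $k\ge 2$. These $E_0$-edges partition $C$ into $k$ arcs; the arc $P^{(a)}$ runs from the item $x_2^{(a)}$ (just after $e^{(a)}$) to the agent $x_1^{(a+1)}$ (just before $e^{(a+1)}$), with indices taken modulo $k$. By construction no arc $P^{(a)}$ contains an edge from $E_0$.

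The key structural fact I would exploit is that the only incoming edges to any agent of the form $x_1$ in $G$ are supply edges from items of the form $z_2$ with $z\in Z$, since $S_{x_1}=\{z_2:z\in Z\}$. Consequently, the final edge of each arc $P^{(a)}$ must have the form $(z_2^{(a)},x_1^{(a+1)})$ for some $z^{(a)}\in Z$, and the parallel edge $(z_2^{(a)},x_1^{(a)})$ is also present in $E$. Using this, I would reroute: define $C^{(a)}$ as the cycle obtained by prepending $e^{(a)}$ to $P^{(a)}$ after swapping its terminal supply edge from $(z_2^{(a)},x_1^{(a+1)})$ to $(z_2^{(a)},x_1^{(a)})$. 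Each $C^{(a)}$ is then a legitimate cycle in $G$ that uses exactly one $E_0$ edge (namely $e^{(a)}$), hence simple. Because the surgery only swaps the target of the terminal supply edge in each arc, the demand-edge count of $\bigcup_a C^{(a)}$ equals that of $C$, so $\sum_a |C^{(a)}|=|C|$.

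To finish, I would substitute the sequence $C^{(1)},\ldots,C^{(k)}$ for $C$ inside $r$ and verify that the result is still a valid execution. The only edges used that were not present in $C$ are the rewired terminal supply edges, all of which lie in $E$. Moreover, each agent $x_1^{(a)}$ participates in only the new cycle $C^{(a)}$ because its demand set $\{x_2^{(a)}\}$ is exhausted once $e^{(a)}$ fires, so no other cycle of $r$ was relying on $x_1^{(a)}$'s original incoming supply edge. The main obstacle I anticipate is precisely this feasibility check under the dynamic graph updates: one must confirm that at the moment $C^{(a)}$ is scheduled (in place of $C$), the supply edge $(z_2^{(a)},x_1^{(a)})$ is indeed still available and that no subsequent cycle in $r$ depended on the original target of that supply. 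Once this verification is in place, iterating the surgery on the (strictly decreasing) number of non-simple $X$-cycles terminates in an execution $r'$ with $|r'|=|r|$ in which every $X$-cycle is simple.
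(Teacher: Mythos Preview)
Your proposal is correct and follows essentially the same approach as the paper: both exploit the complete supply structure $S_{x_1}=\{z_2:z\in Z\}$ to reroute the terminal supply edge of each arc and split a non-simple $X$-cycle into simple pieces with the same total exchange count (the paper peels off one $E_0$-edge at a time and iterates, while you split all $k$ at once). For the feasibility obstacle you flag, the paper's argument is the dual of yours---it observes that each item $z_2$ has a unique incoming demand edge (from $z_1$), so once that edge is used in $C$ the item $z_2$ cannot participate in any later cycle, hence the swapped supply edges are irrelevant to the rest of the execution.
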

\begin{proof}
	Let $C$ be a cycle step in an execution $r$ that uses the edges $(x_1,x_2),(x'_1,x'_2)\in E_0$ ($C$ may include more edges from $E_0$). Notice that such a cycle has the general form of 
	$$C= (x_1,x_2,\cdots,z_1,z_2,x'_1,x'_2,\ldots,z'_1,z'_2,x_1).$$ 
	This means that we can break $C$ into two cycles as follows: $C_1= (x_1,x_2,\cdots,z_1,z_2,x_1)$ and $C_2=(x'_1,x'_2,\ldots,z'_1,z'_2,x'_1)$. Notice that to break the cycle we used the edge $(z_2,x_1)$ instead of the edge $(z_2,x'_1)$ and the edge $(z'_2,x'_1)$ instead of the edge $(z'_2,x_1)$. As for any $x\in X$ and $z \in Z$ the edge $(z_2,x_1)$ is in the graph we have that the new edges indeed exists in the graph. Furthermore, we note that the edges $(z_2,x_1)$ and $(z'_2,x'_1)$ cannot appear in later executions since both $z_2$ and $z'_2$ only have a single incoming edge and after this edge was used as part of $C$ they cannot participate in any other cycle. Thus, we conclude that the number of exchanges in the execution that includes cycles $C_1$ and $C_2$ instead of cycle $C$ is the same as in execution $r$. We can continue in the same manner until every we reach an execution $r'$ in which each $X$-cycle includes exactly a single edge {in $E_0$}.
\end{proof} 

The restriction to executions that only includes simple $X$-cycles allow us to explicitly pinpoint the type and number of edges that participate in the different types of cycles. In \full{Appendix \ref{applem:cycles-length}}\wine{the full version} we show that:
\begin{lemma} \label{lem:cycles-length}
	The different type of cycles that can be executed are characterized as follows:	
	\begin{itemize}
		\setlength\itemsep{-0.3em}
		\item Any simple small $X$-cycle includes exactly $4$ demand edges: one edge from each of the subsets $E_0,E_1,E_6$ and $E_7$.
		\item Any simple large $X$-cycle includes exactly $7$ demand edges: one edge from each of the subsets $E_0,E_1,E_2,E_3,E_4,E_5$ and $E_7$. 
		\item 	Any $Y$-cycle includes exactly $4$ demand edges: one edge from each of the subsets $E_2,E_3,E_4,E_5$. In addition it includes one flipped edge $((xyz)_7,(xyz)_2)$ such that $((xyz)_2,(xyz)_7)\in E_6$. 
	\end{itemize}
\end{lemma}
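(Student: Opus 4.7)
The plan is to exploit the near-tree structure of the trading graph: every item has a unique owner, and most items have a unique demander, so once the first edge of a cycle is fixed the remainder is essentially forced. I would first enumerate the branching points. Among agents, only $x_3$, $y_2$, and $(xyz)_2$ have more than one outgoing demand edge: $x_3$ demands every $(x,y,z)_1$ with matching $x$, $y_2$ demands every $(x',y,z')_5$ with matching $y$, and $(xyz)_2$ has one demand edge in $E_2$ and one in $E_6$. Among items, only $(xyz)_7$ (demanded via both $E_5$ and $E_6$) and $y_1$ (demanded by every $(x',y,z')_4$ with matching $y$) have more than one incoming demand edge. Every other vertex's cycle edge is uniquely determined; since each vertex used by a cycle has exactly one incoming and one outgoing cycle-edge, these are the only places where a choice can be made.

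For the two $X$-cycle cases I would trace forward from the forced edge $(x_1,x_2)\in E_0$. Uniqueness of owners and demanders gives the forced prefix $x_1\to x_2\to x_3\to (xyz)_1\to (xyz)_2$ for some $(x,y,z)\in T$. At $(xyz)_2$ the cycle must take either the $E_6$ edge (small case) or the $E_2$ edge (large case). In the small case, uniqueness forces $(xyz)_7\to z_1\to z_2$, and the closing supply edge $(z_2,x_1)$ brings us back to the starting $x_1$, since simplicity forbids using a second $E_0$ edge; this yields exactly the four demand edges from $E_0,E_1,E_6,E_7$. In the large case, uniqueness forces $(xyz)_3\to (xyz)_4\to y_1\to y_2$; at $y_2$ the cycle branches to some $(x',y,z')_5$ (necessarily with the same $y$); uniqueness then chains through $(x',y,z')_6\to (x',y,z')_7\to z'_1\to z'_2$; and the closing supply edge $(z'_2,x_1)$ returns to the starting $x_1$, giving exactly the seven demand edges from $E_0,E_1,E_2,E_3,E_4,E_5,E_7$.

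For the $Y$-cycle case I would first prune large portions of the graph. Since no $E_0$ edge is used, $x_1$ cannot be in the cycle; since every initial supply edge leaving $z_2$ ends at some $x_1$, $z_2$ is excluded; hence $z_1$ is excluded (its only demand edge is in $E_7$); hence the supply edge $((xyz)_7,z_1)$ is unusable. If $(xyz)_7$ is nonetheless in the cycle it must therefore exit through a flipped $E_6$-edge $((xyz)_7,(xyz)_2)$, which proves that the $E_6$ edge was used in a prior step. At $(xyz)_2$ the $E_6$ demand edge is already gone, so uniqueness forces the chain $(xyz)_2\to (xyz)_3\to (xyz)_4\to y_1\to y_2\to (x',y,z')_5\to (x',y,z')_6\to (x',y,z')_7$ for some $(x',y,z')\in T$ with the same $y$. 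To close back to the starting $(xyz)_7$ one must have $(x',y,z')=(xyz)$; otherwise the only remaining exit from $(x',y,z')_7$ is again a flipped-$E_6$ hop into $(x',y,z')_2$, which via the forced $E_2,E_3$ chain would revisit $y_1$, contradicting simplicity. The main obstacle is precisely this \emph{no alternative closure} step, which requires careful bookkeeping of which items remain reachable across the history of earlier executions; the key fact to exploit is that items $(xyz)_1, (xyz)_3, (xyz)_5, z_2$ become permanently unreachable as soon as their single incoming demand edge is used, so no earlier cycle can leave behind a new way to re-enter them.
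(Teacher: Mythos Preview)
Your proposal is correct and follows essentially the same forced-trace approach as the paper: exploit that almost every vertex has a unique successor/predecessor, so once you fix the $E_0$ edge (or, for $Y$-cycles, prune away the $E_0$-dependent part of the graph) the rest of the cycle is determined up to a handful of branch points.

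The one noteworthy difference is in the $Y$-cycle case. The paper carries the pruning a bit further than you do (also removing $x_2,x_3,(xyz)_1$ and the edges $E_1$, $((xyz)_7,z_1)$), and then argues that in the remaining graph the triplet-components are connected only through the nodes $y_1,y_2$; hence every $Y$-cycle must use the supply edge $(y_1,y_2)$. Starting the trace from $y_2$, one is forced into a single triplet via $E_4,E_5$, takes the flipped $E_6$ edge at $(xyz)_7$, and then is forced through $E_2,E_3$ back to the \emph{same} $y_1$, closing the cycle with no branching left. By contrast, you start the trace from $(xyz)_7$, which sends you through $y_1,y_2$ \emph{before} you know which triplet you will land in on the other side, and this is exactly what creates the ``no alternative closure'' obstacle you flag. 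That obstacle is real in your ordering but is an artifact of the starting point; the paper's choice of starting vertex sidesteps it entirely. You also tacitly assume the $Y$-cycle visits some $(xyz)_7$; the paper's ``must use $(y_1,y_2)$'' argument gives that for free.
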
 

We now bound the number of exchanges in an execution as a function of the number of simple large $X$-cycles it includes. As a corollary we will have that any execution executes at most $8n$ exchanges:
\begin{claim}\label{claim:rle8n}
	Consider an execution $r$ on the trading graph $G$ that includes $k$ large $X$-cycles and all the $X$-cycles are simple, then $|r| \leq 8n-k$. Furthermore, $|r|=8n-k$ if and only if the number of $Y$-cycles that are executed in $r$ is $n-k$. 
\end{claim}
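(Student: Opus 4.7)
My plan is a short counting argument that reduces the claim to two edge-budget inequalities, both drawn essentially from Lemma~\ref{lem:cycles-length}. Let $s$ denote the number of small $X$-cycles in $r$ and $y$ the number of $Y$-cycles. Since all $X$-cycles in $r$ are simple, every cycle of $r$ is a simple small $X$-cycle, a simple large $X$-cycle, or a $Y$-cycle; by Lemma~\ref{lem:cycles-length} these contribute $4$, $7$, and $4$ exchanges respectively (the flipped edge inside a $Y$-cycle points item-to-agent, so it is not counted as a demand edge under the definition of $|C|$). Hence I will write
\[
|r| \;=\; 4s + 7k + 4y,
\]
and then it suffices to bound $s$ and $y$.

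The first bound is $s+k \le n$. By Lemma~\ref{lem:cycles-length} every simple $X$-cycle, small or large, uses exactly one demand edge from $E_0$; since demand edges are consumed at most once during an execution and $|E_0|=n$, this inequality is immediate. The second, and more interesting, bound is $y \le s$. For this I will invoke Lemma~\ref{lem:cycles-length} to point out that the only cycle type that uses a demand edge from $E_6$ is the simple small $X$-cycle: large $X$-cycles avoid $E_6$ entirely, and $Y$-cycles use only the \emph{flipped} version of an $E_6$ edge. Because a flipped $E_6$ edge exists only after its demand counterpart has been traversed, and each flipped edge can itself be used at most once (supply edges are removed when used), the function sending each $Y$-cycle to the unique small $X$-cycle that produced its flipped edge is well-defined and injective, so $y \le s$.

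Combining the two inequalities yields
\[
|r| \;=\; 4s + 7k + 4y \;\le\; 8s + 7k \;\le\; 8(n-k) + 7k \;=\; 8n-k,
\]
which is the main assertion. For the characterization of equality, tightness throughout this chain forces simultaneously $y=s$ and $s+k=n$, so $|r|=8n-k$ implies $y=n-k$. Conversely, $y=n-k$ combined with $y\le s\le n-k$ forces $s=n-k$, and direct substitution gives $|r|=4(n-k)+7k+4(n-k)=8n-k$.

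The only step with any real content is the inequality $y\le s$, and even there the heavy lifting has already been done by Lemma~\ref{lem:cycles-length}: once one knows which cycle classes touch $E_6$ and in which direction, the bound follows from the fact that every demand edge is traversed at most once. The rest of the argument is pure bookkeeping on three nonnegative integers $s,k,y$.
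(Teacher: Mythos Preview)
Your proof is correct and follows essentially the same counting argument as the paper: bound the number of $X$-cycles by $|E_0|=n$, bound the number of $Y$-cycles by the number of small $X$-cycles via the flipped $E_6$ edges, and add up the exchange counts from Lemma~\ref{lem:cycles-length}. Your version is slightly more explicit in naming $s$ and $y$ and in verifying both directions of the equality case, but the substance is identical.
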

\begin{proof}
	By Lemma \ref{lem:cycles-length} we have that each simple large $X$-cycle contributes $7$ exchanges while any simple small $X$ cycle contributes $4$ exchanges. Thus, the total number of exchanges that are executed in $X$-cycles is at most $7k + 4(n-k) = 4n+3k.$ 
	
	Next, we consider the contribution of $Y$-cycles. By Lemma \ref{lem:cycles-length} a $Y$-cycle that includes an edge $((xyz)_7,(xyz)_2)$ can only be executed after a small $X$-cycle that includes the edge  $((xyz)_2,(xyz)_7)\in E_6$ was executed. Since each small $X$-cycle contains exactly a single edge $((xyz)_2,(xyz)_7)\in E_6$ the number of $Y$-cycles that can be executed is at most $(n-k)$. The number of exchanges in each $Y$-cycle is $4$, thus the total number of exchanges that are done in $Y$-cycles is at most $4(n-k)$. As each cycle that is executed is either an $X$-cycle or a $Y$-cycle for any execution $r$ we have that $|r|\leq 8n-k$ and this is tight if and only if the number of $Y$-cycles that are executed is $(n-k)$.
\end{proof}

Notice that Lemma \ref{lem-simpleCycles} assures us that any execution $r$ can be transformed to an execution $r'$ such that $|r|=|r'|$ and $r'$ only has simple $X$-cycles. Thus we conclude that:
\begin{corollary} \label{cor-max-8n}
	For any execution $r$, $|r| \leq 8n$.  Moreover, if $r$ is an execution that contains only simple $X$-cycles then $|r|=8n$ implies that $r$ consists of $n$ small $X$-cycles and $n$ $Y$-cycles.
\end{corollary}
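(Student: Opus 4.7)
The plan is to derive Corollary \ref{cor-max-8n} as a direct consequence of Lemma \ref{lem-simpleCycles}, Lemma \ref{lem:cycles-length}, and Claim \ref{claim:rle8n}, which together already supply all the combinatorial bookkeeping that is needed.

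For the first assertion, I would start from an arbitrary execution $r$ and invoke Lemma \ref{lem-simpleCycles} to produce an execution $r'$ with $|r'|=|r|$ in which every $X$-cycle is simple. Letting $k$ denote the number of (simple) large $X$-cycles in $r'$, Claim \ref{claim:rle8n} is now applicable and gives $|r|=|r'|\leq 8n-k\leq 8n$, which is exactly the bound in the first part of the statement.

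For the second assertion, I would take $r$ to already satisfy the hypothesis that every $X$-cycle is simple, so that Claim \ref{claim:rle8n} applies directly. Let $k$ be the number of large $X$-cycles in $r$. The inequality $|r|\leq 8n-k$ together with $|r|=8n$ forces $k=0$, so $r$ contains no large $X$-cycles. The ``furthermore'' clause of Claim \ref{claim:rle8n} then tells us that equality $|r|=8n-k=8n$ can be achieved only if $r$ contains exactly $n-k=n$ $Y$-cycles. By Lemma \ref{lem:cycles-length} these $n$ $Y$-cycles contribute $4n$ exchanges, leaving exactly $4n$ exchanges to be produced by $X$-cycles; since $k=0$, every $X$-cycle of $r$ is a simple small $X$-cycle, each contributing $4$ exchanges (again by Lemma \ref{lem:cycles-length}), so $r$ must contain exactly $n$ small $X$-cycles, as claimed.

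The main challenge is organizational rather than mathematical: one has to verify that the transformation from $r$ to $r'$ supplied by Lemma \ref{lem-simpleCycles} preserves both the total exchange count and the simple-$X$-cycle hypothesis needed to feed Claim \ref{claim:rle8n}. Once that is observed, the rest of the argument is a purely arithmetic deduction from the per-cycle exchange counts recorded in Lemma \ref{lem:cycles-length}.
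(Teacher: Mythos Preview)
Your proposal is correct and follows exactly the route the paper intends: the paper simply remarks that Lemma \ref{lem-simpleCycles} lets one pass to an execution with only simple $X$-cycles and then invokes Claim \ref{claim:rle8n}, leaving the arithmetic implicit. Your write-up just makes those implicit steps explicit, including the count that forces exactly $n$ small $X$-cycles once $k=0$ and the number of $Y$-cycles is $n$.
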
 
Finally, we are ready to prove the main proposition: 
\begin{proposition} \label{perfect-matching-8n}
	An algorithm for computing an optimal execution on the trading graph $G$ will return an execution $r$ such that $|r|=8n$ if and only if there exists a perfect matching $S\subseteq T$ for the 3D-matching instance.
\end{proposition}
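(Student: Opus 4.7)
The plan is to prove both directions of the equivalence by leveraging the structural characterization of size-$8n$ executions given by Corollary \ref{cor-max-8n}: up to applying Lemma \ref{lem-simpleCycles}, any such execution consists of exactly $n$ small $X$-cycles and $n$ $Y$-cycles, each contributing $4$ exchanges.

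For the ``if'' direction, given a perfect 3D-matching $S = \{(x_i, y_i, z_i)\}_{i=1}^n$, I would explicitly construct an execution of $2n$ cycles. First, for each triple $(x,y,z) \in S$, execute the small $X$-cycle $(x_1,x_2,x_3,(xyz)_1,(xyz)_2,(xyz)_7,z_1,z_2,x_1)$. This is feasible because each $x\in X$ and each $z\in Z$ appears in exactly one triple of $S$, so the shared agents/items $x_1,x_3,x_2,z_1,z_2$ are used at most once; the triple-specific vertices are used only within their own cycle. After these $n$ small $X$-cycles, each triple $(x,y,z)\in S$ has had its $E_6$ edge flipped to $((xyz)_7,(xyz)_2)$. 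Then for each $(x,y,z)\in S$, execute the $Y$-cycle $((xyz)_2,(xyz)_3,(xyz)_4,y_1,y_2,(xyz)_5,(xyz)_6,(xyz)_7,(xyz)_2)$; distinctness of the $y$'s in $S$ ensures each $y_2$ and $y_1$ is used at most once.

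For the ``only if'' direction, assume $|r|=8n$; by Lemma \ref{lem-simpleCycles} and Corollary \ref{cor-max-8n}, I may take $r$ to consist of exactly $n$ simple small $X$-cycles and $n$ $Y$-cycles. Each small $X$-cycle uses exactly one edge from $E_0$ (pinning a unique $x\in X$), one edge from $E_7$ (pinning a unique $z\in Z$), and one edge from $E_6$ (pinning a unique triple in $T$). Since $|E_0|=|E_7|=n$, the $n$ small $X$-cycles define a set $S\subseteq T$ of $n$ distinct triples covering all of $X$ and all of $Z$.

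The remaining step, which I anticipate as the main technical obstacle, is to verify that $S$ also covers all of $Y$. For this I would trace the constraints imposed by closure of a $Y$-cycle: starting from the agent $(xyz)_2$ incident to its flipped $E_6$ edge, the sequence of current demand and supply edges forces the walk $(xyz)_2 \to (xyz)_3 \to (xyz)_4 \to y_1 \to y_2 \to (x''y''z'')_5 \to (x''y''z'')_6 \to (x''y''z'')_7 \to (x''y''z'')_2$, where $y''=y$ (because $y_2$ only demands items of the form $(\cdot\,y\,\cdot)_5$) and where closure back to $(xyz)_2$ forces $(x'',z'')=(x,z)$. Hence each $Y$-cycle is confined to a single triple of $S$, and it involves the unique agent $y_2$ for that triple's $y$-coordinate. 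Since $y_2$ owns only $y_1$, it participates in at most one cycle, so the $n$ $Y$-cycles correspond to $n$ distinct $y$-values, covering all of $Y$. Therefore $S$ meets every element of $X\cup Y\cup Z$ exactly once and is a perfect 3D-matching.
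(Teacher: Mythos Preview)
Your proposal is correct and follows essentially the same route as the paper: both directions hinge on Lemma~\ref{lem-simpleCycles} and Corollary~\ref{cor-max-8n} to reduce to $n$ small $X$-cycles plus $n$ $Y$-cycles, then read off the matching from the $E_6$ edges, with distinctness of $x$'s and $z$'s coming from disjointness of the $X$-cycles and distinctness of $y$'s from disjointness of the $Y$-cycles. Your explicit tracing of the $Y$-cycle structure is sound but largely redundant, since Lemma~\ref{lem:cycles-length} (and its proof) already pins each $Y$-cycle to a single triple and to the corresponding $y_1,y_2$; the paper simply invokes that lemma rather than re-deriving it.
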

\begin{proof}
	First, assume that there exists a perfect matching $S\subseteq T$, we show how to construct an execution $r$ such that $|r|=8n$. Since by Corollary \ref{cor-max-8n} we have that the maximum number of exchanges in an execution is $8n$ this implies that in this case the algorithm will return some execution $r'$ such that $|r'|=8n$. 
We now construct the execution $r$: for each $(x,y,z)\in S$ we first execute the $X$-cycle, 
$(x_1,x_2,x_3,(xyz)_1,(xyz)_2,(xyz)_7,z_1,z_2,x_1)$
and then use the flipped edge $((xyz)_7,(xyz)_2)$, such that $((xyz)_2,(xyz)_7)\in E_6$, and execute the $Y$-cycle, $(y_1,y_2,(xyz)_5,(xyz)_6,(xyz)_7,(xyz)_2,(xyz)_3,(xyz)_4,y_1).$
Notice that since $S$ is a perfect matching it has $n$ triplets and since for each triplet we execute a total of $8$ exchanges, we get that $|r|=8n$. We are left with showing that $r$ is feasible. Notice that $S$ is a feasible matching and thus, each $x\in X$, $y \in Y$ and $z\in Z$ appears in precisely one triplet $(x,y,z) \in S$. Thus, all the $X$ cycles are disjoint from one another and all the $Y$ cycle are disjoint from one another. Moreover, after executing an $X$-cycle, one can execute the $Y$-cycle that contains the edge that was just flipped.
	
We now prove the other direction, assume that the algorithm returns an execution $r$ such that $|r| = 8n$. We use Lemma  \ref{lem-simpleCycles} to construct an optimal execution $r'$ in which all $X$-cycles are simple (i.e., include a single edge from $E_0$).  Now, by Corollary \ref{cor-max-8n} we have that $r'$ includes $n$ small $X$-cycles and $n$ $Y$-cycles, each $Y$ cycles is executed only after the $X$-cycle that includes $((xyz)_2,(xyz)_7) \in E_6$ is executed. To construct the matching, for each $X$-cycle that includes an edge $((xyz)_2,(xyz)_7) \in E_6$ we add $(x,y,z)$ to the matching $S$. By definition we have that $|S|=n$. We are left to show that for any $(x,y,z),(x',y',z')\in S$ we have that $x \neq x'$, $y \neq y'$ and $z \neq z'$. Notice that all the $X$-cycles are disjoint thus if the cycle $(x_1,x_2,x_3,(xyz)_1,(xyz)_2,(xyz)_7,z_1,z_2,x_1)$ is executed we cannot execute any other cycle that includes $(x_1,x_2)$ or $(z_1,z_2)$, thus we have that $x\neq x'$ and that $z \neq z'$. Finally, the $Y$-cycles are disjoint as well implying that if we executed the cycle $(y_1,y_2,(xyz)_5,(xyz)_6,(xyz)_7,(xyz)_2,(xyz)_3,(xyz)_4,y_1)$ we cannot execute any cycle with $(y_1,y_2)$, thus we have that $y \neq y'$ as required. $\qed$ \end{proof} 


}
\wine{

	\subsection{Computational Hardness}
	In this section we discuss the problem of computing an optimal execution from a purely computational perspective (the complete proof can be found in the full version):  
		\begin{theorem}\label{thm:nphard}
			Unless $P=NP$ there is no polynomial time $c$-approximation for computing the optimal execution unless $P=NP$ where $c>1$ is a small constant.
		\end{theorem}
	
\begin{proofsk}
			We reduce from the NP-Complete 3D-matching problem\footnote{\cite{abraham2007clearing} also reduce from 3D-matching, however, our reductions are inherently different. Specifically, their hardness stems from limiting the size of the cycles whereas we have no such limitation.}:
			Let $X, Y,$ and $Z$ be finite, disjoint sets, of size $n$\full{( $|X|=|Y|=|Z|=n$)} and let $T$ be a subset of $X \times Y \times Z$. Does there exist a subset $S \subseteq  T$ of size $n$ such that for any two distinct triplets $(x_1, y_1, z_1), (x_2, y_2, z_2) \in S$, we have $x_1 \ne x_2, y_1 \ne y_2,$ and $z_1 \ne z_2$?
			\begin{figure}[t] 
				\centering
				\includegraphics[width=.7\textwidth]{graphics/hardness.pdf}
				\caption{An illustration for the proof Theorem \ref{thm:nphard} featuring the components of $(x,y,z)${$\in T$} and  $(x',y,z')\in T$. 
				}
				\label{fig:components-G}
			\end{figure}
			
			Recall that computing the optimal static execution can be done in polynomial time. This means that proving the hardness of computing an optimal dynamic execution requires us to devise a very careful reduction in which the optimal execution has to execute certain cycles at the first round in order to execute other cycles in the next rounds. 
				In the reduction, given an instance $X,Y,Z$ and $T\subseteq X \times Y \times Z$  of the 3D-matching problem, we  construct an instance that includes a component for every triplet $(x,y,z)\in T$, $x\in X$, $y\in Y$ and $z\in Z$ (see  illustration in Figure \ref{fig:components-G}). In this instance an algorithm for computing an optimal execution in the trading graph $G=(N,M,E)$ will return an execution of size $8n$ if and only if there exists a perfect 3D-matching in $(X,Y,Z,T)$. This is done by making sure that the optimal execution executes two cycles for every triplet $(x,y,z)$ in the perfect 3D-matching in $(X,Y,Z,T)$: first the cycle the contains edges numbered $0,1,6,7$ in Figure \ref{fig:components-G} and then after edge 6 was flipped the cycle including edges $2,3,4,5$. The complete proof requires a delicate analysis of the possible cycles that may be executed in each sequence.
			\end{proofsk}
	
}

\section{Conclusion and Discussion}
Our paper contributes to forming the mathematical foundations of barter markets. As such, the paper does not aim to provide a full modeling of a concrete market, but rather to mathematically capture some of the major challenges in designing them. We identify a central aspect of many barter markets that yet to be studied: the market may be dynamic in the sense that the same item can move from hand to hand several times. A main contribution of our paper is identifying this aspect and formally modeling it.  The second set of contributions is in a comprehensive analysis of dynamic executions. 

Our results on the approximation ratio of strategyproof mechanisms in this setting can be interpreted in two ways. First, in many cases, the approximation ratio of $l$ (the maximal number of items an agent demands) achieved by an optimal static execution is reasonable since the number of items that an agent demands does not grow with the size of the network. This gives a justification for studying static executions even in a dynamic environment such as ours.  Second, the impossibility result showing that a strategyproof mechanism cannot provide an approximation ratio better than $\approx l/2$ suggests that to increase efficiency, barter networks should include some form of money. This may explain why many barter applications indeed often involve vouchers, for example. We hope that the understanding of barter markets we gained in this paper will provide a stepping stone towards understanding markets with vouchers.

\bibliographystyle{plain}
\bibliography{bibliography}

\full{
\newpage
		\appendix
	\section{Proofs from Section 3}
\subsection{Proof of Proposition \ref{prop:matching-properties}  }	\label{app-prop312}
Recall that we need to prove that if $ f_i(X - \{j\}) > f_i(X)$, then, for the maximum weight perfect matchings ${\mathcal M}_X$ and ${\mathcal M}_{X-\{j\}}$ there exist matchings ${\mathcal M}'_X$ and ${\mathcal M}'_{X - \{j\}}$ such that:
		\begin{enumerate}
			\item ${\mathcal M}'_X$ is a valid perfect matching of the graph $H'(G((X,S'_i),\vec{x}'_{-i}))$
			(i.e., when agent $i$ reports $X$) and ${\mathcal M}'_{X - \{j\}}$ is a valid perfect matching of the graph $H'(G((X-\{j\},S'_i),\vec{x}'_{-i}))$ (i.e., when agent $i$ reports $X-\{j\}$).
			\item $w'({\mathcal M}_X) + w'({\mathcal M}_{X-\{j\}}) =w'({\mathcal M}'_X) +  w'({\mathcal M}'_{X - \{j\}})$
			\item $w'({\mathcal M}'_X)\ne w'({\mathcal M}_X)$.
		\end{enumerate}

First by Claim \ref{clm:weighted-to-unweighted} we have that ${\mathcal M}_X$ is a perfect matching in the bipartite graph $H(G) = ( E\times \{0\},E\times \{1\},E_H)$ defined in Section \ref{sec:static-poly}. Let $\tilde {\mathcal M}_{X - \{j\}}$, to be later defined, be a perfect matching of the graph $H'(G((X-\{j\},S'_i),\vec{x}'_{-i}))$
such that $w'(\tilde {\mathcal M}_{X - \{j\}}) = w'( {\mathcal M}_{X - \{j\}})$.

Consider the multi-graph $\mathcal H$ that includes all the edges of ${\mathcal M}_X$ and $\tilde {\mathcal M}_{X - \{j\}}$.  Denote by $\mathcal E$ the multi-set of the edges in $\mathcal H$. We denote by ${P}$ a path in $\mathcal H$ that starts from $((i,j),0)$ and takes alternating edges from ${\mathcal M}_X$ and $\tilde {\mathcal M}_{X - \{j\}}$. An illustration of the path can be found in Figure \ref{fig:H(G)}. We use the path $P$ to define two matchings:\footnote{The construction borrows from the proof of Lemma 2 in \cite{bar2016tight}.}
\begin{align*}
{\mathcal M}'_X = ({P}\cap {\mathcal M}_X)\cup ((\mathcal{E} -  {P})\cap \tilde {\mathcal M}_{X - \{j\}}),~
{\mathcal M}'_{X - \{j\}} = ({P}\cap \tilde {\mathcal M}_{X - \{j\}})\cup ((\mathcal{E} -  {P})\cap {\mathcal M}_X)
\end{align*}
\begin{figure}
	\centering
	\includegraphics[width=8cm]{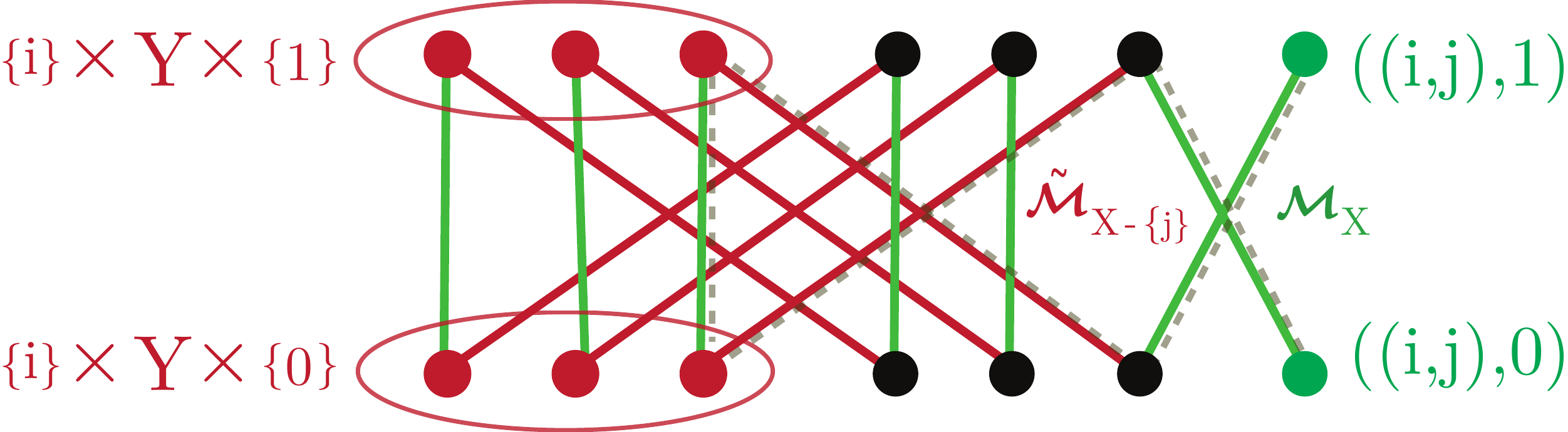}
	\caption{The multi-graph $\mathcal H$. The edges of ${\mathcal M}_X$ are colored in green and the edges of $\tilde {\mathcal M}_{X -\{j\}}$ are colored in red. The path $P$ is marked by dashed edges.
	}
	\label{fig:H(G)}
\end{figure}
By definition we have that the multiset ${\mathcal M}'_X \cup {\mathcal M}'_{X - \{j\}}$ is the same as the multiset $ {\mathcal M}_X \cup \tilde {\mathcal M}_{X - \{j\}}$. This implies that $w'({\mathcal M}'_X)+w'({\mathcal M}'_{X - \{j\}}) = w'({\mathcal M}_X)+w'(\tilde {\mathcal M}_{X - \{j\}})$. Since $w'(\tilde {\mathcal M}_{X - \{j\}}) = w'({\mathcal M}_{X - \{j\}})$ this proves statement $2$ of Proposition \ref{prop:matching-properties}.

We now prove statement 1 of Proposition \ref{prop:matching-properties} showing that ${\mathcal M}'_X$ and ${\mathcal M}'_{X - \{j\}}$ are perfect matchings of $H'(G((X,S'_i),\vec{x}'_{-i}))$ and $H'(G((X-\{j\},S'_i),\vec{x}'_{-i}))$ respectively. Let $V=E\times \{0,1\}$ and $V_{-j} = (E -  {\{(i,j)\}})\times \{0,1\}$. We need to show that ${\mathcal M}'_X$ matches every node in $V$ to exactly one node and that ${\mathcal M}'_{X-\{j\}}$ matches every node in $V_{-j}$ to exactly one node. We separate the proof to two parts:

\begin{itemize}
	\item ${\mathcal M}'_X$ and ${\mathcal M}'_{X - \{j\}}$ match all the nodes in $V_{-j}$ -- Consider a node $a\in V_{-j}$.  Note that $a$ is matched both in 
	${\mathcal M}_X$ and $\tilde {\mathcal M}_{X - \{j\}}$. By construction if $P$ visits $a$ then it will be matched in ${\mathcal M}'_X$ to the same node as in ${\mathcal M}_x$ and in  ${\mathcal M}'_{X-\{j\}}$ to the same node as in $\tilde {\mathcal M}_{X - \{j\}}$. Otherwise, it will be matched in ${\mathcal M}'_X$ to the same node as in $\tilde {\mathcal M}_{X - \{j\}}$ and in ${\mathcal M}'_{X-\{j\}}$ to the same node as in ${\mathcal M}_x$ .
	
	\item ${\mathcal M}'_X$ matches $((i,j),0)$ and $((i,j),1)$ and ${\mathcal M}'_{X - \{j\}}$ does not -- Recall that $((i,j),0)$ and $((i,j),1)$ are only matched in ${\mathcal M}_X$. By definition, we have that ${P}$ begins with the node $((i,j),0)$ implying that only ${\mathcal M}'_X$ matches $((i,j),0)$. To complete the proof we should show that $((i,j),1)$ is part of ${P}$ which implies that $((i,j),1)$ is also only matched in ${\mathcal M}'_X$. In particular, we claim that the path ${P}$ ends with $((i,j),1)$. Notice that in the multi-graph $\mathcal H$ the degree of each node in $V_{-j}$ is $2$ since it is matched both in ${\mathcal M}_X$ and ${\mathcal M}_{X - \{j\}}$ and the degree of $((i,j),0)$ and $((i,j),1)$ is $1$. This implies that a path that starts with $((i,j),0)$ can only end with $((i,j),1)$.
\end{itemize}

Lastly, we prove statement 3 of Proposition \ref{prop:matching-properties} showing that $w'({\mathcal M}'_X)\ne w'({\mathcal M}_X)$. To this end we will use the following claim to pick a perfect matching $\tilde {\mathcal M}_{X - \{j\}}$ of the graph $H'(G((X-\{j\},S'_i),\vec{x}'_{-i}))$ such that $w'(\tilde {\mathcal M}_{X - \{j\}}) = w'( {\mathcal M}_{X - \{j\}})$ and ${P}$ includes a single edge $((i,y),0)$ such that $y \in Y$.

\begin{claim}\label{clm:px-one-Y}
	Let $G_{X-\{j\}} = {G((X-\{j\},S'_i),\vec{x'}_{-i})}$. For any perfect matchings ${\mathcal M}_X$ of the bipartite graph $H'(G((X,S'_i),\vec{x'}_{-i}))$ and matching ${\mathcal M}_{X -\{j\}}$ of the bipartite graph $H'(G_{X-\{j\}})$, there exists a matching $\tilde {\mathcal M}_{X -\{j\}}$ such that:
	\begin{itemize}
		\item $\tilde {\mathcal M}_{X -\{j\}}$ is a perfect matching of the graph $H'(G_{X-\{j\}})$ and $w'(\tilde {\mathcal M}_{X - \{j\}}) =  w'({\mathcal M}_{X -\{j\}})$.
		\item In the multi-graph ${\mathcal H}$ the path ${P}$ that begins with $((i,j),0)$ and takes alternating edges from ${\mathcal M}_X$ and $\tilde {\mathcal M}_{X -\{j\}}$ visits at most a single node {$((i,y),0)$} such that $y\in Y$.
	\end{itemize}
\end{claim}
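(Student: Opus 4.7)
The plan is to construct $\tilde{\mathcal{M}}_{X-\{j\}}$ by starting from $\mathcal{M}_{X-\{j\}}$ and performing a single surgical swap that reroutes the alternating path $P$. First I would form the multi-graph $\mathcal{H}$ consisting of the edges of $\mathcal{M}_X$ and $\mathcal{M}_{X-\{j\}}$, and observe that since $((i,j),0)$ and $((i,j),1)$ are the only nodes that do not belong to $H'(G_{X-\{j\}})$, they are the unique pair of degree-$1$ vertices in $\mathcal{H}$; hence the alternating path $P$ is uniquely determined. I would then unpack the local structure of $P$ at each node $((i,y),0)$ with $y\in Y$: since $y\notin A_i(r_X)$, that node is matched in $\mathcal{M}_X$ via the self-loop to $((i,y),1)$, and since $y\in A_i(r_{X-\{j\}})$, it is matched in $\mathcal{M}_{X-\{j\}}$ by a weight-$1$ edge to some $((y,k_y),1)$. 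Consequently $P$ must enter $((i,y),0)$ from $((y,k_y),1)$ via $\mathcal{M}_{X-\{j\}}$, leave through $\mathcal{M}_X$ to $((i,y),1)$, and then continue to the node $((s_y,i),0)$ matched to $((i,y),1)$ in $\mathcal{M}_{X-\{j\}}$, where $s_y\in S'_i$ is the item $i$ gives in exchange for $y$.

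If $P$ already visits at most one such node, then $\tilde{\mathcal{M}}_{X-\{j\}}=\mathcal{M}_{X-\{j\}}$ suffices. Otherwise, I would let $y_1$ and $y_m$ be the first and last elements of $Y$ whose corresponding nodes appear on $P$, with partners $s_1,s_m\in S'_i$, and build $\tilde{\mathcal{M}}_{X-\{j\}}$ from $\mathcal{M}_{X-\{j\}}$ by deleting the matches $(((s_1,i),0),((i,y_1),1))$ and $(((s_m,i),0),((i,y_m),1))$ and inserting the swapped matches $(((s_1,i),0),((i,y_m),1))$ and $(((s_m,i),0),((i,y_1),1))$. The new edges belong to $E_2$ of $H'(G_{X-\{j\}})$ because $s_1,s_m\in S'_i$ and $y_1,y_m\in X-\{j\}$, and only four incidences are rewired so no node is doubly matched. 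Weight preservation then follows from the perturbation scheme: every case-$2$ edge $(((s,i),0),((i,y),1))$ has $w'$-weight $1+2^{-\pi((i,s))}$, which depends only on the supply end, so exchanging the demand endpoints leaves the multiset of weights unchanged and $w'(\tilde{\mathcal{M}}_{X-\{j\}})=w'(\mathcal{M}_{X-\{j\}})$.

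It remains to trace the new path $P'$ in the multi-graph built from $\mathcal{M}_X$ and $\tilde{\mathcal{M}}_{X-\{j\}}$: $P'$ will follow $P$ up to $((i,y_1),1)$, then use the rerouted edge to jump directly to $((s_m,i),0)$, and finally rejoin the tail of $P$ from $((s_m,i),0)$ to $((i,j),1)$. The portion of $P$ from $((s_1,i),0)$ to $((i,y_m),1)$ closes up into a separate cycle in the new multi-graph via the inserted edge $(((s_1,i),0),((i,y_m),1))$; since this portion contains every $((i,y_k),0)$ for $k\ge 2$, the only node of the form $((i,y),0)$ with $y\in Y$ remaining on $P'$ will be $((i,y_1),0)$. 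The main obstacle I anticipate is precisely this path-tracing verification, since the swap simultaneously rewires four incidences and one must confirm that the newly created cycle absorbs exactly the excess visits while leaving both endpoints $((i,j),0),((i,j),1)$ and the single surviving node $((i,y_1),0)$ on $P'$.
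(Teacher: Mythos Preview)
Your proposal is correct and follows essentially the same approach as the paper's own proof: identify the first and last $Y$-nodes visited by the alternating path, then swap the $\mathcal{M}_{X-\{j\}}$-partners of the corresponding $((i,y),1)$ vertices so that the new path shortcuts directly from $((i,y_1),1)$ to $((s_m,i),0)$, with weight preservation following because $w'(((s,i),0),((i,y),1))$ depends only on the supply edge $(s,i)$. The paper's argument is slightly more terse (it does not explicitly mention that the excised middle segment closes into a cycle), but the swap, the validity check, and the weight computation are identical to yours, and your anticipated ``obstacle'' in the path-tracing is in fact routine for exactly the reason you outline.
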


Before proving Claim \ref{clm:px-one-Y}, we apply it to prove statement 3 of Proposition \ref{prop:matching-properties}. Recall that $|Y|\geq 2$, the fact that there is a single item $y \in Y$ such that ${P}$ includes edges from $((i,y),0)$ implies that there exists another item $y'\in Y$ such that $((i,y'),0)$ is not on the path ${P}$. Let $e\in E_H$ be the edge matching $((i,y'),0)$ in $\tilde {\mathcal M}_{X - \{j\}}$. We have that $e \in \mathcal E  -  {P}$ implying that $e$ is in ${\mathcal M}'_X$ and not in ${\mathcal M}_X$. Since $e\in E_2$ it has a positive weight and hence by the construction of the weights we have that $w'({\mathcal M}'_X)\ne w'({\mathcal M}_X)$ as required. This concludes the proof of Proposition \ref{prop:matching-properties}.

\vspace{0.1in}
\noindent \textbf{Proof of Claim \ref{clm:px-one-Y}.}
\begin{figure}[]
	\begin{center}
		\subfigure[${P}({\mathcal M}_X, {\mathcal M}_{X -\{j\}})$ is marked with dashed line]{
			\includegraphics[width=6cm]{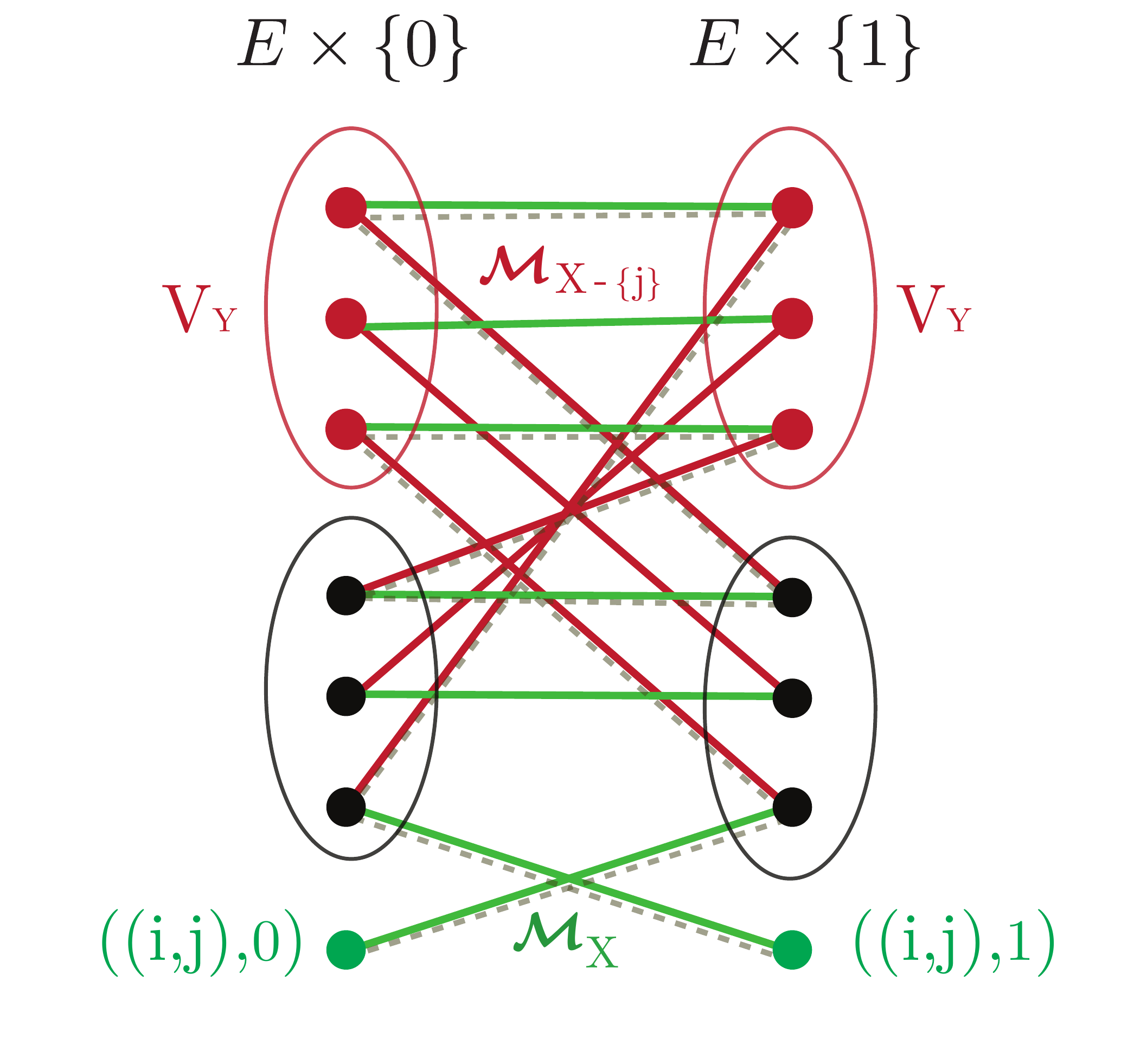}
			\label{fig:path-P}
		}\hspace{5mm}
		\subfigure[${P}({\mathcal M}_X,\tilde {\mathcal M}_{X - \{j\}})$ visits a single node $(y,0)$ such that $y\in Y$]{
			\includegraphics[width=6cm]{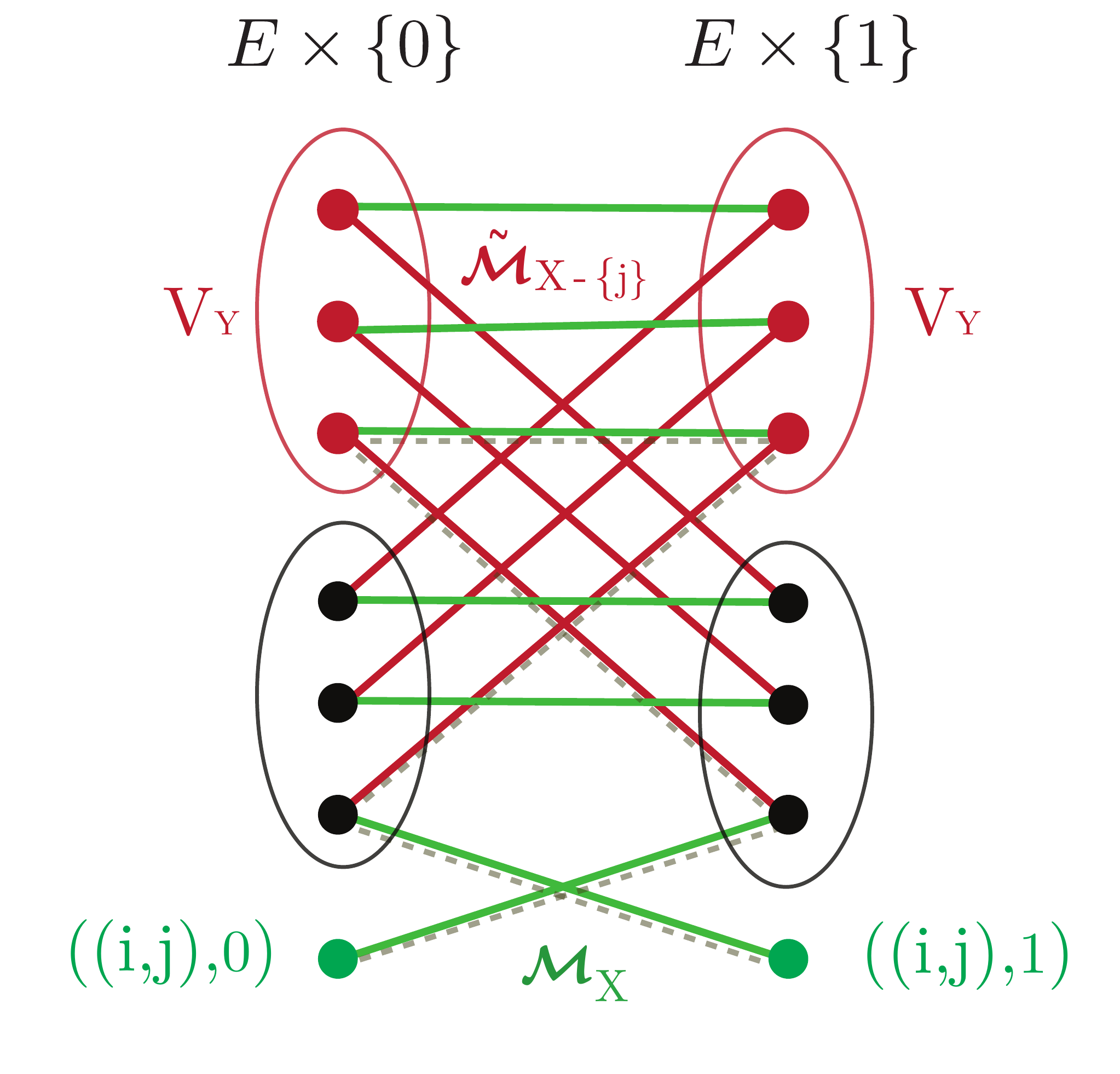}
			\label{fig:M+Mx2}
		}
		\caption{
			{\small
				An illustration of the construction in the proof of Claim \ref{clm:px-one-Y}. The edges of ${\mathcal M}_X$ are colored in green and the edges of ${\mathcal M}_{X -\{j\}}$ and $\tilde {\mathcal M}_{X -\{j\}}$ are colored in red. 
				In Figure~\ref{fig:path-P} we have a path ${P}({\mathcal M}_X, {\mathcal M}_{X -\{j\}})$ that visits more than a single node $(y,0)$ such that $y\in Y$. In Figure~\ref{fig:M+Mx2} we have the matching  $\tilde {\mathcal M}_{X - \{j\}}$ which has the same weight as ${\mathcal M}_{X - \{j\}}$ but here ${P}({\mathcal M}_X,\hat{{\mathcal M}_{X - \{j\}}})$ visits only in a single node $(y,0)$ such that $y\in Y$.
			}
			\label{fig:y-once}
		}
	\end{center}
	\vspace*{-0.2in}
\end{figure}
For sake of clarity we will use in this proof the notation ${P}({\mathcal M}_X, {\mathcal M}_{X -\{j\}})$ to denote the alternating path starting from $((i,j),0)$ in the multi-graph defined by the union of ${\mathcal M}_X$ and ${\mathcal M}_{X -\{j\}}$. 
If ${P}({\mathcal M}_X, {\mathcal M}_{X -\{j\}})$ visits at most a single node $((i,y),0)$ such that $y\in Y$, then we are done.
Else, the path visits more than one node $((i,y),0)$ such that $y\in Y$. In Figure \ref{fig:path-P} we illustrate a path ${P}({\mathcal M}_X, {\mathcal M}_{X -\{j\}})$ that visits two nodes  $((i,y),0)$ and $((i,y'),0)$ such that $y,y'\in Y$.
Let $V_Y = \{ ((i,y),b) : y\in Y,~b\in\{0,1\} \}$. Observe that the first node $s\in V_Y$ that the path ${P}({\mathcal M}_X, {\mathcal M}_{X -\{j\}})$ visits has to be in $E\times \{0\}$ and the last node $t\in V_Y$ that ${P}({\mathcal M}_X, {\mathcal M}_{X -\{j\}})$ visits has to be in $E\times \{1\}$.  The reason for this is that since the path ${P}({\mathcal M}_X, {\mathcal M}_{X -\{j\}})$ begins with an edge from $\mathcal M_X$ that connects $((i,j),0)$ to some node $((j,k),1)$ and takes alternating edges from $\mathcal M_X$ and ${\mathcal M}_{X -\{j\}}$ then all the edges on the path that enter nodes in $E\times\{1\}$ are from $\mathcal M_X$. The fact that $\forall y\in Y$ we have that $(((i,y),0),((i,y),1))\in \mathcal M_X$ implies that the first node $s\in V_Y$ that ${P}({\mathcal M}_X, {\mathcal M}_{X -\{j\}})$ visits has to be in $E\times \{0\}$. For the same reasons we have that the last node $t \in V_y$ that the path visits has to be in $E\times \{1\}$. Let $s=((i,k),0)$ and $t=((i,u),1)$.

Denote by $((k',i),0)$ the node that was matched to $((i,k),1)$ and by $((u',i),0)$ the node that was matched to $((i,u),1)$ by ${\mathcal M}_{X -\{j\}}$. We construct the matching $\tilde {\mathcal M}_{X -\{j\}}$ that is identical to ${\mathcal M}_{X -\{j\}}$ except for matching  $((i,k),1)$ to $((u',i),0)$ and $((i,u),1)$ to $((k',i),0)$. Observe that the only nodes that 
${P}({\mathcal M}_X,\tilde {\mathcal M}_{X -\{j\}})$ visits from $V_y$ are  $((i,k),0)$ and $((i,k),1)$. The reason for this is that after ${P}({\mathcal M}_X,\tilde {\mathcal M}_{X -\{j\}})$ visits $((i,k),0)$ it continues to $((i,k),1)$ and then visits $((u',i),0)$. By construction we have that after this node the path does not visit any node $a\in V_Y$. An illustration of this construction can be found in Figure \ref{fig:y-once}.

To show that $\tilde {\mathcal M}_{X -\{j\}}$ is a valid perfect matching we need to show that the edges $((u',i),0),((i,u),1)$ and $((u',i),0),((i,k),1)$ are part of the graph $H'(G_{X-\{j\}})$.  Recall that both $(i,k)$ and $(i,u)$ are demand edges of the same agent $i$. The edges $(u',i)$ and $(k',i)$ are two supply edges of agent $i$. This means that in the trading graph $G_{X-\{j\}}$ the edges $(u',i)$ and $(i,k)$ are adjacent and the edges $(k',i)$ and $(i,u)$ are adjacent. Hence, the edges $((u',i),0),((i,k),1)$ and $((k',i),0),((i,u),1)$ appear in $H'(G_{X-\{j\}})$. Finally, notice that $w'(\tilde {\mathcal M}_{X - \{j\}}) =  w'({\mathcal M}_{X -\{j\}})$ since $w'(((k',i),0),((i,k),1)) =  w'(((k',i),0),((i,u),1))$ and $w'(((u',i),0),((i,u),1)) =  w'(((u',i),0),((i,k),1))$.
$\qed$
\vspace{0.1in}

	\section{Proofs from Section 4} \label{app:proofs-4}
\subsection{Proof of Lemma \ref{lem:cycles-length}} \label{applem:cycles-length}
For clarity we partition the proof to two lemmas:
	\begin{lemma}
		For simple $X$-cycles we have that:
		\begin{itemize}
			\item Any simple small $X$-cycle includes exactly $4$ demand edges: one edge from each of the subsets $E_0,E_1,E_6$ and $E_7$.
			\item Any simple large $X$-cycle includes exactly $7$ demand edges: one edge from each of the subsets $E_0,E_1,E_2,E_3,E_4,E_5$ and $E_7$. 
		\end{itemize}
	\end{lemma}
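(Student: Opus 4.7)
The plan is to trace a simple $X$-cycle step by step starting from its unique $E_0$ edge, using the observation that almost every vertex along the way has only one possible outgoing supply edge (for items) or one class of outgoing demand edges (for agents), so the cycle's structure is essentially forced once the unique $E_0$ edge and one branching decision are fixed.

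Concretely, let $C$ be a simple $X$-cycle containing the $E_0$ edge $(x_1,x_2)$. The item $x_2$ is owned only by $x_3$, so the next edge of $C$ must be the supply edge $(x_2,x_3)$; the agent $x_3$ has $D_{x_3}=\{(xyz)_1\mid (x,y,z)\in T\}$, so $C$ continues with a demand edge in $E_1$, say $(x_3,(xyz)_1)$. The item $(xyz)_1$ is owned only by $(xyz)_2$, so the next vertex is $(xyz)_2$, at which point $C$ either (i) uses the demand edge $((xyz)_2,(xyz)_7)\in E_6$, making $C$ a small $X$-cycle, or (ii) uses the demand edge $((xyz)_2,(xyz)_3)\in E_2$, making $C$ a large $X$-cycle.

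In case (i), the item $(xyz)_7$ is owned only by $z_1$, so $C$ proceeds to $z_1$; the agent $z_1$ has $D_{z_1}=\{z_2\}$, so $C$ takes the unique edge $(z_1,z_2)\in E_7$; the item $z_2$ is owned only by $x_1$, and the cycle closes. This gives exactly four demand edges, one each from $E_0,E_1,E_6,E_7$, as claimed. In case (ii), the item $(xyz)_3$ is owned only by $(xyz)_4$, which has $D_{(xyz)_4}=\{y_1\}$, so $C$ uses the unique edge $((xyz)_4,y_1)\in E_3$; the item $y_1$ is owned only by $y_2$, which has $D_{y_2}=\{(x'yz')_5\mid (x',y,z')\in T\}$, so $C$ uses an $E_4$ edge $(y_2,(x'yz')_5)$; the item $(x'yz')_5$ is owned only by $(x'yz')_6$, which has $D_{(x'yz')_6}=\{(x'yz')_7\}$, so $C$ uses the unique edge $((x'yz')_6,(x'yz')_7)\in E_5$; the item $(x'yz')_7$ is owned only by $z'_1$, which has $D_{z'_1}=\{z'_2\}$, giving an $E_7$ edge; and $z'_2$ is owned by $x_1$, closing the cycle. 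This yields exactly seven demand edges, one each from $E_0,E_1,E_2,E_3,E_4,E_5,E_7$.

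The main subtlety will be ruling out that $C$ might revisit a branching vertex (such as $(xyz)_2$, $y_2$, or $x_3$) to pick up additional edges from the listed subsets. The argument here is simplicity of $C$ combined with the fact that the out-going demand edges from $x_3$, $y_2$, and $(xyz)_2$ lead, via forced supply edges, back into components that can only terminate by returning to $x_1$ through an $E_0$ edge; since $C$ is simple, no second $E_0$ edge is available, so no re-entry into these branches can occur. This gives the claimed exact composition for both small and large simple $X$-cycles.
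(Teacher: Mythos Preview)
Your approach is essentially the same step-by-step tracing as the paper's proof. One factual slip: you write that ``the item $z_2$ is owned only by $x_1$'' (and similarly $z'_2$), but in the construction every agent $x_1$ for $x\in X$ has $S_{x_1}=\{z_2\mid z\in Z\}$, so each $z_2$ has supply edges to \emph{all} agents $x_1$. The cycle does close at the original $x_1$, but the reason is the one you give in your final paragraph (and the paper gives explicitly): from $z_2$ one goes to some $x'_1$, the only outgoing demand edge there is $(x'_1,x'_2)\in E_0$, and since $C$ is simple this must coincide with the unique $E_0$ edge already used, forcing $x'=x$. So your conclusion survives, but the tracing step as written is wrong and should be replaced by the simplicity argument. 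A second minor point: the lemma applies to cycles appearing anywhere in an execution, so flipped edges could in principle create new outgoing supply edges (e.g., $((xyz)_7,(xyz)_6)$ after an $E_5$ edge is used). The paper dispatches these in footnotes by observing they lead to dead ends; your write-up is silent on this, though it does not affect the final count.
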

	\begin{proof}
		Let $C$ be a simple $X$-cycle which includes the edge $(x_1,x_2)\in E_0$. Observe that $C$ includes one {edge }from each of the sets $E_0,E_1$ and $E_7$. This is due to the fact that $C$ has exactly one edge from $E_0$ and the only path in the graph that includes an edge from $E_0$ has an edge from $E_1$ and an edge from $E_7$. Furthermore, $C$ cannot include more than one edge from $E_7$ since each such edge has to be followed by an edge from $E_0$. For similar reasons $C$ cannot include more than a single edge from $E_1$.
		
		The observation that $C$ visits exactly a single edge from the sets $E_0,E_1$ and $E_7$ implies that $C= (( x'  y' z')_7, z'_1, z'_2,x_1,x_2,x_3,(xy z)_1,(xy z)_2) \circ p'$. 
		If $C$ is a small cycle then we know it includes an edge from $E_6$ since only edges from $E_7$ can follow an edge from $E_6$ \footnote{Note that in principle an edge from $E_5$ can be flipped an then follow an edge from $E_6$ however in this case after taking the edge a node without any outgoing edges will be reached.} we get that in this case $$C= (( x'  y'  z')_7, z'_1, z'_2,x_1,x_2,x_3,(xy z)_1,(xy z)_2,(x y z)_7)$$ and $(x y z) = ( x'y'z')$ which completes the proof of the first statement. 
		
		Consider the case that $C$ is a large $X$-cycle. We will show that the path $p'$ includes exactly one edge from each of the sets $E_2,E_3,E_4,E_5$. Since $p'$ starts from $(xyz)_2$ and does not include any edges from $E_6$ it has to continue to an edge from $E_2$ the only {demand} edge that follows an edge from $E_2$ is an edge from $E_3$. At this point the path reaches some node $y_1$, the next demand edge that the path must take is from $E_4$\footnote{In principle the path may take an edge from $E_3$ that was flipped to get from $y'_1$ to some node $(xyz)_4$ however since in this case $(xyz)_4$ has no outgoing edges that path will end at $(xyz)_4$.} Finally, after taking an edge from $E_4$ the path will have to take an edge from $E_5$ and reach $(xyz)_7$. From a similar reason as we described for small $X$-cycles at this point the path will end. Therefore we conclude that any simple large $X$-cycle includes exactly $7$ demand edges: one edge from each of the subsets $E_0,E_1,E_2,E_3,E_4,E_5$ and $E_7$. 
	\end{proof}
	
	\begin{lemma}
		Any $Y$-cycle includes exactly $4$ demand edges: one edge from each of the subsets $E_2,E_3,E_4,E_5$. In addition it includes one flipped edge $((xyz)_7,(xyz)_2)$ such that $((xyz)_2,(xyz)_7)\in E_6$. 
	\end{lemma}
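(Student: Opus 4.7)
The plan is to show the rigid structure claimed via two reductions: (i) rule out every possible demand edge in a $Y$-cycle $C$ except those in $E_2$, $E_3$, $E_4$, $E_5$, and (ii) pick any such edge and trace the cycle forward to show it is uniquely forced. The style follows the preceding lemma for $X$-cycles, which relied on in/out-degree bookkeeping at the low-degree nodes of the construction.

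For (i), edges in $E_0$ are forbidden by definition. I would rule out each of $E_1,E_6,E_7$ by tracing through nodes with very limited in- and out-degrees. For $E_1$: an edge $(x_3,(xyz)_1)\in C$ forces $x_3$ and then $x_2$ into $C$ (via the unique supply edge $(x_2,x_3)$), but the only edge entering $x_2$ in any reachable graph is the forbidden $E_0$ edge $(x_1,x_2)$ (flipping $E_0$ creates an outgoing edge from $x_2$, which does not help). For $E_7$: an edge $(z_1,z_2)\in C$ forces $z_2\in C$, whose only available outgoing edges are the supply edges $(z_2,x_1)$, each leading to an $x_1$ with no usable outgoing edge; a flipped $E_7$ edge cannot coexist with the $E_7$ edge we are assuming is present. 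For $E_6$: the edge $((xyz)_2,(xyz)_7)\in C$ forces $(xyz)_7$ into $C$, with outgoing options being the supply $((xyz)_7,z_1)$ (whose only continuation is the just-ruled-out $E_7$) or a flipped $E_6$ edge (which cannot coexist with $E_6$ itself). A parallel check shows that flipped versions of $E_1,E_2,E_3,E_4,E_5,E_7$ cannot appear in $C$ either, since in each case the only way to enter the tail of the flipped edge is to use the corresponding original demand edge, which would have to simultaneously be present in the current graph. Hence every demand edge of $C$ lies in $E_2\cup E_3\cup E_4\cup E_5$.

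For (ii), pick any $E_2$ edge $((xyz)_2,(xyz)_3)\in C$. The only way to enter $(xyz)_2$ is via the supply $((xyz)_1,(xyz)_2)$ or the flipped $E_6$ edge $((xyz)_7,(xyz)_2)$; the supply option is excluded because it would drag $(xyz)_1$ into $C$ and force an $E_1$ edge, already ruled out. Thus the flipped $E_6$ edge for $(x,y,z)$ lies in $C$. Tracing forward, the unique outgoing edge at each successive node yields supply to $(xyz)_4$, an $E_3$ edge to $y_1$, supply to $y_2$, an $E_4$ edge to $(x'y'z')_5$ with $y'=y$, supply to $(x'y'z')_6$, and an $E_5$ edge to $(x'y'z')_7$. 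The main obstacle is showing $(x',y',z')=(x,y,z)$, which is what prevents the cycle from stringing together multiple ``triple-blocks.'' The closing edge from $(x'y'z')_7$ must again be a flipped $E_6$ edge (the supply to $z'_1$ dead-ends as in case (i)); if $(x',y',z')\neq (x,y,z)$, then continuing through $(x'y'z')_2$ would force another $E_3$ edge, terminating at the same item $y_1$ (since every triple visited shares the middle coordinate $y$), contradicting simplicity of the cycle. Hence $(x',y',z')=(x,y,z)$, the cycle closes, and $C$ contains exactly one edge from each of $E_2,E_3,E_4,E_5$ together with the flipped edge $((xyz)_7,(xyz)_2)$, as claimed.
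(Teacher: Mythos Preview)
Your approach is essentially the same as the paper's---both exhaust the limited in/out options at each node of the construction---but the paper organizes it more economically: it first \emph{prunes} the graph by deleting $E_0$ and then iteratively removing every edge that becomes a dead end (this kills $E_1$, $E_7$, and all supply edges touching the $x$- and $z$-gadgets in one sweep), and then traces forward from $y_2$ rather than from an $E_2$ edge. Starting at $y_2$ has the advantage that the triple $(x,y,z)$ is fixed by the very first $E_4$ step, so the forced path returns to $y_1$ after exactly one block and the paper never needs your closing argument about revisiting $y_1$.

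Two small issues in your writeup. First, your blanket justification for excluding flipped $E_3$ and flipped $E_5$ (``the only way to enter the tail of the flipped edge is to use the corresponding original demand edge'') is not literally correct: the item $y_1$ is demanded by every $(x'yz')_4$ with the same middle coordinate, and $(xyz)_7$ is demanded by $(xyz)_2$ as well as $(xyz)_6$. The conclusion still holds, but for a different reason---in both cases the \emph{head} of the flipped edge is an agent whose unique demand has just been consumed, so the path dead-ends there. Second, you begin part~(ii) by picking an $E_2$ edge of $C$ without first arguing one exists; once part~(i) is in place this is immediate (any edge in $E_3,E_4,E_5$ forces an $E_2$ edge by tracing back through the unique supply edges), but it should be stated.
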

	\begin{proof}
		We will show that any cycle that does not cross an edge $(x_1,x_2)\in E_0$  uses exactly 4 demand edges, one from each of the sets $E_2,E_3,E_4,E_5$, and uses one flipped edge $((xyz)_7,(xyz)_2)$ such that $((xyz)_2,(xyz)_7)\in E_6$. To prove this we will prune the graph sequentially. First, we remove all edges $(x_1,x_2)\in E_0$. This implies that we can also remove all edges $(x_2,x_3)$ such that $x\in X$ as the only incoming edge to $x_3$ is from $x_2$. Similarly, we can remove all edges $(z_2,x_1)$ such that $z\in Z$ and $x\in X$ since the only outgoing edges from $z_2$ is to $x_1$ such that $x \in X$. Now, we can also remove all edges $(z_1,z_2)\in E_7$ since the only outgoing edge from $z_1$ is to $z_2$. From the same reasons we can remove all edges $(x_3,(xyz)_1)$ for $x\in X$ and $(x,y,z)\in T$, {followed by removing $((xyz)_1,(xyz)_2)$ for $(x,y,z)\in T$ } and removing all the edges $((xyz)_7,z_1)$ such that $(x,y,z)\in T$ and $z\in Z$. After these pruning steps our graph is composed from the components in Figure \ref{fig:xyz-component} such that components for $(x,y,z)\in T$ and $(x',y,z')\in T$ are connected to one another by nodes $y_1$ and $y_2$ such that $y\in Y$.

		\begin{figure}[htbp!]
			\centering
			\includegraphics[width=0.5\textwidth]{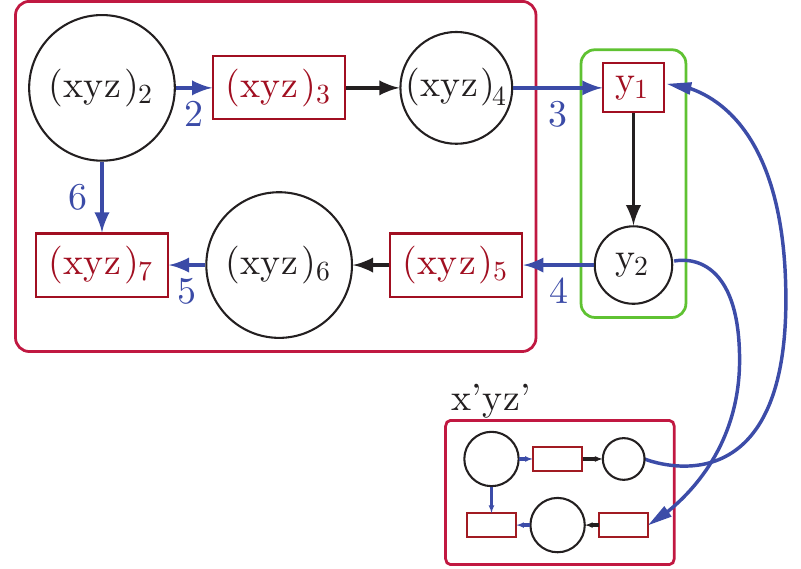}
			\caption{The trading graph after we prune it. The
				$\{(xyz)_i:2\le i \le 7\}$ components are connected only through nodes $y_1$ and $y_2$.}
			\label{fig:xyz-component}
		\end{figure}
		Observe that in the remaining graph there is no cycles that does not use the edge $(y_1,y_2)$ since any component for $(x,y,z)\in T$ is connected only to another component for $(x',y,z')\in T$, through $y_1,y_2$. Let $C$ be a cycle which uses the edge $(y_1,y_2)$, notice that any path that starts from $y_2$ has to first take an edge $(y_2,(xyz)_5)\in E_4$ then it has no choice but to take an edge $((xyz)_5,(xyz)_6)$ and then take an edge $((xyz)_6,(xyz)_7) \in E_5$. Once the path reached $(xyz)_7$ the only option to continue is to use the flipped edge $((xyz)_7,(xyz)_2)$ otherwise the path reaches a dead end. From this point it has to continue through edges $((xyz)_2,(xyz)_3)\in E_2$ ,$((xyz)_3,(xyz)_4)$ and $((xyz)_4,y_1)\in E_3$ and close a cycle 
		by reaching back to $y_1$.
	\end{proof}

\subsection{Hardness of Approximation}
We will use the same reduction as in the proof of Theorem \ref{thm:nphard} to show the hardness of computing an approximate execution.
\begin{theorem}\label{THM:no-ptas}
	There exists some constant $c>1$ such that there is no polynomial time $c$-approximation algorithm for computing an optimal execution, unless $P=NP$.
\end{theorem}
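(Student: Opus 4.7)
The plan is to reuse the polynomial reduction from the proof of Theorem \ref{thm:nphard} and argue that it is gap-preserving. I would strengthen Proposition \ref{perfect-matching-8n} from an \emph{iff} for the value $8n$ to a quantitative bound: if the maximum 3D-matching in $(X,Y,Z,T)$ has size $k^{*}$, then every execution $r$ on the constructed trading graph $G$ satisfies $|r| \le 7n + k^{*}$. Combined with the APX-hardness of 3D-matching for instances of bounded occurrence (Kann~\cite{kann1991maximum}, Petrank~\cite{petrank1994hardness}), which furnishes a constant $\varepsilon \in (0,1)$ such that distinguishing $k^{*}=n$ from $k^{*}\le(1-\varepsilon)n$ is NP-hard, this immediately produces the desired gap: the first case gives optimum $8n$ (by Proposition \ref{perfect-matching-8n}) while the second gives optimum at most $(8-\varepsilon)n$, ruling out any polynomial $c$-approximation with $c = 8/(8-\varepsilon) > 1$ unless $P=NP$.

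To prove the bound $|r| \le 7n + k^{*}$, I would fix an optimal execution $r$, invoke Lemma \ref{lem-simpleCycles} to restrict to executions in which every $X$-cycle is simple, and then use Lemma \ref{lem:cycles-length} to write $|r| = 4n_s + 7n_\ell + 4n_y$, where $n_s, n_\ell, n_y$ denote the numbers of simple small $X$-cycles, simple large $X$-cycles, and $Y$-cycles respectively. I then need three structural inequalities: (a) $n_s + n_\ell \le n$, since every $X$-cycle uses a distinct element of $X$ via the unique demand edge $(x_1, x_2) \in E_0$; (b) $n_y \le n_s$, since each $Y$-cycle consumes exactly one flipped $E_6$-edge, and this edge must have been produced by a prior small $X$-cycle using the $E_6$-edge of the same triplet (distinct $Y$-cycles corresponding to distinct triplets by their triplet-specific items such as $(xyz)_3$); and, crucially, (c) $n_y \le k^{*}$, because the triplets associated with the $n_y$ $Y$-cycles have pairwise distinct $x$ and $z$ coordinates (inherited from the enabling small $X$-cycles by (a)) and pairwise distinct $y$ coordinates (each $Y$-cycle occupies the unique agent $y_2$ of its triplet), so they constitute a valid 3D-matching in $T$. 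A short two-case arithmetic argument splitting on whether $n_s \le k^{*}$ or $n_s > k^{*}$ then yields $|r| \le 7n + k^{*}$; the first case uses (a) and (b) to bound $|r| \le 8n_s + 7n_\ell = n_s + 7(n_s+n_\ell) \le k^{*}+7n$, while the second case uses (a) and (c) to bound $|r| \le 4n_s + 7(n-n_s) + 4k^{*} = 7n - 3n_s + 4k^{*} \le 7n + k^{*}$.

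I expect the main obstacle to be the careful verification of inequality (c). The subtlety is that for each $Y$-cycle one must reconcile three \emph{a priori} independent triplet-identifications: the triplet carrying the flipped $E_6$-edge used to enter the cycle, the triplet whose $E_4$-edge leaves the shared agent $y_2$, and the triplet of the small $X$-cycle that originally produced the flipped edge. Lemma \ref{lem:cycles-length} already shows that the first two coincide along any single $Y$-cycle, so once the identification with the enabling small $X$-cycle is made explicit, the distinctness of $y_2$ across $Y$-cycles translates into distinctness of the $y$-coordinate of the associated triplets, and together with (a) this yields a 3D-matching of size $n_y$. Once (c) is in hand the remainder of the proof is purely arithmetic, and the inapproximability constant $c = 8/(8-\varepsilon)$ is inherited directly from the known gap for bounded-occurrence 3D-matching.
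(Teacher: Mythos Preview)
Your proposal is correct and follows essentially the same approach as the paper: the paper's Lemma~\ref{Lemma:at-most-k} establishes exactly your bound $|r|\le 7n+k^*$ via the same structural inequalities (a), (b), (c), and then plugs it into the Petrank gap just as you do. The only cosmetic difference is that your two-case split is unnecessary---the paper simply chains $7n-3n_s+4n_y\le 7n+n_y\le 7n+k^*$ using (b) then (c) directly.
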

\begin{proof}
	It is known (\cite{petrank1994hardness}) that for the problem of 3D-matching it is NP-hard to distinguish between an instance that has a maximum matching of size $n$ and an instance that has a maximum matching of size $n(1-\eps)$ for some small constant $\eps$ . For $\alpha>1$, consider an $\alpha$-approximation for computing an optimal execution. If the 3D-matching instance admits a perfect matching then by Proposition \ref{perfect-matching-8n} the optimal execution in the trading graph $G$ is $r_o$ such that $|r_o|=8n$ and hence the approximation algorithm should return an execution of size at least $\frac{8n}{\alpha}$.
	On the other hand, in case that the maximum size of the matching is $(1-\eps)n$ by Lemma \ref{Lemma:at-most-k} we have that the optimal execution is $r_o$ such that $|r_o|\leq 7n+(1-\eps)n =8n-\eps n$. Notice that it has to be the case that $8n-\eps n \ge\frac{8n}{\alpha}$ as otherwise our approximation algorithm could be used to separate between the case that the size of the maximum matching is $n$ and the case that the size of the maximum matching is $(1-\eps)n$. The fact that $8n-\eps n \geq \frac{8n}{\alpha}$ implies that we cannot get approximation ratio $\alpha < 1+\frac{\eps}{8-\eps} $.
\end{proof}

\begin{lemma}\label{Lemma:at-most-k} If the maximal 3D-matching is of size $d$ then for any execution $r$ for the trading graph $G$, we have that $|r|\leq 7n+d$. 
\end{lemma}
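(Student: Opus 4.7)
The plan is to reduce the statement to a small linear program in the counts of cycle types and then use the 3D-matching bound as the binding constraint. By Lemma \ref{lem-simpleCycles}, I may assume without loss of generality that every $X$-cycle in $r$ is simple, since any execution can be transformed into one of the same size in which this holds. Let $k_L$, $k_S$, and $k_Y$ denote the numbers of large $X$-cycles, small $X$-cycles, and $Y$-cycles used in $r$, respectively. By Lemma \ref{lem:cycles-length} these are the only cycle types that can appear, and they contribute $7$, $4$, and $4$ exchanges respectively, so $|r| = 7 k_L + 4 k_S + 4 k_Y$, reducing the problem to bounding this quantity.

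Next I would identify the three structural inequalities the triple $(k_L,k_S,k_Y)$ must satisfy. First, every $X$-cycle (large or small) consumes a distinct demand edge of $E_0$, and each such edge can be used at most once, so $k_L + k_S \le n$. Second, by Lemma \ref{lem:cycles-length} each $Y$-cycle uses a flipped edge $((xyz)_7,(xyz)_2)$ which is created only by the small $X$-cycle that handles the very same triple $(x,y,z) \in T$; since distinct $Y$-cycles use distinct flipped edges of this form, $k_Y \le k_S$. The third and crucial bound is $k_Y \le d$: I claim that the set of triples $(x,y,z) \in T$ whose $Y$-cycle is executed forms a valid 3D-matching. To see this I would argue that distinct $Y$-cycles use distinct $(y_1,y_2)$ demand edges (so their $y$-coordinates differ), while their enabling small $X$-cycles use distinct edges of $E_0$ and $E_7$ (so their $x$- and $z$-coordinates also differ). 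Since no valid 3D-matching exceeds size $d$, this yields $k_Y \le d$.

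Finally, combining the constraints is routine: substituting $k_L \le n - k_S$ gives $|r| \le 7n - 3 k_S + 4 k_Y$, and $k_Y \le k_S$ together with $k_Y \le d$ yields $-3 k_S + 4 k_Y \le k_Y \le d$, hence $|r| \le 7n + d$. I expect the main obstacle to be the third inequality, namely verifying carefully that the triples associated with executed $Y$-cycles satisfy the pairwise disjointness conditions of a 3D-matching; this is where one must trace back from each $Y$-cycle through its enabling small $X$-cycle and appeal to the fact that demand edges of $E_0$, $E_7$, and the $(y_1,y_2)$ edges can each be used at most once in an execution.
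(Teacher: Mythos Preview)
Your argument is correct and follows essentially the same route as the paper: both reduce to the count $|r|=7k_L+4k_S+4k_Y$ with the constraints $k_L+k_S\le n$, $k_Y\le k_S$, and $k_Y\le d$, and finish with the identical arithmetic $|r|\le 7n-3k_S+4k_Y\le 7n+k_Y\le 7n+d$. The only difference is cosmetic: the paper derives $k_Y\le d$ by invoking (a size-$t$ generalization of) Proposition~\ref{perfect-matching-8n}, whereas you unpack that step explicitly by arguing the $Y$-cycle triples form a 3D-matching---which is exactly the content of that proposition's reverse direction. (Minor nit: $(y_1,y_2)$ is a supply edge, not a demand edge, but the once-only usage argument still goes through.)
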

\begin{proof}
	In the proof of Proposition \ref{perfect-matching-8n} we showed how to construct a perfect matching for an execution that has $n$ small $X$-cycles and $n$ $Y$ cycles and on the other direction showed that if there exists a perfect matching then there exists an execution that has $n$ small $X$-cycles and $n$ $Y$ cycles. Similarly, one can show that there exists a matching of size at $t$ if and only if there exists an execution that has at least $t$ small $X$-cycles and $t$ $Y$ cycles. This implies that if the size of the maximal 3D-matching is $d$ then, there is no execution that includes $d_X$ small $X$ cycles and $d_Y$ $Y$ cycles such that $d_X> d$ and $d_Y > d$. Furthermore, observe that $d_X \geq d_Y$ as a $Y$-cycle with an edge $((xyz)_7,(xyz)_2)$ can only be executed after the $X$-cycle including the edge $((xyz)_2,(xyz)_7)\in E_6$ was executed. In particular this implies that $d_Y\leq d$. Thus, we have that overall the total number of exchanges in any execution is : $|r| \leq 4d_X+4d_Y + 7(n-d_X) = 7n-3d_X+4d_Y \leq 7n +d_Y \leq 7n+d$.
	
\end{proof}

	\section{On the Approximation Ratio of the Greedy Algorithm} \label{sec-greedy}
We consider the simple greedy dynamic algorithm that at each step computes the optimal static execution, executes it and updates the graph accordingly. We show that in the worst case, this algorithm cannot guarantee a better approximation ratio than the algorithm that computes the optimal static execution. We show an instance 
$\mathcal G$, in which after we execute the optimal static execution we can no longer do any exchanges. 
However, the optimal dynamic execution can perform considerably more exchanges. We construct a 
family of instances $\mathcal G_d$ parameterized by $d\geq 2$, such that as $d$ goes to infinity the 
approximation ratio of the greedy algorithm approaches $l$.
\begin{theorem}
The greedy algorithm cannot guarantee an approximation better than $l$.
\end{theorem}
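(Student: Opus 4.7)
The plan is to exhibit, for each $d \geq 2$, an instance $\mathcal{G}_d$ on which the greedy algorithm's choice of optimal static execution at step~$1$ leaves a cycle-free trading graph, so that greedy terminates with welfare equal to the welfare of a single optimal static execution, while the optimal dynamic execution sustains $l$ rotations of an internal main cycle and therefore achieves approximately $l$ times as many exchanges. My starting point is the $(l+1)$-agent symmetric instance of the Claim following Proposition~\ref{prop:static-l-approx}, which already realizes a factor-$l$ gap between $|r_s|$ and $|r_o|$. The essential obstacle there is that greedy does not stop after one round: by symmetry, the trading graph after any single rotation again admits a rotation, and greedy recovers the dynamic optimum.

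To force greedy to halt, I would modify the instance parametrically in $d$ so that (a)~the rotational symmetry of the main component is broken---for instance by removing one specifically chosen demand edge per main agent---so that the post-rotation residual graph is acyclic, and (b)~$d$ auxiliary gadgets are attached that collectively boost the welfare of the max static so that greedy is provably forced to pick the symmetry-breaking rotation over any iterable alternative. The proof would then proceed in three steps: first, show that in $\mathcal{G}_d$ the unique max static execution executes the specific main-cycle rotation together with all $d$ auxiliaries and leaves the trading graph acyclic afterward, so greedy's welfare is the single-round value; second, exhibit an alternative dynamic execution that ignores the auxiliaries and performs $l$ rounds of the main cycle for welfare roughly $l(l+1)$; third, compute the ratio of the two welfares as a function of $l$ and $d$ and verify that it approaches $l$ as $d \to \infty$, with the dependence between $l$ and $d$ chosen so that the auxiliary contribution does not dominate either side.

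The main obstacle I anticipate is reconciling (a) and (b): the symmetry-breaking modification of the main component must simultaneously leave it capable of $l$ full rotations in an alternative (non-greedy) execution and render the post-greedy residual acyclic, and the auxiliaries must be tuned so that their presence strictly prefers the symmetry-breaking static over any iterable alternative. I expect this to require a carefully choreographed asymmetry---each main agent's demand being a cyclic shift of those in the original Claim except for one ``missing'' demand that is filled precisely by an auxiliary gadget---and the verification that the residual graph contains no cycles, combined with a matching verification that the pre-greedy graph does admit $l$ full rotations in the alternative execution, will be an explicit combinatorial case analysis tying the two verifications together.
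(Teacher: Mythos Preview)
Your plan has a genuine arithmetic gap that the acknowledged obstacle conceals. With a \emph{single} main component of size $\Theta(l)$, the optimal dynamic welfare is at most $\Theta(l^2)$, independent of $d$. Meanwhile, any auxiliary gadget that actually influences greedy's first step---i.e., that forces greedy to select the symmetry-breaking rotation---must itself participate in the optimal static execution and therefore contributes positively to greedy's welfare. With $d$ such gadgets, greedy's welfare is at least $l + \Omega(d)$, so as $d \to \infty$ the ratio \emph{decreases} rather than approaching $l$. If instead the auxiliaries do not participate in greedy's static optimum, they cannot affect which static execution is optimal and hence cannot break the tie you need broken. Either way, a single rotating core plus auxiliary gadgets cannot push the ratio past roughly $(l+1)/2$.

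The paper's construction sidesteps this by making the rotating structure itself scale with $d$. It builds $d$ layers $N_1,\dots,N_d$ and $M_1,\dots,M_d$, each of size $l$. Agents in layers $2,\dots,d$ each demand all other items in their \emph{own} layer plus one item in the next layer; agents in layer~$1$ demand only a single layer-$2$ item. The unique optimal static execution is one giant cycle $C_{\mathrm{all}}$ threading through all $dl$ agents across layers. After $C_{\mathrm{all}}$, every layer-$y$ item sits with a layer-$(y{-}1)$ agent; since layer-$1$ agents have no remaining demand, they freeze the layer-$2$ items, which freezes layer~$2$, and an induction shows the residual graph is acyclic. But the alternative dynamic execution first performs $l{-}1$ internal rotations inside each of layers $2,\dots,d$ (contributing $(d{-}1)l(l{-}1)$ exchanges) and only then runs the cross-layer cycle. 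Optimal is $l + (d{-}1)l^2$, greedy is $dl$, and the ratio tends to $l$ as $d \to \infty$. The key idea you are missing is that the ``many rotations'' must live in $d{-}1$ parallel components, not one, and greedy's single global cycle must simultaneously disable all of them by shifting items across layers into the hands of agents with exhausted demand.
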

\begin{proof}
Consider the following instance $\mathcal G_d$ for $d\geq 2$ illustrated in Figure \ref{fig:Greedy}. The set of agents includes $d$ subsets of cardinality $l$: $N_1\cup N_2 \cup \ldots \cup N_d$, such that for any $1 \leq y \leq d$, $N_y=\{ i^1_y,i^2_y,\ldots, i^l_y \}$. The set of items includes $d$ subsets of cardinality $l$: $M_1 \cup M_2 \cup \ldots \cup M_{d}$, such that for any $1 \leq y \leq d$,  $M_y = \{ j^1_y,j^2_y,\ldots, j^l_y \}$. The demand set and supply set of the agents is defined as follows: 
\begin{itemize}
	\item For $i^k_1 \in N_1$, $S_{i^k_1} =\{j^k_1\}$ and $D_{i^k_1} = \{ j^k_2 \}$.
	\item For $2\le y \le d-1$ and $i^k_y \in N_y$, $S_{i^k_y} =\{j^k_y\}$ and $D_{i^k_y} = M_y - \{ j^k_y \}\cup \{ j^k_{y+1}\}$.
	\item For $i^k_d \in N_d$, $S_{i^k_d} =\{j^k_d\}$, $D_{i^k_d} ={M_d - \{ j^k_d \} \cup} \{ j^{k +1 (mod~l)}_1 \} $.
\end{itemize}

\begin{figure}[htbp!]
	\centering
	\includegraphics[width=12cm]{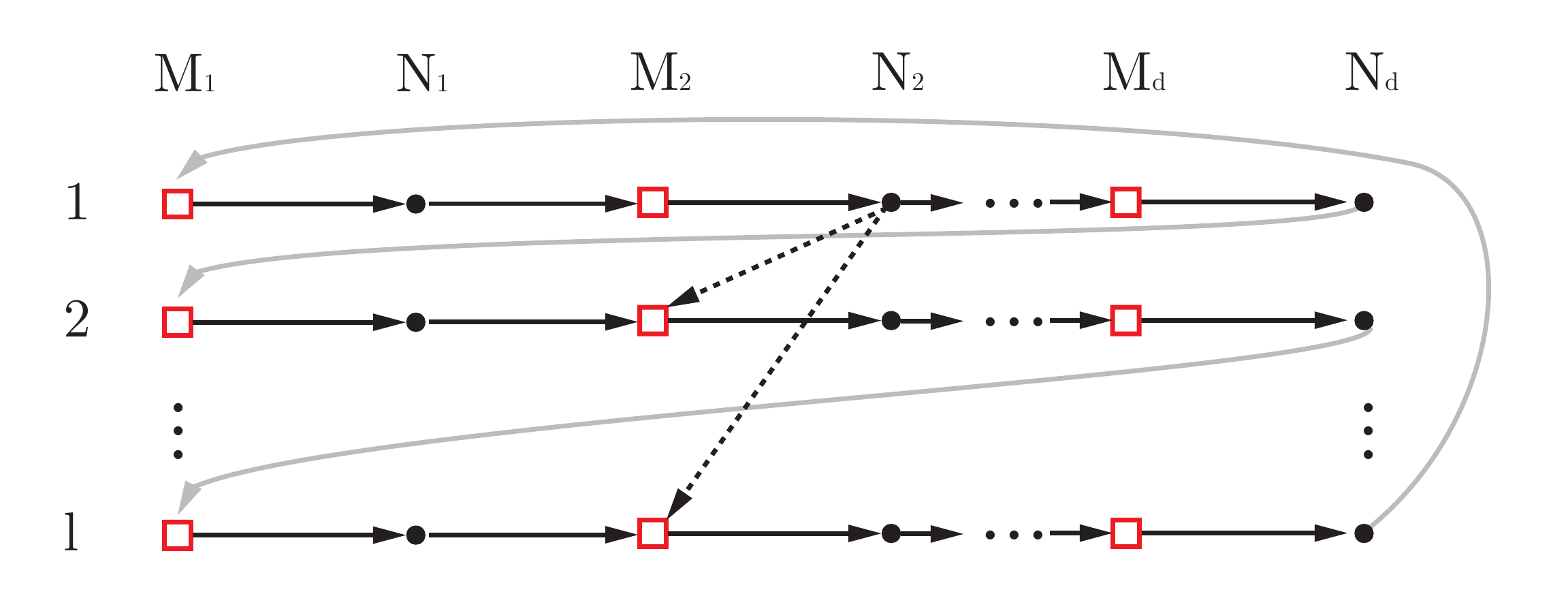}
	\caption{The trading graph for an instance showing that a greedy algorithm cannot preform better than the optimal static algorithm. For simplicity the demand edges of each agent ${i^k_y},~{2\le y \le d}$, for all the items in $M_y-\{j^k_y\}$ except for agent $i^1_2$ (the dashed edges) are omitted.}
	\label{fig:Greedy}
\end{figure}
Notice that the graph has a giant cycle in which all the agents participate:
\begin{align*}
C_{all} = (j^1_1,i^1_1,j^1_2,i^1_2,\ldots,i^1_d,j^2_1,i^2_1,\ldots,i^2_d\ldots,j^l_1,i^l_1,\ldots,i^l_d,j^1_1)
\end{align*}
It is easy to see that the optimal static execution will execute the cycle $C_{all}$. We show that after this cycle is executed there are no other exchanges that can be executed. Thus, the total number of exchanges that the greedy algorithm performs
is $l \cdot d$.

\begin{claim}\label{claim:C-all-in-G}
For any $d\geq 2$, the greedy algorithm in the instance $\mathcal G_d$ cannot perform any more exchanges after executing the cycle $C_{all}$.
\end{claim} 
\begin{proof}
We prove by induction on $1\le y \le d$, showing that after we execute $C_{all}$ all the agents in $N_y$ 
	cannot take part in another exchange:
	\begin{description}
		\item[Base case:] After we execute $C_{all}$ the agents in $N_1$ own the only item they demand and hence they cannot participate in any further exchange. 
		\item[Induction hypothesis:] For $2\le q\le d$, Assume that after $C_{all}$ was executed the agents in $N_{y-1}$ cannot take part in 
		other exchange, we will show that the agents in $N_y$ cannot take part in 
		other exchange. 
		\item[Induction step:] After $C_{all}$ was executed all the items in the set $M_y$ are owned by agents in the set $N_{y-1}$. By the induction hypothesis the agents in $N_{y-1}$ cannot take part in any more 
		exchanges. Thus,  the items in $M_y$ cannot participate in any exchange. As all the items that agents in $N_y$ demand are in $M_y$ this implies that the agents in $N_y$ also cannot participate in any further exchange.
	\end{description}
\end{proof}

Now consider the optimal execution. The optimal execution first for each component $N_y,M_y$ for $2 \leq y \leq d$ the execution executes $l-1$ cycles in which all the agents in $N_y$ and the items in $M_y$ participate. The total number of exchanges in each component is $l \cdot (l-1)$ and in all the components together is $(d-1)\cdot l \cdot (l-1)$. Now consider the trading graph after executing all of these exchanges.  We claim that the in-degree and out-degree of each 
node in the graph is exactly 1. To see why this is the case, first observe that each agent owns exactly a single item and each item is owned by a single agent. Furthermore, for $1\leq y \leq d$ any agent $i \in N_y$ only demands item $j \in M_{(y+1) (mod~d)}$ as this is the only item that he has not received yet. Finally, we notice that each item is demanded by a single agent:
 for $1 \leq d$ any item  $j^t_y \in M_y$, is demanded by agent $i^t_{(y-1)(mod~d)}$. The fact that the in-degree and out-degree o	f all the nodes is $1$ implies that all the edges of the graph can be covered by cycles. Thus, the optimal execution fulfills all the agents' demands and hence executes $l + (d-1)\cdot l^2$ exchanges.
 
 Recall that a greedy algorithm cannot perform anymore exchanges after executing the optimal static execution. Thus, the greedy algorithm performs $ d\cdot l$ exchanges and on this instance achieves an approximation ratio of $ \frac{l+ (d-1)\cdot l^2}{d\cdot l}=\frac{1+(d-1)\cdot l}{d} \rightarrow_{d\rightarrow \infty} l$.
\end{proof}
}

\end{document}